\renewcommand{\tocsection}[3]{%
  \indentlabel{\@ifnotempty{#2}{\bfseries\ignorespaces#1 #2\quad}}\bfseries#3}
\renewcommand{\tocsubsection}[3]{%
  \indentlabel{\@ifnotempty{#2}{\ignorespaces#1 #2\quad}}#3}
\newcommand\@dotsep{4.5}
\def\@tocline#1#2#3#4#5#6#7{\relax
  \ifnum #1>\c@tocdepth % then omit
  \else
    \par \addpenalty\@secpenalty\addvspace{#2}%
    \begingroup \hyphenpenalty\@M
    \@ifempty{#4}{%
      \@tempdima\csname r@tocindent\number#1\endcsname\relax
    }{%
      \@tempdima#4\relax
    }%
    \parindent\z@ \leftskip#3\relax \advance\leftskip\@tempdima\relax
    \rightskip\@pnumwidth plus1em \parfillskip-\@pnumwidth
    #5\leavevmode\hskip-\@tempdima{#6}\nobreak
    \leaders\hbox{$\m@th\mkern \@dotsep mu\hbox{.}\mkern \@dotsep mu$}\hfill
    \nobreak
    \hbox to\@pnumwidth{\@tocpagenum{\ifnum#1=1\bfseries\fi#7}}\par% <-- \bfseries for \section page
    \nobreak
    \endgroup
  \fi}
\renewcommand\csname r@tocindent0\endcsname{0pt}
\def\l@subsection{\@tocline{2}{0pt}{2.5pc}{5pc}{}}
\patchcmd{\@setaddresses}{\indent}{\noindent}{}{}
\patchcmd{\@setaddresses}{\indent}{\noindent}{}{}
\patchcmd{\@setaddresses}{\indent}{\noindent}{}{}
\patchcmd{\@setaddresses}{\indent}{\noindent}{}{}
\newcommand{\mylabel}[2]{#2\def\@currentlabel{#2}\label{#1}}
\newcommand{\C}{\mathcal{C}}
\newcommand{\F}{\mathbb{F}}
\newcommand{\LL}{\mathbb L}
\newcommand{\N}{\mathbb{N}}
\newcommand{\Z}{\mathbb{Z}}
\DeclareMathOperator{\GL}{GL}
\DeclareMathOperator{\Gal}{Gal}
\DeclareMathOperator{\rk}{rk}
\DeclareMathOperator{\End}{End}
\DeclareMathOperator{\Tr}{Tr}
\DeclareMathOperator{\dd}{d}
\DeclareMathOperator{\war}{war}
\DeclareMathOperator{\str}{str}
\DeclareMathOperator{\ssl}{sl}
\DeclareMathOperator{\ess}{ess}
\DeclareMathOperator{\Bil}{Bil}
\DeclareMathOperator{\coeff}{coeff}
\DeclareMathOperator{\Sing}{Sing}
\let\mod\relax\DeclareMathOperator{\mod}{\,mod\,}
\theoremstyle{definition}
\newtheorem{theorem}{Theorem}[section]
\newtheorem{proposition}[theorem]{Proposition}
\newtheorem{corollary}[theorem]{Corollary}
\newtheorem{lemma}[theorem]{Lemma}
\newtheorem{conjecture}[theorem]{Conjecture}
\newtheorem{question}[theorem]{Question}
\newtheorem{definition}[theorem]{Definition}
\newtheorem{example}[theorem]{Example}
\newtheorem{remark}[theorem]{Remark}
\definecolor{light-gray}{gray}{0.90}
\newcommand{\xxi}{x_1,\ldots,x_n}
\newcommand{\DOI}[1]{\href{http://doi.org/#1}{\color{purple}{\tiny\tt DOI:#1}}}
\newcommand{\arxiv}[1]{\href{http://arxiv.org/abs/#1}{{\tiny\tt arXiv:#1}}}
\title{Higher-degree symmetric rank-metric codes}
\author[A. Bik ]{Arthur Bik}
\address{Arthur Bik, \textnormal{Institute for Advanced Study, 1 Einstein Drive, Princeton NJ 08540, USA, and Max-Planck-Institute for Mathematics in the Sciences, Inselstraße 22, 04103 Leipzig, Germany}}
\email{mabik@ias.edu}
\author[A. Neri]{Alessandro Neri}
\address{Alessandro Neri, \textnormal{Department of Mathematics: Analysis, Logic and Discrete Mathematics, Ghent University, Krijgslaan 281, 9000
Gent, Belgium,  and Max-Planck-Institute for Mathematics in the Sciences, Inselstraße 22, 04103 Leipzig, Germany}}
\email{alessandro.neri@ugent.be}
\begin{document}
\begin{abstract}
Over fields of characteristic unequal to $2$, we can identify symmetric matrices with homogeneous polynomials of degree $2$. This allows us to view symmetric rank-metric codes as living inside the space of such polynomials. In this paper, we generalize the construction of symmetric Gabidulin codes to polynomials of degree $d>2$ over field of characteristic $0$ or $>d$. To do so, we equip the space of homogeneous polynomials of degree $d\geq 2$ with the metric induced by the essential rank, which is the minimal number of linear forms needed to express a polynomial. We provide bounds on the minimal distance and dimension of the essential-rank metric codes we construct and provide an efficient decoding algorithm. Finally, we show how essential-rank metric codes can be seen as special instances of rank-metric codes and compare our construction to known rank-metric codes with the same parameters.
\end{abstract}

\maketitle

\section{Introduction}

A rank-metric code $\mathcal{C}$ is a linear space of matrices over a field $\F$ in which the rank of any nonzero matrix is bounded from below be a constant $r>0$, called the minimum distance of the code $\mathcal{C}$. The definition of rank-metric codes goes back to Delsarte \cite{delsarte1978bilinearforms} who studied such codes in connection to the theory of association schemes. Since then, rank-metric codes have appeared in a great variety of contexts. For example, they have found applications as error correcting codes in crisscross error correction \cite{roth1991arraycodes} and random network coding \cite{silva2008randomnetworkcoding}. We refer to \cite{sheekeysurvey} for an introduction to the theory of rank-metric codes and their mathematical insights, and to \cite{bartz2022rank} for a more complete list of their applications.

The most prominent class of of rank-metric codes are the Gabidulin codes \cite{delsarte1978bilinearforms,gabidulin1985maxrankdistance}, whose construction we will repeat here for completeness: Suppose that $\LL/\F$ is a cyclic Galois extension, write $n=[\LL:\F]$ for its degree and let $\sigma$ be a generator of $\Gal(\LL/\F)$. Then, by picking a basis $\alpha_1,\ldots,\alpha_n$ of $\LL$ over $\F$, we can identify the space $\End_{\F}(\LL)$ of $\F$-linear endomorphisms of $\LL$ with the space $\F^{n\times n}$ of $n\times n$ matrices over $\F$. Now, every pair of integers $k,\ell\geq0$ with $k+\ell\leq n-1$ defines a Gabidulin code by considering the space of matrices in $\F^{n\times n}$ corresponding to the subspace
\[
\{c_{-k}\sigma^{-k}+\ldots +c_{\ell}\sigma^{\ell}\mid c_{-k},\ldots,c_{\ell}\in\LL\}\subseteq\End_{\F}(\LL).
\] 
This is a code of dimension $n(k+\ell+1)$ of minimum distance $n-(k+\ell)$. Indeed, an element of the kernel of a nonzero endomorphism $c_{-k}\sigma^{-k}+\ldots +c_{\ell}\sigma^{\ell}$ is also an element of the kernel of $\sigma^{k}(c_{-k})\sigma^0(x)+\ldots +\sigma^{k}(c_{\ell})\sigma^{k+\ell}(x)$.  When $\F=\F_q$ is a finite field and $\sigma$ is the Frobenius automorphism, this is a polynomial of degree $\leq q^{k+\ell}$. The number of roots of this polynomial is bounded from above by its degree, and hence the dimension of the kernel of the endomorphism is bounded above by $k+\ell$. A deeper argument based on Artin's Theorem on the independence of characters shows that the kernel of this endomorphism cannot have dimension greater than $k+\ell$ in general; see e.g. \cite{lam1988vandermonde,roth1996tensorcodes}.
\bigskip

The construction of Gabidulin codes has been generalized in several different directions. Among these, in \cite{roth1996tensorcodes} the construction  has been extended from spaces of matrices to spaces of $d$-way tensors for $d\geq 2$, while in \cite{schmidt2015symmetric} the construction has been modified in such a way that the resulting code consists of only symmetric matrices, that is, the code is a \emph{symmetric rank-metric code}; see also \cite{schmidt2020quadratic}. In this direction, de Boer was the first to study symmetric rank-metric codes, providing constructions of codes with large minimum rank \cite{deboer1996}. His results were also motivated by pure application to classical error-correcting codes. The importance of studying symmetric matrices in the context of rank-metric codes lies also in its applications, due to improved efficiency of decoding algorithms \cite{couvreur2022,gabidulin2004symmetric}.
 Over a field $\F$ of characteristic different from $2$, we can identify symmetric matrices with homogeneous polynomials of degree $2$. This allows us to view symmetric rank-metric codes as living inside the space of such polynomials. More precisely, to a degree $2$ homogeneous polynomial
\[
f=\sum_{1\leq i \leq j \leq n}a_{i,j}x_ix_j \in \F[x_1,\ldots, x_n]_2,
\]
we can associate the matrix
\[
C_f=\begin{pmatrix}
2a_{1,1} & a_{1,2} & \cdots & a_{1,n} \\
a_{1,2} & 2a_{2,2} & \ddots & \vdots \\
\vdots &\ddots & \ddots & a_{n-1,n}\\
a_{1,n} & \cdots &a_{n-1,n} &  2a_{n,n}
\end{pmatrix}.
\]
Viceversa, every symmetric matrix gives rise to a degree-$2$ homogeneous polynomial by reversing the map above, if $\mathrm{char}(\F) \neq 2$.
 
Under this correspondence, the rank of the symmetric matrix $C_f$ can be directly computed from $f$, since it holds that
\begin{equation}\label{eq:rank1}
\rk(C_f)=\min \{r\in \Z_{\geq0} \mid f=g(\ell_1,\ldots,\ell_r), \ell_1,\ldots,\ell_r \in \F[x_1,\ldots,x_n]_1, \ g \in \F[y_1,\ldots,y_r]_2\}.
\end{equation}
In other words, the rank of a symmetric matrix $C_f$ is equal to the minimum number of variables that are needed in order to represent $f$, up to  linear changes of variables. 
This is an equivalent framework in which one can study symmetric rank-metric codes.
\medskip

The goal of this paper is to extend the study of symmetric rank-metric codes to higher degrees of symmetry. In order to do so, we equip the space of degree-$d$ polynomials with a metric induced by the \emph{essential rank}. The essential rank of a degree-$d$ polynomial $f \in \F[x_1,\ldots,x_n]_d$ can be defined as a natural generalization of \eqref{eq:rank1} and is given by 
\[
\ess(f)=\min \{r\in \Z_{\geq0} \mid f=g(\ell_1,\ldots,\ell_r),~ \ell_1,\ldots,\ell_r \in \F[x_1,\ldots,x_n]_1,  g \in \F[y_1,\ldots,y_r]_d\}.
\]
Thus, the essential rank of a polynomial is equal to the minimum number of variables that are needed in order to represent it, up to  linear changes of variables.
Within this framework, we study \emph{essential-rank-metric codes} as subspaces of $\F[x_1,\ldots,x_n]_d$. For every $1\le \rho\le n$, we provide a general construction of a family of essential-rank-metric codes $\mathcal C_{\rho}^{n,d}(\alpha)$ having  essential rank lower bounded by $\rho$; see Theorem \ref{thm:minessrank}. This construction can be seen as a generalization of the family of symmetric Gabidulin codes, as we show that for $d=2$ the constructions of $\mathcal C_{\rho}^{n,2}(\alpha)$ and of symmetric Gabidulin code can be identified with each other. Furthermore, we equip this family of codes with a decoding algorithm, capable to correct all the errors of essential rank at most $\lfloor\frac{\rho-1}{2}\rfloor$; see Algorithm \ref{alg1}. We then explore the connection between essential-rank-metric codes and rank-metric codes, using a result of Carlini \cite{carlini2006reducing}. When the characteristic of the field is larger than $d$ (or equals $0$), essential-rank-metric codes can be seen as special families of rank-metric codes via their first catalecticant matrices; see Corollary \ref{cor:embedding}. Our work has two main underlying motivations: on the one hand, we study the mathematical properties of symmetric rank-metric codes of degree $d$, generalizing classical results for $d=2$; on the other hand we aim to motivate the investigation of metrics induced by other notions of rank for homogeneous polynomials, which are typical of contemporary applied algebraic geometry, such as the \emph{strength} and the \emph{slice rank}.
 
\subsection*{Outline}

 In Section \ref{sec:apolarity}, we discuss the properties of the essential rank and its connection to apolarity theory. Afterwards, in Section~\ref{sec:construction}, we give our construction of a class of essential-rank-metric codes. Here we also provide lower bounds on its dimension and minimum distance. In Section~\ref{sec:decoding}, we give a decoding algorithm that efficiently recovers codewords in our codes that are given up to some bounded error. In Section~\ref{sec:ess-to-rank}, we show how our codes can be seen as a special class of rank-metric codes. For $d=2$, we prove that our construction yields symmetric Gabidulin codes and for $d>2$ we compare the parameters of our codes with the bounds inherited from the theory of rank-metric codes. Finally, in Section~\ref{sec:other_ranks}, we list other notions of rank that could potentially result in metrics that yield useful codes.

\subsection*{Acknowledgements}
AB is partially supported by Postdoc.Mobility Fellowship number P400P2\_199196 from the Swiss National Science
Foundation and a grant from the Simons Foundation (816048, LC). AN is supported by   the FWO (Research Foundation Flanders) grant number 12ZZB23N.

\section{Essential rank of homogeneous polynomials}\label{sec:apolarity}

In this section we briefly recap the  theory of the essential rank of polynomials described by Carlini in \cite{carlini2006reducing}.  Note that in that paper the author assumed that $\mathrm{char}(\F)=0$, even though all results also hold when $\mathrm{char}(\F)>d$. Thus, we will make the assumption that $\mathrm{char}(\F) >d$ or $\mathrm{char}(\F)=0$ for this whole section.

Let $S_n(\F)=\F[\xxi]$ and let $T_n(\F)=\F[\partial_1,\ldots,\partial_n]$ be the ring of differential operators acting on $S_n(\F)$. This means that $S_n(\F)$ is endowed with a $T_n(\F)$-module structure acting via differentiation, that is, the action of $T_n(\F)$ on $S_n(\F)$ is defined via 
\[
\partial_i \circ f\coloneqq \frac{\partial}{\partial x_i}f,\qquad \mbox{ for every } f \in S_n(\F).
\]
We will mostly focus on the homogeneous components of these rings, denoted by $S_{n,d}(\F)$ and $T_{n,d}(\F)$, respectively, where $d$ is the degree. By convention, we will always consider the zero polynomial $0$ to be in $S_{n,d}(\F)$ and the zero operator $0$ to be in $T_{n,d}(\F)$. 
If we restrict to homogeneous components, for all $0\leq b \leq a$, we have a surjective $\F$-bilinear map
\[
\begin{array}{rccc}\Phi_{n,a,b}\colon&S_{n,a}(\F) \times T_{n,b}(\F) & \longrightarrow &S_{n,a-b}(\F) \\
& (f,D) & \longmapsto & D\circ f.\end{array}
\]
In the case $a=b=d$, this gives a natural nondegenerate $\F$-bilinear map 
\begin{equation}\label{eq:pairing} 
\begin{array}{rccc}\Phi_{n,d}\colon&S_{n,d}(\F) \times T_{n,d}(\F) & \longrightarrow &\F \\
& (f,D) & \longmapsto & D\circ f.\end{array}
\end{equation}
Clearly, the bilinear form \eqref{eq:pairing} induces a duality. For given subsets $A\subseteq S_{n,d}(\F)$, $B\subseteq T_{n,d}(\F)$, the orthogonal subspaces
\begin{align*}
A^\perp & := \left\{D \in T_{n,d}(\F) \,\middle|\, D\circ f=0 \mbox{ for all } f \in A \right\} \subseteq T_{n,d}(\F), \\
B^\perp & := \left\{f \in S_{n,d}(\F) \,\middle|\, D\circ f=0 \mbox{ for all } D \in B  \right\} \subseteq S_{n,d}(\F)  
\end{align*}
are such that 
\[
\dim_{\F}(\langle A\rangle_{\F})+\dim_{\F}(A^\perp)=\dim_{\F}(S_{n,d}(\F))=\dim_{\F}(T_{n,d}(\F))=\dim_{\F}(\langle B\rangle_{\F})+\dim_{\F}(B^\perp).
\]
Note that $S_{n,1}(\F)$ is just the space of linear polynomials, while  $T_{n,1}(\F)$ is the space of all directional derivatives.  These spaces are naturally isomorphic to $\F^n$, and for a given vector $v=(v_1,\ldots,v_n) \in \F^n$, we will write 
\begin{align*}v(x) &\coloneqq v_1x_1+\ldots+v_nx_n \in S_{n,1}(\F), \\
v(\partial) &\coloneqq v_1\partial_1+\ldots+v_n\partial_n  \in {T}_{n,1}(\F).
\end{align*}

\begin{definition}
Let $f \in S_{n,d}(\F)$.
The \textbf{essential rank} of $f$ is the integer
\[
\ess(f)\coloneqq \min \{ r \in \Z_{\geq0} \mid f\in \F[\ell_1,\ldots,\ell_r]_d, \ell_1,\ldots,\ell_r \in S_{n,1}(\F)\}.
\]
\end{definition}

The essential rank of $f$ is simply the minimum number of variables needed to express $f$ after a linear change of variables.

\begin{lemma}\label{lem:uniqueness}
Let $f \in S_{n,d}(\F)$ be such that $\ess(f)=r$. Let $\ell_1,\ldots,\ell_r$, $h_1,\ldots,h_r \in S_{n,1}(\F)$ be such that $f\in \F[\ell_1,\ldots,\ell_r]$ and $f\in \F[h_1,\ldots,h_r]$. Then $\langle \ell_1,\ldots,\ell_r\rangle_{\F}=\langle h_1,\ldots,h_r\rangle_{\F}$.
\end{lemma}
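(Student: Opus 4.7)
The plan is to identify the span $L := \langle \ell_1,\ldots,\ell_r\rangle_{\F}$ with an intrinsic invariant of $f$, namely the perpendicular (under the nondegenerate pairing $\Phi_{n,1}$) of the space of linear differential operators annihilating $f$:
\[
\Ann_1(f):=\{v(\partial)\in T_{n,1}(\F)\mid v(\partial)\circ f=0\}.
\]
Once we show $L^\perp=\Ann_1(f)$, the same argument applied to $h_1,\ldots,h_r$ will give $\langle h_1,\ldots,h_r\rangle_{\F}^\perp=\Ann_1(f)$, and taking perpendiculars a second time in the nondegenerate pairing $\Phi_{n,1}$ yields the desired equality.

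First I would note that $\ell_1,\ldots,\ell_r$ must be linearly independent: a nontrivial linear relation would allow us to eliminate one $\ell_i$ and express $f$ using fewer than $r=\ess(f)$ linear forms, contradicting minimality. Hence $\dim_{\F}L=r$. The inclusion $L^\perp\subseteq \Ann_1(f)$ follows from the chain rule: writing $f=g(\ell_1,\ldots,\ell_r)$ for some $g\in \F[y_1,\ldots,y_r]_d$, we have for every $v\in \F^n$
\[
v(\partial)\circ f=\sum_{j=1}^r (v(\partial)\circ \ell_j)\cdot (\partial_j g)(\ell_1,\ldots,\ell_r),
\]
which vanishes whenever each scalar $v(\partial)\circ\ell_j$ does, i.e.\ whenever $v(\partial)\in L^\perp$.

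For the reverse inclusion $\Ann_1(f)\subseteq L^\perp$, extend $\ell_1,\ldots,\ell_r$ to a basis of $S_{n,1}(\F)$ and perform a linear change of coordinates so that $\ell_i=x_i$ for $i=1,\ldots,r$; then $f\in \F[x_1,\ldots,x_r]_d$ and $\partial_i f=0$ for $i>r$. Suppose $v=(v_1,\ldots,v_n)\in \F^n$ satisfies $v(\partial)\circ f=0$, which now reduces to $\sum_{i=1}^r v_i\partial_i f=0$. If $(v_1,\ldots,v_r)\neq 0$, choose an invertible linear change of $x_1,\ldots,x_r$ to new variables $y_1,\ldots,y_r$ sending the operator $v_1\partial_1+\ldots+v_r\partial_r$ to $\partial/\partial y_1$. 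Then $\partial f/\partial y_1=0$, which under the assumption $\ch(\F)=0$ or $\ch(\F)>d$ forces $f\in \F[y_2,\ldots,y_r]_d$, contradicting $\ess(f)=r$. Hence $(v_1,\ldots,v_r)=0$, equivalently $v(\partial)\circ\ell_i=0$ for all $i$, so $v(\partial)\in L^\perp$.

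The main obstacle is the step $\partial f/\partial y_1=0\Rightarrow f\in\F[y_2,\ldots,y_r]_d$: it fails in small characteristic because monomials of the form $y_1^{p}\cdot(\text{stuff})$ can be annihilated by $\partial/\partial y_1$ without truly being independent of $y_1$, which is exactly why the hypothesis $\ch(\F)=0$ or $\ch(\F)>d$ (made at the start of the section) is essential. Everything else is bookkeeping around the nondegenerate pairing $\Phi_{n,1}$ between $S_{n,1}(\F)$ and $T_{n,1}(\F)$.
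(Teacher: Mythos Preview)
Your proof is correct, but it follows a different route from the paper's. You identify $\langle \ell_1,\ldots,\ell_r\rangle_{\F}$ with the intrinsic invariant $\Ann_1(f)^\perp$ via the pairing $\Phi_{n,1}$, which is essentially a degree-one version of Carlini's characterization (Theorem~\ref{thm:carlini_ess}) appearing later in the paper. The paper instead argues directly: assuming the two spans differ, it shows that $\ell_1,\ldots,\ell_r,h_{s+1},\ldots,h_r$ are linearly (hence algebraically) independent, where $s=\dim(\langle\ell_i\rangle\cap\langle h_j\rangle)$, and then observes that the identity $f_1(\ell_1,\ldots,\ell_r)=f_2(h_1,\ldots,h_r)$, read as an equality of polynomials in these $2r-s$ independent forms, forces $f$ to lie in $\F[\ell_1,\ldots,\ell_s]$, contradicting minimality. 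The main trade-off is that the paper's argument uses only the algebraic independence of linearly independent linear forms and therefore works in \emph{every} characteristic, whereas your approach genuinely needs $\mathrm{char}(\F)=0$ or $\mathrm{char}(\F)>d$ for the step $\partial f/\partial y_1=0\Rightarrow f\in\F[y_2,\ldots,y_r]$. On the other hand, your argument yields more: it exhibits $V_{\ess}(f)$ as $\Ann_1(f)^\perp$, anticipating the apolarity viewpoint that the paper develops only in Section~\ref{sec:ess-to-rank}.
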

\begin{proof}
Assume that $\langle \ell_1,\ldots,\ell_r\rangle_{\F}\neq\langle h_1,\ldots,h_r\rangle_{\F}$ and let $0\leq s<r$ be the dimension of the intersection. By minimality of $r$, both $\ell_1,\ldots,\ell_r$ and $h_1,\ldots,h_r$ are linearly independent sets. After coordinate changes we may assume that $\ell_1=h_1,\ldots,\ell_s=h_s$, so that 
\[
V:=\langle \ell_1,\ldots,\ell_r\rangle_{\F}\cap\langle h_1,\ldots,h_r\rangle_{\F}=\langle \ell_1,\ldots,\ell_s\rangle_{\F}=\langle h_1,\ldots,h_s\rangle_{\F}.
\]
We claim that the forms $\ell_1,\ldots,\ell_r,h_{s+1},\ldots,h_r$ are linearly independent. Indeed, when
\[
c_1\ell_1+\ldots+c_r\ell_r+b_{s+1}h_{s+1}+\ldots+b_rh_r=0,\quad c_1,\ldots,c_r,b_{s+1},\ldots,b_r\in\F,
\]
we see that $b_{s+1}h_{s+1}+\ldots+b_rh_r\in V$ and so $b_{s+1},\ldots,b_r=0$ as $h_1,\ldots,h_r$ are linearly independent. Hence $c_1\ell_1+\ldots+c_r\ell_r=0$ and so $c_1,\ldots,c_r=0$ as $\ell_1,\ldots,\ell_r$ are also linearly independent. This proves the claim. Now, we can write 
\[
f_1(\ell_1,\ldots,\ell_r)=f=f_2(h_1,\ldots,h_r)
\]
and interpret both sides as a polynomial in $\ell_1,\ldots,\ell_r,h_{s+1},\ldots,h_r$. The left-hand side does not depend on the forms $h_{s+1},\ldots,h_r$ and the right-hand side does not depend on $\ell_{s+1},\ldots,\ell_r$. Since both sides are in fact the same polynomial, this means that $f$ is a polynomial in only $\ell_1,\ldots,\ell_s$. This contradicts the minimality of $r$. Hence $\langle \ell_1,\ldots,\ell_r\rangle_{\F}=\langle h_1,\ldots,h_r\rangle_{\F}$.
\end{proof}

\begin{definition}
Let $f\in S_{n,d}(\F)$ with essential rank $\ess(f)=r$. We define the \textbf{space of essential variables} of $f$ to be 
\[
V_{\ess}(f)\coloneqq \langle \ell_1,\ldots,\ell_r \rangle_{\F} \subseteq S_{n,1}(\F),
\]
where $\ell_1,\ldots,\ell_r\in S_{n,1}(\F)$ are such that $f\in\F[\ell_1,\ldots,\ell_r]$.
\end{definition}

\begin{remark}
The proof of Lemma \ref{lem:uniqueness} in fact shows that if $h_1,\ldots,h_s\in S_{n,1}(\F)$ are such that $f\in\F[h_1,\ldots,h_s]$, then $V_{\ess}(f)\subseteq \langle h_1,\ldots,h_s\rangle_{\F}$. 
\end{remark}

\begin{example}\label{exa:1}
Let $\F$ be a field with $\mathrm{char}(\F)\neq 2,3$, and let 
\[
f=3 x_1^3 +  8 x_1^2 x_2 + 5 x_1^2x_3 + 12 x_1 x_2^2 + 4 x_1x_2 x_3 + 4 x_1x_3^2 + 8 x_2^3 + 2 x_2 x_3^2 + x_3^3 \in S_{3,3}(\F).
\]
Even though at a first sight it is not clear what the essential rank of $f$ equals, one can verify that 
\begin{align*}
f&=(x_1+2x_2)(x_1+x_3)^2+(x_1+2x_2)^3+(x_1+x_3)^3\\
&=:g(x_1+2x_2,x_1+x_3).
\end{align*}
This means that $V_{\ess}(f)\subseteq \langle x_1+2x_2,x_1+x_3\rangle_{\F}$ and  $\ess(f)\leq 2$. We will see  in Example \ref{exa:3} that these are in fact equalities.
\end{example}

\begin{remark}\label{rem:basic_properties} Let $f,g \in S_{n,d}(\F)$ and let $\lambda \in \F^*$. The following properties hold.
\begin{enumerate}
\item $\ess(f) \geq 0$, and $\ess(f)=0$ if and only if $f=0$.
\item $\ess(f+g)\leq \ess(f)+\ess(g)$. 
\item $\ess(f)=\ess(\lambda f)$.
\item $\ess(f)\leq n$.   
\item $\ess(f)=1$ if and only if $f=\lambda \ell^d$, for some $\lambda \in \F^*$ and $\ell \in S_{n,1}(\F)$.
\end{enumerate}
\end{remark}

Observe that, due to properties (1), (2) and (3), the notion  of essential rank induces a metric on $S_{n,d}(\F)$, defined by
\begin{align*} 
\dd_{\ess}(f,g) &\coloneqq \ess(f-g).
\end{align*}
This allows us to develop a theory of error-correcting codes with this metric. 

\begin{definition}
An $[(n,d),k,r]_{\F}$  \textbf{essential-rank-metric code} $\C$ is a $k$-dimensional $\F$-subspace of $S_{n,d}(\F)$ equipped with the  essential-rank metric. The integer $r$ is given by
\[
r:= \min\{ \dd_{\ess}(f,g) \mid f,g \in \C, f\neq g\} =\min \left\{ \ess(f) \mid f \in \C\setminus\{0\} \right\}
\]
and is called the  \textbf{minimum essential-rank distance} of $\C$.
\end{definition}

\begin{example}\label{exa:2}
Let $\F$ be a field with $\mathrm{char}(\F)\neq 2,3$ and consider the $[(3,3),2,2]_\F$ essential-rank-metric code $\C=\langle f_1,f_2\rangle_{\F}$, where $f_1$ is the polynomial in Example \ref{exa:1}, namely 
\[
f_1=3 x_1^3 +  8 x_1^2 x_2 + 5 x_1^2x_3 + 12 x_1 x_2^2 + 4 x_1x_2 x_3 + 4 x_1x_3^2 + 8 x_2^3 + 2 x_2 x_3^2 + x_3^3
\]
and 
\[
f_2=x_1^3+2x_1x_2^2+x_3^3.
\]
It is not easy to determine apriori that the minimum essential-rank distance of $\C$ is $2$. It is clearly $\leq 2$, since $\ess(f_1) \leq 2$. In Section \ref{sec:ess-to-rank}, we will explain how we determine its exact value; see Example \ref{exa:code_2dm}.
\end{example}

The parameters $n,d,k,r$ of a code  are not free, but depend on each other. Specifically, not all the possible choices for the four integers can be achieved by a code. The following result gives a condition that they must satisfy.

\begin{proposition}[Singleton-like bound]\label{prop:singletonlike_bound}
Let $\C$ be an $[(n,d),k,r]_\F$ code. Then 
\[
k \le \binom{n+d-1}{d}-\binom{r+d-2}{d}.
\]
\end{proposition}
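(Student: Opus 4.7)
The plan is to exhibit a subspace $U\subseteq S_{n,d}(\F)$ of dimension $\binom{r+d-2}{d}$ satisfying $U\cap\C=\{0\}$, and then conclude by counting dimensions inside $S_{n,d}(\F)$, whose total dimension equals $\binom{n+d-1}{d}$. This mirrors the standard puncturing-style argument for Singleton bounds, adapted to the fact that ``taking $r-1$ coordinates'' here means restricting to $r-1$ essential variables.

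First I would use the fact that $r\leq n$ (which follows from Remark \ref{rem:basic_properties}(4) applied to any nonzero codeword) to pick $r-1$ linearly independent linear forms $\ell_1,\ldots,\ell_{r-1}\in S_{n,1}(\F)$ and set
\[
U\coloneqq \F[\ell_1,\ldots,\ell_{r-1}]_d \subseteq S_{n,d}(\F).
\]
A linear change of coordinates sending $\ell_i\mapsto x_i$ identifies $U$ with $S_{r-1,d}(\F)$, so $\dim_\F U=\binom{r+d-2}{d}$. The small technical point here is that linearly independent linear forms are algebraically independent over $\F$, which becomes transparent in the monomial model after the coordinate change.

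Next I would verify that $U\cap\C=\{0\}$: by the very definition of the essential rank, every $f\in U$ satisfies $\ess(f)\leq r-1$, while every nonzero $f\in\C$ has $\ess(f)\geq r$ by the definition of the minimum essential-rank distance. Hence no nonzero polynomial lies in both subspaces, and combining the two inside $S_{n,d}(\F)$ gives
\[
k + \binom{r+d-2}{d} = \dim_\F \C + \dim_\F U \leq \dim_\F S_{n,d}(\F) = \binom{n+d-1}{d},
\]
which rearranges to the stated bound. I do not anticipate a serious obstacle; the only care points are justifying the dimension of $U$ via the coordinate change and noticing that the bound is automatic in the edge cases $r\leq 1$, where $\binom{r+d-2}{d}=0$.
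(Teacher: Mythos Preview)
Your proof is correct and is essentially the same as the paper's: both intersect $\C$ with the subspace of degree-$d$ polynomials in $r-1$ linear forms (the paper simply takes $\ell_i=x_i$ and writes this subspace as $S_{r-1,d}(\F)$), observe the intersection is trivial since every element there has essential rank at most $r-1$, and conclude by the dimension count inside $S_{n,d}(\F)$.
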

\begin{proof}
Consider the space $S_{r-1,d}(\F)$ as a subspace of $S_{n,d}(\F)$. Since every nonzero polynomial in $S_{r-1,d}(\F)$ has essential rank strictly smaller than $r$,  then $\C\cap S_{r-1,d}(\F)=\{0\}$. Thus  
$$
k+\binom{r+d-2}{d}= \dim(\C)+\dim(S_{r-1,d}(\F))\leq \dim(S_{n,d}(\F))=\binom{n+d-1}{d}.
$$
\end{proof}

In Section \ref{sec:ess-to-rank} we will give another bound relating these parameters  with the aid of the theory of rank-metric codes. However, we will also see that it will never be better than  the one of Proposition \ref{prop:singletonlike_bound}.

\subsection{Nondegeneracy and equivalence}

In this section, we introduce the notion of degeneracy and equivalence for essential-rank-metric codes. The first notion captures the effective ambient space where the code should be considered, while the latter notion aims to identify codes that can be considered to be the same.

\begin{definition}
For an $[(n,d),k,r]_\F$ essential-rank-metric code $\C$, we define its space of essential variables, as
\[
V_{\ess}(\C):=\sum_{f\in \C} V_{\ess}(f).
\]
We say that $\C$ is \textbf{nondegenerate} if $V_{\ess}(\C)=S_{n,1}(\F)$, and that it is \textbf{degenerate} if it is not nondegenerate.
\end{definition}

The notion of nondegeneracy allows us to work in the smallest ambient space. Indeed, if an $[(n,d),k,r]_\F$ code $\C$ is degenerate and $\dim(V_{\ess}(\C))=t$, then $\C$  can be isometrically embedded in $S_{t,d}(\F)$.\bigskip

Let us consider natural transformations of the ambient space which keep the metric invariant. In other words, we consider $\F$-linear transformations of $(S_{n,d}(\F),\dd_{\ess})$ which are isometries. For any $A\in \GL(n,\F)$ and $\lambda \in \F^*$, consider induced map
\begin{equation}\label{eq:equiv}
\begin{array}{rcl}
S_{n,d}(\F) & \longrightarrow & S_{n,d}(\F) \\
f(x) & \longmapsto & \lambda f\cdot A:= \lambda f(x\cdot A).
\end{array}
\end{equation}
It is straightforward to verify that this $\F$-linear map is an isometry of $(S_{n,d}(\F),\dd_{\ess})$. We conjecture that, al least under some natural conditions, every isometry is of this form.

\begin{conjecture}
Assume that $\mathrm{char}(\F)=0$ or $\mathrm{char}(\F)>d$. Then every isometry of $(S_{n,d}(\F),\dd_{\ess})$ is of the form $f\mapsto\lambda f\cdot A$ for some $A\in \GL(n,\F)$ and $\lambda \in \F^*$.
\end{conjecture}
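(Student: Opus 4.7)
My plan is to exploit the fact that an $\F$-linear isometry $\varphi$ of $(S_{n,d}(\F),\dd_{\ess})$ preserves each essential-rank level. By Remark~\ref{rem:basic_properties}(5), the nonzero elements of essential rank $1$ are precisely the nonzero scalar multiples of $d$-th powers of linear forms. Hence for every nonzero $\ell \in S_{n,1}(\F)$ there exist $c(\ell) \in \F^*$ and $m(\ell) \in S_{n,1}(\F)\setminus\{0\}$ (with $m(\ell)$ unique up to a $d$-th root of unity, absorbable into $c(\ell)$) such that $\varphi(\ell^d) = c(\ell)\,m(\ell)^d$. This assignment is the main structural datum I would extract, and the rest of the proof aims to show that $\ell\mapsto m(\ell)$ comes from a linear map.

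I would then extend scalars to the algebraic closure $\bar\F$. This is harmless because the essential rank equals the rank of the first catalecticant matrix (Section~\ref{sec:ess-to-rank}), which is stable under field extension, so the extension $\bar\varphi$ remains an isometry. The set of nonzero $d$-th powers in $S_{n,d}(\bar\F)$ is the affine cone over the $d$-uple Veronese variety $V_{n,d} \subseteq \mathbb{P}(S_{n,d}(\bar\F))$, so $\bar\varphi$ induces a linear automorphism of $\mathbb{P}(S_{n,d}(\bar\F))$ sending $V_{n,d}$ to itself. The classical description $\Aut(V_{n,d}) = \mathrm{PGL}_n(\bar\F)$ acting through the embedding $\mathbb{P}^{n-1}\hookrightarrow V_{n,d}$ then produces $A\in\GL_n(\bar\F)$ with $\bar\varphi(\ell^d) = c(\ell)(\ell\cdot A)^d$ for a function $c\colon S_{n,1}(\bar\F)\setminus\{0\}\to\bar\F^*$.

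Next I would argue $c$ is a constant. By $\bar\F$-linearity of $\bar\varphi$ and $(t\ell)^d = t^d\ell^d$, the function $c$ is homogeneous of degree $0$ in $\ell$, so it descends to $\mathbb{P}(S_{n,1}(\bar\F))\cong\mathbb{P}^{n-1}$; written coordinatewise it is a ratio of degree-$d$ polynomials in $\ell$, hence a regular function on $\mathbb{P}^{n-1}$ with values in $\bar\F^*$, and therefore a constant $\lambda$. Because $\ch(\F)=0$ or $\ch(\F)>d$, the polarization identity expresses every product $\ell_1\cdots\ell_d$ as a linear combination of $d$-th powers, so the $d$-th powers span $S_{n,d}(\bar\F)$; combining this with $\bar\F$-linearity of $\bar\varphi$ gives $\bar\varphi(f)(x)=\lambda\,f(x\cdot A)$ for all $f$. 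Finally, since $\varphi$ is defined over $\F$, the pair $(\lambda,[A])$ is $\Gal(\bar\F/\F)$-invariant, so $\lambda\in\F^*$ and $[A]\in\mathrm{PGL}_n(\F)$; picking a representative $A\in\GL_n(\F)$ and absorbing the chosen scalar into $\lambda$ yields the desired form.

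The hard part is the invocation of the classical fact $\Aut(V_{n,d})=\mathrm{PGL}_n$ for $d\geq 2$, together with the step that promotes an abstract variety automorphism of $V_{n,d}$ (obtained from the ambient linear action of $\bar\varphi$) back to a linear action of $\GL_n$ on $S_{n,1}$ that is compatible with $\bar\varphi$. A self-contained alternative would be to show directly that $\ell\mapsto m(\ell)$ is linear up to scalars by analysing the polynomial parametrisation $t\mapsto\bar\varphi((\ell_1+t\ell_2)^d)$, which has degree $d$ in $t$ and takes values in the Veronese cone, and arguing that any such parametrisation must itself be the $d$-th power of a linear parametrisation; the characteristic hypothesis enters in both routes, precisely via the spanning property of $d$-th powers that makes polarization available.
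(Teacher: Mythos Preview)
The statement you are attempting to prove is listed in the paper as a \emph{conjecture}; the authors explicitly leave it open and restrict their attention to the known isometries $f\mapsto\lambda\,f\cdot A$. So there is no proof in the paper to compare against.

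Your strategy---pass to $\bar\F$, use that an isometry must preserve the affine cone over the Veronese $V_{n,d}$, identify the stabiliser of $V_{n,d}$ in $\mathrm{PGL}(S_{n,d})$ with $\mathrm{PGL}_n$, show the residual scalar $c(\ell)$ is constant as a regular function on $\mathbb{P}^{n-1}$, and then descend---is natural and essentially correct \emph{for infinite $\F$}. The constancy of $c$ and the spanning of $S_{n,d}$ by $d$-th powers (via polarisation, using $\ch\F=0$ or $\ch\F>d$) are fine. The Galois descent is slightly more delicate than you state: what is immediately Galois-invariant is the class $[A]\in\mathrm{PGL}_n(\bar\F)$, and one needs Hilbert~90 to lift it to $A\in\GL_n(\F)$, after which $\lambda\in\F^*$ follows by comparing any nonzero coefficient; but this is routine.

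The genuine gap is the passage from $\varphi$ to $\bar\varphi$. Your justification (``essential rank equals catalecticant rank, which is stable under field extension'') only says that $\ess_\F(f)=\ess_{\bar\F}(f)$ for $f$ with coefficients in $\F$; it does \emph{not} show that the $\bar\F$-linear extension $\bar\varphi$ preserves the essential rank of arbitrary $\bar\F$-polynomials. What you actually need is that the $\F$-points of the rank-$\le 1$ locus (the Veronese cone) are Zariski dense in its $\bar\F$-points, so that a linear map preserving the former must preserve the latter. This density holds when $\F$ is infinite, since the cone is the image of the $\F$-rational morphism $(\lambda,\ell)\mapsto\lambda\ell^d$ from an affine space. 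But for finite $\F$ with $\ch\F>d$---precisely the case most relevant to the coding-theoretic setting of the paper---Zariski density fails, and there is no evident reason why the restriction of $\bar\varphi$ to $V_{n,d}(\bar\F)$ should be induced by an element of $\mathrm{PGL}_n(\bar\F)$. So as written, your argument settles the conjecture only over infinite fields and leaves the finite-field case open, which is presumably why it remains a conjecture.
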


In this paper, we will restrict our attention to this known class of isometries.

\begin{definition}
Two $[(n,d),k,r]_\F$ essential-rank-metric codes $\C_1,\C_2$ are said to be \textbf{equivalent} if there exist $A \in\GL_n(\F)$ and $\lambda \in \F^*$ such that
\[
\C_2=\lambda \C_1\cdot A:=\{\lambda f\cdot A \mid f \in \C_1\}.
\]
In this case, we write $\C_1\cong \C_2$.
\end{definition}

\section{Construction of Codes with Large Essential Rank}\label{sec:construction}

In this section we derive a construction of essential-rank-metric codes with large minimum distance. We will do this by exploiting an auxiliary cyclic Galois extension of $\F$ of degree $n$, and generalizing the construction of Delsarte-Gabidulin codes for the rank metric.

Let $\LL$ be a cyclic Galois extension of $\F$ of degree $n$. Let $\sigma$ be a generator of $\Gal(\LL/\F)$. Consider the related algebra extensions 
\[
S_{n}(\LL)= S_n(\F)\otimes_{\F}\LL=\LL[\xxi], \qquad  {S}_{n,d}(\LL)= S_{n,d}(\F)\otimes_{\F}\LL=\LL[\xxi]_d,\]\[{T}_{n}(\LL)= T_n(\F)\otimes_{\F}\LL=\LL[\partial_1,\ldots,\partial_n], \qquad {T}_{n,d}(\LL)= T_{n,d}(\F)\otimes_{\F}\LL=\LL[\partial_1,\ldots,\partial_n]_d.
\]
Let $\alpha^{(1)},\ldots,\alpha^{(d)} \in \LL^n$ be $\F$-bases of $\LL$, and let $1\le \rho \le n$. We define the space 
\[
\C_{\rho}^{n,d}(\alpha^{(1)},\ldots,\alpha^{(d)}) := \left\{ \alpha^{(1)}(\partial)\prod_{j=2}^d\sigma^{r_j}(\alpha^{(j)})(\partial) \,\middle|\, 0 \le r_2,\ldots,r_d \le \rho-2  \right\}^\perp\cap S_{n,d}(\F).
\]
This space is the generalization of symmetric Delsarte-Gabidulin codes when $d>2$. We will see in Section~\ref{sec:ess-to-rank} that for $d=2$ our construction coincides with the construction given in \cite{schmidt2015symmetric}. The remainder of this section is devoted to analysing the parameters of these codes: the minimum essential-rank distance and the dimension.

\subsection{Minimum distance}

This subsection is dedicated to proving the following lower bound on the minimum essential-rank distance. 

\begin{theorem}\label{thm:minessrank}
For every $f\in \C_{\rho}^{n,d}(\alpha^{(1)},\ldots,\alpha^{(d)})\setminus \{0\}$, we have $\ess(f)\geq \rho$.
\end{theorem}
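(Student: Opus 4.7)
My approach is by contradiction. Suppose $f\in\C_\rho^{n,d}(\alpha^{(1)},\ldots,\alpha^{(d)})\setminus\{0\}$ has $s:=\ess(f)<\rho$, and let $V:=V_{\ess}(f)\subseteq S_{n,1}(\F)$ with basis $\ell_1,\ldots,\ell_s$. Write $V^\perp\subseteq T_{n,1}(\F)$ for the annihilator of $V$ under $\Phi_{n,1,1}$, of $\F$-dimension $n-s$. I work over $\LL$, setting $W:=V^\perp\otimes_{\F}\LL\subseteq T_{n,1}(\LL)$ and $\pi\colon T_{n,1}(\LL)\to Q:=T_{n,1}(\LL)/W$, with $\dim_\LL Q=s$.

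First I set up a polarization. Since $\ch(\F)=0$ or $>d$, the chain rule shows $D\circ f=0$ for every $D\in W$. Hence the action $(D_1,\ldots,D_d)\mapsto D_1\cdots D_d\circ f$ of $T_{n,1}(\LL)^d$ on $f$ descends to a symmetric $d$-linear form $\tilde f\colon Q^d\to\LL$ satisfying $D_1\cdots D_d\circ f=\tilde f(\pi(D_1),\ldots,\pi(D_d))$. This form is moreover nondegenerate on $Q$: for $D\in T_{n,1}(\LL)$, the linear form $\tilde f(\pi(D),-,\ldots,-)$ on $Q^{d-1}$ vanishes identically iff $D\circ f=0$ (by nondegeneracy of $\Phi_{n,d-1,d-1}$), iff $D\in W$; the last equivalence reflects $V=V_{\ess}(f)$ via the $\LL$-linear independence of the polynomials $\partial g/\partial y_i(\ell_1,\ldots,\ell_s)$ (valid in characteristic $0$ or $>d$), when $f=g(\ell_1,\ldots,\ell_s)$.

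Next I establish a Moore-matrix spanning property. For $j\in\{2,\ldots,d\}$ set $\beta_i^{(j)}:=\ell_i(\alpha^{(j)})\in\LL$. Since the coefficient matrix of $\ell_1,\ldots,\ell_s$ has $\F$-rank $s$ and $\alpha^{(j)}$ is an $\F$-basis of $\LL$, the scalars $\beta_1^{(j)},\ldots,\beta_s^{(j)}$ are $\F$-linearly independent in $\LL$. By a standard consequence of Artin's theorem on linear independence of characters, for any distinct exponents $r_1,\ldots,r_s\in\{0,\ldots,n-1\}$ the matrix $(\sigma^{r_k}(\beta_i^{(j)}))_{k,i}$ is invertible. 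Under the identification $Q\cong\LL^s$ via the basis dual to $\ell_1,\ldots,\ell_s$, the $i$-th coordinate of $\pi(\sigma^r(\alpha^{(j)})(\partial))$ is $\sigma^r(\beta_i^{(j)})$, so the $\rho-1\ge s$ vectors $\pi(\sigma^r(\alpha^{(j)})(\partial))$, $r=0,\ldots,\rho-2$, span $Q$.

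Finally I combine these ingredients. The code condition reads
\[
\tilde f\bigl(\pi(\alpha^{(1)}(\partial)),\pi(\sigma^{r_2}(\alpha^{(2)})(\partial)),\ldots,\pi(\sigma^{r_d}(\alpha^{(d)})(\partial))\bigr)=0
\]
for all $0\le r_2,\ldots,r_d\le\rho-2$. For each fixed $r_3,\ldots,r_d$, the left-hand side is an $\LL$-linear form on $Q$ in the second slot vanishing on a spanning set, so it vanishes on all of $Q$. Iterating in slots $3,\ldots,d$ yields $\tilde f(\pi(\alpha^{(1)}(\partial)),v_2,\ldots,v_d)=0$ for every $v_2,\ldots,v_d\in Q$. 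By symmetry and nondegeneracy of $\tilde f$, this forces $\pi(\alpha^{(1)}(\partial))=0$, i.e.\ $\alpha^{(1)}(\partial)\in W$. Expanding $\alpha^{(1)}_i=\sum_k m_{ik}e_k$ in an $\F$-basis $\{e_k\}$ of $\LL$ shows that all $n$ columns of $M=(m_{ik})\in\F^{n\times n}$ lie in $V^\perp$, so $\rk M\le n-s<n$; consequently $\alpha^{(1)}_1,\ldots,\alpha^{(1)}_n$ are $\F$-linearly dependent in $\LL$, contradicting that $\alpha^{(1)}$ is an $\F$-basis of $\LL$. I expect the main obstacle to be the nondegeneracy of $\tilde f$ on $Q$, as it requires precisely identifying the first-order apolar ideal of $f$ with $V_{\ess}(f)^\perp$ over $\LL$.
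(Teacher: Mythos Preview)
Your proof is correct and follows a genuinely different route from the paper's. The paper argues by induction on $d$: the base case is $\C_\rho^{n,1}(\alpha)=\{0\}$ (Lemma~\ref{lem:1d}), and the inductive step (Lemma~\ref{lem:d_induction}) shows, via a Moore-matrix system obtained by varying the last exponent $r_d$, that if $f=g(\ell_1,\ldots,\ell_{\rho-1})$ lies in $\C_\rho^{n,d}$ then each $\frac{\partial g}{\partial y_i}(\ell)$ lies in $\C_\rho^{n,d-1}$; induction then kills all partials, and Euler's identity $g=\frac{1}{d}\sum_i y_i\,\partial g/\partial y_i$ gives $g=0$. You instead pass directly to the polar form $\tilde f$ on the $s$-dimensional quotient $Q$, establish its nondegeneracy once (this single step packages what the paper's entire induction plus Euler's formula accomplish), and then use Moore-type spanning in slots $2,\ldots,d$ simultaneously to force $\pi(\alpha^{(1)}(\partial))=0$, contradicting that $\alpha^{(1)}$ is an $\F$-basis. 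Your approach is more structural and avoids the induction; the paper's is more elementary but recursive. Both ultimately hinge on the same Moore-matrix invertibility.

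One minor correction: the assertion that $(\sigma^{r_k}(\beta_i^{(j)}))_{k,i}$ is invertible for \emph{arbitrary} distinct exponents $r_1,\ldots,r_s\in\{0,\ldots,n-1\}$ is false in general (for instance, with $n=4$, $s=2$ and exponents $0,2$, take $\beta_1,\beta_2$ in the subfield fixed by $\sigma^2$; the two rows then coincide). What is true---and all you need for the spanning of $Q$---is the standard Moore determinant for \emph{consecutive} exponents $0,\ldots,s-1$ (this is \cite[Corollary~4.13]{lam1988vandermonde}, exactly as invoked in the paper). Since $s\le\rho-1$, the first $s$ of your $\rho-1$ vectors already span $Q$, and the rest of your argument goes through unchanged.
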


In order to prove Theorem \ref{thm:minessrank}, we need some auxiliary results. We start with the case $d=1$.

\begin{lemma}\label{lem:1d}
We have $\C_{\rho}^{n,1}(\alpha)=\{0\}$.
\end{lemma}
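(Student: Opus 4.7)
The plan is to unwind the definition of $\C_{\rho}^{n,1}(\alpha)$ in the degree-one case and exploit the fact that $\alpha$ is by hypothesis an $\F$-basis of $\LL$. When $d=1$, the index range $j=2,\ldots,d$ is empty, so the set indexing the orthogonal reduces to the single element $\alpha(\partial) = \alpha_1\partial_1 + \cdots + \alpha_n\partial_n \in T_{n,1}(\LL)$, regardless of $\rho$. Thus
\[
\C_{\rho}^{n,1}(\alpha) = \{\alpha(\partial)\}^{\perp}\cap S_{n,1}(\F),
\]
where the orthogonal is taken inside $S_{n,1}(\LL)$ under the natural extension of the pairing $\Phi_{n,1}$ to $\LL$.

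Next I would compute the pairing explicitly. For a linear form $f = f_1 x_1 + \cdots + f_n x_n \in S_{n,1}(\LL)$, differentiation gives
\[
\alpha(\partial)\circ f = \sum_{i=1}^n \alpha_i f_i \in \LL.
\]
Hence $f \in \{\alpha(\partial)\}^{\perp}$ if and only if $\sum_{i=1}^n \alpha_i f_i = 0$.

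Finally, I would intersect with $S_{n,1}(\F)$, which amounts to requiring $f_1,\ldots,f_n \in \F$. Since $\alpha = (\alpha_1,\ldots,\alpha_n)$ is by assumption an $\F$-basis of $\LL$, its entries are in particular $\F$-linearly independent. The equation $\sum_{i=1}^n \alpha_i f_i = 0$ with $f_i \in \F$ therefore forces $f_i = 0$ for all $i$, so $f = 0$. This proves $\C_{\rho}^{n,1}(\alpha) = \{0\}$.

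There is no genuine obstacle here: the statement is a sanity check, confirming that the construction degenerates in the trivial degree-one case, and the only ingredient used is the $\F$-linear independence of a basis. The result will subsequently serve as the base case for an inductive argument on $d$ when proving the actual minimum essential-rank bound in Theorem \ref{thm:minessrank}.
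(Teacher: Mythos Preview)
Your proof is correct and follows essentially the same approach as the paper: both unwind the definition to the single condition $\alpha(\partial)\circ f=\sum_i \alpha_i f_i=0$ and then use the $\F$-linear independence of $\alpha_1,\ldots,\alpha_n$ to conclude $f=0$. Your write-up is slightly more explicit in observing that the empty product over $j=2,\ldots,d$ collapses the defining set to $\{\alpha(\partial)\}$, but the argument is otherwise identical.
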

\begin{proof}
Let $f\in\C_{\rho}^{n,1}(\alpha)$ and write $f=c(x)$ for some $c=(c_1,\ldots,c_n) \in \F^n$. Then
\[
0=\alpha(\partial)\circ c(x)=c(\alpha)=\sum_{i=1}^n c_i\alpha_i.
\]
Since $\alpha_1,\ldots,\alpha_n$ are linearly dependent over $\F$, this means that $c_1,\ldots,c_n=0$. So $f=0$.
\end{proof}

The following lemma represents the inductive step for proving Theorem \ref{thm:minessrank}. 

\begin{lemma}\label{lem:d_induction}
Let $d\geq 2$, $f\in\mathcal{C}_\rho^{n,d}(\alpha^{(1)},\ldots,\alpha^{(d)})$ and suppose that $f=g(\ell_1,\ldots,\ell_{\rho-1})$ for some $g\in\F[y_1,\ldots,y_{\rho-1}]_d$ and linearly independent $\ell_1,\ldots,\ell_{\rho-1}\in S_{n,1}(\F)$. Then 
\[
\frac{\partial g}{\partial y_i}(\ell_1,\ldots,\ell_{\rho-1})\in\C_{\rho}^{n,d-1}(\alpha^{(1)},\ldots,\alpha^{(d-1)})
\]
for all $i\in\{1,\ldots,\rho-1\}$.
\end{lemma}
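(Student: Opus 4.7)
The plan is to \emph{peel off} the last differential operator $\sigma^{r_d}(\alpha^{(d)})(\partial)$ in the defining annihilation condition for $\C_\rho^{n,d}(\alpha^{(1)},\ldots,\alpha^{(d)})$ and convert it into annihilation conditions on the $\rho-1$ polynomials $\frac{\partial g}{\partial y_i}(\ell_1,\ldots,\ell_{\rho-1})$. The whole argument boils down to a Moore-matrix (Vandermonde-type) inversion, analogous to the one used in the introduction for the minimum distance of Gabidulin codes.

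First I would expand $\ell_i=\sum_k\ell_{i,k}x_k$ with $\ell_{i,k}\in\F$ and introduce the scalars $a_i:=\sum_k\alpha^{(d)}_k\ell_{i,k}\in\LL$. Because $\alpha^{(d)}$ is an $\F$-basis of $\LL$, the map $(c_1,\ldots,c_n)\mapsto\sum_k\alpha^{(d)}_kc_k$ is an $\F$-linear isomorphism $\F^n\to\LL$, so the hypothesis that $\ell_1,\ldots,\ell_{\rho-1}$ are $\F$-linearly independent transfers to the $\F$-linear independence of $a_1,\ldots,a_{\rho-1}$ in $\LL$. Using the chain rule, together with the fact that $\ell_{i,k}\in\F$ is fixed by $\sigma^{r_d}$, a short computation gives
\[
\sigma^{r_d}(\alpha^{(d)})(\partial)\circ g(\ell_1,\ldots,\ell_{\rho-1})=\sum_{i=1}^{\rho-1}\sigma^{r_d}(a_i)\,\frac{\partial g}{\partial y_i}(\ell_1,\ldots,\ell_{\rho-1}).
\]

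Next, since the operators $\sigma^{r_j}(\alpha^{(j)})(\partial)$ have constant $\LL$-coefficients and therefore commute with each other and with the scalars $\sigma^{r_d}(a_i)$, I would apply the remaining operator $\alpha^{(1)}(\partial)\prod_{j=2}^{d-1}\sigma^{r_j}(\alpha^{(j)})(\partial)$ to the identity above. Invoking $f\in\C_\rho^{n,d}(\alpha^{(1)},\ldots,\alpha^{(d)})$, I obtain, for every choice of $r_2,\ldots,r_{d-1},r_d\in\{0,\ldots,\rho-2\}$, the scalar identity
\[
\sum_{i=1}^{\rho-1}\sigma^{r_d}(a_i)\,b_i^{(r_2,\ldots,r_{d-1})}=0,
\]
where $b_i^{(r_2,\ldots,r_{d-1})}:=\alpha^{(1)}(\partial)\prod_{j=2}^{d-1}\sigma^{r_j}(\alpha^{(j)})(\partial)\circ\frac{\partial g}{\partial y_i}(\ell_1,\ldots,\ell_{\rho-1})\in\LL$.

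Finally, fixing $(r_2,\ldots,r_{d-1})$ and letting $r_d$ range over $\{0,1,\ldots,\rho-2\}$ produces a square homogeneous system of size $\rho-1$ in the unknowns $b_1,\ldots,b_{\rho-1}$ whose coefficient matrix is the Moore matrix $\bigl(\sigma^{r_d}(a_i)\bigr)_{r_d,i}$. The only real obstacle in the proof is the nonsingularity of this Moore matrix -- everything else is a formal chain-rule bookkeeping -- but this is the classical Moore-determinant identity (equivalently, Artin's theorem on the independence of characters): the matrix is invertible precisely because $a_1,\ldots,a_{\rho-1}$ are $\F$-linearly independent. Hence $b_i^{(r_2,\ldots,r_{d-1})}=0$ for all $i$ and all admissible $(r_2,\ldots,r_{d-1})$, which is exactly the condition that $\frac{\partial g}{\partial y_i}(\ell_1,\ldots,\ell_{\rho-1})\in\C_\rho^{n,d-1}(\alpha^{(1)},\ldots,\alpha^{(d-1)})$.
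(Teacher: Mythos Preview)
Your proof is correct and follows essentially the same approach as the paper: peel off the operator $\sigma^{r_d}(\alpha^{(d)})(\partial)$ via the chain rule, rewrite the vanishing condition as a homogeneous linear system whose coefficient matrix is the Moore matrix $\bigl(\sigma^{r_d}(a_i)\bigr)$, and invoke its invertibility (the paper cites \cite[Corollary~4.13]{lam1988vandermonde}, which is the same result you call the Moore-determinant identity). Your explicit justification that the $a_i$ are $\F$-linearly independent because $\alpha^{(d)}$ is an $\F$-basis of $\LL$ is a detail the paper leaves implicit.
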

\begin{proof}
Let $r=(r_2,\ldots,r_{d-1})\in \{0,\ldots, \rho-2\}^{d-2}$, $r_{d}\in\{0,\ldots, \rho-2\}$ and write 
\[
W_{d-1}^r(\partial):=\alpha^{(1)}(\partial)\sigma^{r_2}(\alpha^{(2)})(\partial)\cdots\sigma^{r_{d-1}}(\alpha^{(d-1)})(\partial).
\]
We have
\begin{eqnarray*}
0&=&(W_{d-1}^r(\partial)\sigma^{r_{d}}(\alpha^{(d)})(\partial))\circ f\\
&=&(W_{d-1}^r(\partial)\sigma^{r_{d}}(\alpha^{(d)})(\partial))\circ g(\ell_1,\ldots,\ell_{\rho-1})\\
&=&W_{d-1}^r(\partial) \circ (\sigma^{r_{d}}(\alpha^{(d)})(\partial)\circ g(\ell_1,\ldots,\ell_{\rho-1}))\\
&=&W_{d-1}^r(\partial) \circ \sum_{i=1}^{\rho-1}\sigma^{r_d}(A_i)\cdot\frac{\partial g}{\partial y_i}(\ell_1,\ldots,\ell_{\rho-1})\\
&=&\sum_{i=1}^{\rho-1}\sigma^{r_d}(A_i)\cdot\Big( W_{d-1}^r(\partial) \circ \frac{\partial g}{\partial y_i}(\ell_1,\ldots,\ell_{\rho-1})\Big),
\end{eqnarray*}
where $A_i:=\alpha^{(d)}(\partial)\circ \ell_i=\ell_i(\alpha^{(d)})$. So we see that
\begin{equation}\label{eq:linsyst_Moore}
\begin{pmatrix}
A_1&\cdots&A_{\rho-1}\\
\vdots&&\vdots\\
\sigma^{{\rho-2}}(A_1)&\cdots&\sigma^{{\rho-2}}(A_{\rho-1})
\end{pmatrix}\begin{pmatrix}
W_{d-1}^r(\partial) \circ \frac{\partial g}{\partial y_1}(\ell_1,\ldots,\ell_{\rho-1})\\
\vdots\\
W_{d-1}^r(\partial) \circ\frac{\partial g}{\partial y_{\rho-1}}(\ell_1,\ldots,\ell_{\rho-1})
\end{pmatrix}=\begin{pmatrix}
0\\
\vdots\\
0
\end{pmatrix}
\end{equation}
for all $r \in \{0,\ldots, \rho-2\}^{d-2}$. By   \cite[Corollary 4.13]{lam1988vandermonde}, the matrix defining the linear system \eqref{eq:linsyst_Moore} is invertible, and we find that 
\[
W_{d-1}^r(\partial) \circ \frac{\partial g}{\partial y_i}(\ell_1,\ldots,\ell_{\rho-1})=0
\]
for all $r \in \{0,\ldots, \rho-2\}^{d-2}$ and $i\in\{1,\ldots,{\rho-1}\}$. Hence, 
\[
\frac{\partial g}{\partial y_i}(\ell_1,\ldots,\ell_{\rho-1})\in\mathcal{C}_\rho^{n,d-1}(\alpha^{(1)},\ldots,\alpha^{(d-1)})
\]
for all $i\in\{1,\ldots,{\rho-1}\}$.
\end{proof}

We are now ready to give a proof of Theorem \ref{thm:minessrank}.

\begin{proof}[Proof of Theorem \ref{thm:minessrank}]
We will prove now that, if $f\in\mathcal{C}_\rho^{n,d}(\alpha^{(1)},\ldots,\alpha^{(d)})$ is such that 
\[
f=g(\ell_1,\ldots,\ell_{\rho-1})
\]
for some $g\in\F[y_1,\ldots,y_{\rho-1}]_d$ and linearly independent $\ell_1,\ldots,\ell_{\rho-1}\in S_{n,1}(\F)$, then necessarily $f=0$.
We do this by induction on $d$. For $d=1$, we have $f\in\C_{\rho}^{n,1}(\alpha)=\{0\}$ by Lemma \ref{lem:1d}, and so $f=0$. 
For $d\geq 2$, by Lemma \ref{lem:d_induction} we know that 
\[
\frac{\partial g}{\partial y_i}(\ell_1,\ldots,\ell_{\rho-1})\in\mathcal{C}_{\rho}^{n,d-1}(\alpha^{(1)},\ldots,\alpha^{(d-1)})
\]
for all $i\in\{1,\ldots,\rho-1\}$. By our induction assumption, it follows that
\[
\frac{\partial g}{\partial y_i}(\ell_1,\ldots,\ell_{\rho-1})=0
\]
for all $i\in\{1,\ldots,\rho-1\}$. Since $\ell_1,\ldots,\ell_{\rho-1}$ are linearly independent, we get $\frac{\partial g}{\partial y_i}=0$ for all $i\in\{1,\ldots,\rho-1\}$. Hence
\[
g=\frac{1}{d}\sum_{i=1}^{\rho-1}y_i \frac{\partial g}{\partial y_i}=0
\]
and so $f=g(\ell_1,\ldots,\ell_{\rho-1})=0$.
\end{proof}

\begin{example}
Let $(n,d,\rho)=(4,3,3)$. Consider the field $\F=\F_5$ with extension field $\LL=\F_{5^4}=\F_5(\gamma)$, where $\gamma$ is a primitive element of $\LL$ satisfying $\gamma^4=\gamma^2+\gamma+3$. Take $\alpha^{(1)}=\alpha^{(2)}=\alpha^{(3)}=(1,\gamma,\gamma^2,\gamma^3)=:\alpha$. Let 
\[
\mathcal{B}=(x_1^3, x_1^2x_2,\ldots,x_3x_4^2, x_4^3)
\]
be the basis of $\F_5[x_1,x_2,x_3,x_4]_3$ consisting of all monomials ordered lexicographically. Then $\C_3^{4,3}(\alpha,\alpha,\alpha)$ has generator matrix
\setcounter{MaxMatrixCols}{20}
\[
\begin{pmatrix}
1&0&0&0&0&0&0&0&4&4&2&2&3&1&3&4&2&3&3&4\\
0&1&0&0&0&0&0&0&1&2&1&1&3&1&1&1&0&4&2&4\\
0&0&1&0&0&0&0&0&1&0&0&1&0&1&3&0&1&4&4&0\\
0&0&0&1&0&0&0&0&0&1&0&0&1&0&1&3&0&1&4&4\\
0&0&0&0&1&0&0&0&4&4&1&3&4&3&1&1&3&0&4&1\\
0&0&0&0&0&1&0&0&1&3&1&0&1&2&2&4&3&3&3&4\\
0&0&0&0&0&0&1&0&4&3&0&0&0&0&3&3&0&1&3&1\\
0&0&0&0&0&0&0&1&3&3&1&0&1&4&4&4&2&0&2&4
\end{pmatrix}
\]
relative to the basis $\mathcal{B}$. The first row of the matrix corresponds to the polynomial
\[
f:=x_1^3 - x_1x_3x_4 - x_1x_4^2 + 2x_2^3 + 2x_2^2x_3 - 2x_2^2x_4 + x_2x_3^2 - 2x_2x_3x_4 - x_2x_4^2 + 2x_3^3 - 2x_3^2x_4 - 2x_3x_4^2 - x_4^3
\]
with $\ess(f)=4$.
\end{example}

\subsection{Dimension}
In this subsection we give estimates on the dimension of our codes. For simplicity, we restrict to codes of the form
\[
\C_{\rho}^{n,d}(\alpha):=\C_{\rho}^{n,d}(\alpha,\ldots,\alpha).
\]
We first show that the dimension of $\C_{\rho}^{n,d}(\alpha^{(1)},\ldots,\alpha^{(d)})$ does not depend on the choice of the $\F$-bases $\alpha^{(1)},\ldots,\alpha^{(d)}$, and hence that this restriction is harmless.

\begin{proposition}
Let $\alpha^{(1)},\ldots,\alpha^{(d)},\beta$ be $\F$-bases of $\LL$. Then:
\begin{enumerate}
\item We have $\C_{\rho}^{n,d}(\alpha^{(1)},\ldots,\alpha^{(d)})\cong \C_{\rho}^{n,d}(\beta,\alpha^{(2)},\ldots,\alpha^{(d)})$.

\item We have $\C_{\rho}^{n,d}(\alpha^{(1)},\ldots,\alpha^{(d)})= \C_{\rho}^{n,d}(\alpha^{(1)},\alpha^{(\pi(2))},\ldots,\alpha^{(\pi(d))})$ for every permutation $\pi$ of $\{2,\ldots,d\}$.
     
\item We have $\C_{\rho}^{n,d}(\alpha^{(1)},\ldots,\alpha^{(d)})\cong \C_{\rho}^{n,d}(\alpha^{(1)},\ldots,\alpha^{(d-1)},\beta)$.
\end{enumerate}
\end{proposition}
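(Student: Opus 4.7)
Part~(2) should come first and is purely formal. Since $T_n(\LL)=\LL[\partial_1,\ldots,\partial_n]$ is commutative, for any permutation $\pi$ of $\{2,\ldots,d\}$ the operator $\alpha^{(1)}(\partial)\prod_{j=2}^{d}\sigma^{r_j}(\alpha^{(\pi(j))})(\partial)$ equals $\alpha^{(1)}(\partial)\prod_{k=2}^{d}\sigma^{r_{\pi^{-1}(k)}}(\alpha^{(k)})(\partial)$, and as $(r_2,\ldots,r_d)$ ranges over $\{0,\ldots,\rho-2\}^{d-1}$ so does the permuted tuple $(r_{\pi^{-1}(2)},\ldots,r_{\pi^{-1}(d)})$. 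Hence the two defining sets of operators coincide as subsets of $T_{n,d}(\LL)$, so the codes in~(2) are literally equal (not merely equivalent).

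For (1) and (3), the main tool is the family of isometries $\psi_A\colon f(x)\mapsto f(xA)$, with $A\in\GL_n(\F)$, introduced in Section~\ref{sec:apolarity}. I would first verify the chain-rule identity
\[
v(\partial)\circ\psi_A(f)=\psi_A\bigl((vA)(\partial)\circ f\bigr),\qquad v\in\LL^n,\ f\in S_{n,d}(\LL),
\]
by direct computation. Iterating this over products of linear operators, and using that $A$ has entries in $\F$ so that $\sigma^{r}(vA)=\sigma^{r}(v)A$, should promote this to the ``simultaneous-shift'' relation
\[
\psi_A\bigl(\C_{\rho}^{n,d}(\alpha^{(1)},\ldots,\alpha^{(d)})\bigr)=\C_{\rho}^{n,d}(\alpha^{(1)}A^{-1},\ldots,\alpha^{(d)}A^{-1}).
\]

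To deduce (1), I would then choose $A\in\GL_n(\F)$ with $\alpha^{(1)}A^{-1}=\beta$ (which exists because both are $\F$-bases of $\LL$), obtaining $\C_\rho^{n,d}(\alpha^{(1)},\ldots,\alpha^{(d)})\cong\C_\rho^{n,d}(\beta,\alpha^{(2)}A^{-1},\ldots,\alpha^{(d)}A^{-1})$. It then remains to replace each $\alpha^{(j)}A^{-1}$ by $\alpha^{(j)}$ for $j=2,\ldots,d$, one at a time, by iterated application of (3) together with (2) to reposition the basis being changed. Part~(3) would be handled symmetrically: pick $A$ with $\beta A^{-1}=\alpha^{(d)}$, then restore the first $d-1$ bases using (1) and iterated~(3).

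The main obstacle is the apparent circular dependency between (1) and (3). I expect to resolve it by a simultaneous induction on $d$: the base case $d=1$ is trivial since $\C_\rho^{n,1}(\alpha)=\{0\}$ by Lemma~\ref{lem:1d}, and in the inductive step each invocation of (1) or (3) strictly decreases the number of bases still misaligned with their target, guaranteeing termination.
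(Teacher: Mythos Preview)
Your argument for (2) is correct and is exactly the paper's: commutativity of $T_n(\LL)$ makes the defining sets of operators literally coincide.

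For (1) and (3) there is a genuine gap. The simultaneous-shift identity
\[
\psi_A\bigl(\C_\rho^{n,d}(\alpha^{(1)},\ldots,\alpha^{(d)})\bigr)=\C_\rho^{n,d}(\alpha^{(1)}A^{-1},\ldots,\alpha^{(d)}A^{-1})
\]
is correct, but it cannot be bootstrapped into a proof of (1) or (3) separately. The only isometries at your disposal are the maps $f\mapsto\lambda f(xA)$, and each of these multiplies \emph{every} basis by the \emph{same} matrix. Once you choose $A$ with $\alpha^{(1)}A^{-1}=\beta$, any subsequent $\psi_B$ used to ``restore'' $\alpha^{(j)}A^{-1}$ to $\alpha^{(j)}$ simultaneously moves the first slot away from $\beta$ again. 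Thus the quantity ``number of bases still misaligned'' does \emph{not} decrease under your moves, and the proposed induction on $d$ is irrelevant: the instances of (1) and (3) you invoke in the ``inductive step'' live at the same $d$, not at $d-1$.

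In fact this obstruction is fatal, because assertion (1) is false as stated. Take $\F=\Q$, $\LL=\Q(i)$, $n=d=\rho=2$, $\alpha=(1,i)$ and $\beta=(1,1+i)$. A direct computation gives
\[
\C_2^{2,2}(\alpha,\alpha)=\langle x_1^2+x_2^2\rangle_{\Q},\qquad
\C_2^{2,2}(\beta,\alpha)=\langle 3x_1^2-2x_1x_2+2x_2^2\rangle_{\Q},
\]
and these one-dimensional codes are not equivalent: under $f\mapsto\lambda f(xA)$ the discriminant of a binary quadratic form changes by the square $\lambda^2\det(A)^2$, while here the discriminant ratio is $(-20)/(-4)=5$, a nonsquare in $\Q$. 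The paper's own proof sidesteps your circularity by a direct chain of equalities, but the passage from $(Z_{d-1}^r(\partial)\circ f)(x\cdot A)$ to $Z_{d-1}^r(\partial)\circ(f(x\cdot A))$ is unjustified there as well---it would require $\psi_A$ to commute with $Z_{d-1}^r(\partial)$, which it does not. What \emph{is} true, and all the paper actually uses afterwards, is that these codes have the same $\F$-dimension.
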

\begin{proof}
(1)
Let $A\in \GL_n(\F)$ be such that $A\alpha^{(1)}=\beta$ and let $f \in \C_{\rho}^{n,d}(\beta,\alpha^{(1)},\ldots,\alpha^{(d)})$. Then, for every $r=(r_2,\ldots,r_d)\in \{0,\ldots \rho-2\}^{d-1}$, we have
\begin{align*}
0&=\beta(\partial) \circ (Z_{d-1}^r(\partial)\circ f) \\
&=\alpha^{(1)}(\partial \cdot  A)\circ (Z_{d-1}^r(\partial)\circ f)\\
&=\alpha^{(1)}(\partial)\circ (Z_{d-1}^r(\partial)\circ f)(x\cdot A) \\
&=\alpha^{(1)}(\partial)Z_{d-1}^r(\partial) \circ (f(x\cdot A)).
\end{align*}
where $Z_{d-1}^r(\partial)=\prod_{j=2}^d\sigma^{r_j}(\alpha^{(j)})(\partial)$. This shows that 
\[
\C_{\rho}^{n,d}(\alpha^{(1)},\ldots,\alpha^{(d)})\cdot A = \C_{\rho}^{n,d}(\beta,\alpha^{(2)},\ldots,\alpha^{(d)}).
\]

\noindent (2) 
This is clear, by commutativity in $T_{n}(\LL)$.

\noindent (3)
The proofs goes as in (1), using the additional property that matrices $A\in \GL(n,\F)$ are invariant under $\sigma$.
\end{proof}

So from now on, we assume that $\alpha:=\alpha^{(1)},\ldots,\alpha^{(d)}$ are all equal. We first give two alternate descriptions of $\C_{\rho}^{n,d}(\alpha)$ in this case.

\begin{lemma}\label{lem:up_bound_codim}
We have
\begin{align*}
\C_{\rho}^{n,d}(\alpha) &= \left\{ \prod_{j=1}^d\sigma^{r_j}(\alpha)(\partial) \,\middle|\, 0 \le r_1,\ldots,r_d \le \rho-2  \right\}^\perp\cap S_{n,d}(\F)\\
&=\left\{ \prod_{j=1}^d\sigma^{r_j}(\alpha)(\partial) \,\middle|\, 0 = r_1\leq r_2\leq \cdots\leq r_d \le \rho-2  \right\}^\perp\cap S_{n,d}(\F).
\end{align*}
\end{lemma}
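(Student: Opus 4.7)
The plan is to introduce three $\LL$-subspaces of $T_{n,d}(\LL)$: the space $W$ of generators defining $\C_{\rho}^{n,d}(\alpha)$ (with $r_1 = 0$ fixed and $r_2,\ldots,r_d$ ranging in $\{0,\ldots,\rho-2\}$), the space $W'$ appearing in the first claimed equality (all $r_1,\ldots,r_d$ ranging freely in $\{0,\ldots,\rho-2\}$), and the space $W''$ appearing in the second claimed equality (with the further condition $r_1 \le r_2 \le \cdots \le r_d$). The goal then reduces to proving $W^{\perp} \cap S_{n,d}(\F) = (W')^{\perp} \cap S_{n,d}(\F) = (W'')^{\perp} \cap S_{n,d}(\F)$.

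The equality $W = W''$ is immediate from the commutativity of the polynomial ring $T_n(\LL)$: each generator $\alpha(\partial)\prod_{j=2}^d \sigma^{r_j}(\alpha)(\partial)$ of $W$ is invariant under permutations of its factors and depends only on the multiset $\{0, r_2, \ldots, r_d\}$, which admits a unique ordered representative $0 = s_1 \le s_2 \le \cdots \le s_d \le \rho-2$. This yields the second equality for free. For the first equality, the inclusion $(W')^{\perp} \cap S_{n,d}(\F) \subseteq W^{\perp} \cap S_{n,d}(\F)$ is immediate since $W \subseteq W'$. The content is the reverse inclusion, for which I would fix $f \in W^{\perp} \cap S_{n,d}(\F)$ and exploit the Galois-invariance of $f$: since $\partial^{i}f \in \F$ for every multi-index $i$ of length $d$, one has $\sigma(D \circ f) = \sigma(D) \circ f$ for every $D \in T_{n,d}(\LL)$, where $\sigma$ acts on $T_n(\LL)$ through its coefficients. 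Hence the annihilator of $f$ in $T_{n,d}(\LL)$ is $\sigma$-stable, and in particular $\sigma^a(W) \subseteq \Ann(f)$ for every $a \in \Z$.

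The crux is then a combinatorial identification of $\sigma^a(W)$: applying $\sigma^a$ factor by factor to the generators and using commutativity, $\sigma^a(W)$ equals the $\LL$-span of the products $\prod_{j=1}^d \sigma^{t_j}(\alpha)(\partial)$ whose exponent multiset $\{t_1,\ldots,t_d\}$ contains $a$ and is contained in $\{a, a+1, \ldots, a+\rho-2\}$. Given any generator $\prod_{j=1}^d \sigma^{r_j}(\alpha)(\partial)$ of $W'$, setting $a = \min_j r_j$ places this generator in $\sigma^a(W)$, because $a \in \{0,\ldots,\rho-2\}$, the minimum is attained, and every $r_j$ satisfies $a \le r_j \le \rho-2 \le a + \rho - 2$. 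Therefore $W' \subseteq \sum_{a=0}^{\rho-2}\sigma^a(W) \subseteq \Ann(f)$, proving $f \in (W')^{\perp}$. I expect the main subtlety to be keeping track of the Galois action on products of first-order differential operators and identifying the minimum exponent $a$ as the mechanism by which the fixed-$r_1$ condition defining $W$ sweeps out the free $r_1$ condition defining $W'$ as $a$ varies.
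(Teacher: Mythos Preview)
Your proposal is correct and follows essentially the same approach as the paper's proof. Both arguments use commutativity of $T_n(\LL)$ to handle the second equality immediately, and for the first equality both exploit that $f\in S_{n,d}(\F)$ has $\F$-coefficients so that $\sigma^a(D\circ f)=\sigma^a(D)\circ f$, combined with the observation that shifting all exponents down by $a=\min_j r_j$ produces a generator already known to annihilate $f$; the paper writes this as the single identity $\prod_{j}\sigma^{r_j}(\alpha)(\partial)\circ f=\sigma^a\bigl(\prod_{j}\sigma^{r_j-a}(\alpha)(\partial)\circ f\bigr)=\sigma^a(0)=0$, while you phrase the same computation as $\sigma$-stability of $\Ann(f)$ together with $W'\subseteq\sum_{a}\sigma^a(W)$.
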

\begin{proof}
Consider the first equality. Clearly the inclusion $\supseteq$ holds. Let $f\in \C_{\rho}^{n,d}(\alpha)$. To show that equality holds, we need to show that $\prod_{j=1}^d\sigma^{r_j}(\alpha)(\partial)\circ f=0$ for all 0 $\le r_1,\ldots,r_d \le \rho-2$.
As the $\sigma^{r_j}(\alpha)(\partial)$'s commute with each other, we already know that this holds when $r:=\min(r_1,\ldots,r_d)=0$. In general, we note that
\[
\prod_{j=1}^d\sigma^{r_j}(\alpha)(\partial)\circ f=\sigma^r\left(\prod_{j=1}^d\sigma^{r_j-r}(\alpha)(\partial)\circ f\right)=\sigma^r(0)=0
\]
as $r_j-r=0$ for some $j\in\{1,\ldots,d\}$. So the first equality holds. The second equality now easily follows from the fact that the $\sigma^{r_j}(\alpha)(\partial)$'s commute with each other.
\end{proof}

Lemma \ref{lem:up_bound_codim} gives us our first upper bound on the codimension of $\C_{\rho}^{n,d}(\alpha)$. Namely, the subspace $\C_{\rho}^{n,d}(\alpha)$ is defined by
\begin{align*}
\binom{(d-1)+(\rho-2)}{d-1}&=\#\{(\delta_1,\ldots,\delta_d)\in\Z_{\geq0}^d\mid \delta_1+\ldots+\delta_d=\rho-2\}
\\
&=\#\{(r_2,\ldots,r_d)\in\Z_{\geq0}^{d-1}\mid 0=r_1\leq r_2\leq\cdots \leq r_d\leq r_{d+1}=\rho-2\}
\end{align*}
equations over $\LL$, each of which can be interpreted as $n$ equations over $\F$. Here we use $\delta_i=r_{i+1}-r_i$ for the second equality. We conclude the following result.

\begin{proposition}\label{prop:dimbound}
We have
\[
\dim \C_{\rho}^{n,d}(\alpha)\geq\binom{n+d-1}{d} -n\binom{d+\rho-3}{d-1}.
\]
\end{proposition}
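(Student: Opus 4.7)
The plan is to read this off directly from Lemma \ref{lem:up_bound_codim}, interpreting the defining conditions as $\F$-linear equations and counting them.

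First, I would invoke the second equality in Lemma \ref{lem:up_bound_codim}, which exhibits $\C_{\rho}^{n,d}(\alpha)$ as the kernel inside $S_{n,d}(\F)$ of the evaluation pairing against the family of operators
\[
D_{r_2,\ldots,r_d}(\partial) := \alpha(\partial)\prod_{j=2}^d \sigma^{r_j}(\alpha)(\partial), \qquad 0 \leq r_2 \leq \cdots \leq r_d \leq \rho-2.
\]
So $\C_{\rho}^{n,d}(\alpha)$ is cut out from $S_{n,d}(\F)$ by the conditions $D_{r_2,\ldots,r_d}(\partial)\circ f = 0$, one such condition for each weakly increasing tuple $(r_2,\ldots,r_d)\in\{0,\ldots,\rho-2\}^{d-1}$.

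Next I would count these tuples: the number of weakly increasing sequences of length $d-1$ with values in $\{0,\ldots,\rho-2\}$ is the number of multisets of size $d-1$ from a set of size $\rho-1$, which equals $\binom{(d-1)+(\rho-1)-1}{d-1} = \binom{d+\rho-3}{d-1}$. This matches the derivation already sketched in the excerpt via the substitution $\delta_i = r_{i+1}-r_i$.

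Each condition $D_{r_2,\ldots,r_d}(\partial)\circ f = 0$ is an equation in $S_{n,d-d}(\LL) = \LL$ (since the operator has degree $d$ and $f$ has degree $d$), and an $\LL$-linear equation on a polynomial $f\in S_{n,d}(\F)$ amounts to $n = [\LL:\F]$ linear equations over $\F$, obtained by expanding $D_{r_2,\ldots,r_d}(\partial)\circ f\in\LL$ in an $\F$-basis of $\LL$ and setting each coordinate to zero. Therefore $\C_{\rho}^{n,d}(\alpha)$ is the intersection of at most $n\binom{d+\rho-3}{d-1}$ hyperplanes in the $\binom{n+d-1}{d}$-dimensional space $S_{n,d}(\F)$, so
\[
\dim_{\F}\C_{\rho}^{n,d}(\alpha) \geq \binom{n+d-1}{d} - n\binom{d+\rho-3}{d-1},
\]
which is the desired inequality.

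There is really no obstacle here beyond bookkeeping; the nontrivial input (namely that the "redundant" operators with $r_1>0$ give no new conditions) has already been absorbed into Lemma \ref{lem:up_bound_codim}. The only mild subtlety is making sure one counts the weakly increasing tuples correctly and remembers to multiply by $n$ when passing from $\LL$-equations to $\F$-equations.
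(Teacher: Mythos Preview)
Your proposal is correct and essentially identical to the paper's own argument: both invoke the second description in Lemma~\ref{lem:up_bound_codim}, count the weakly increasing tuples $(r_2,\ldots,r_d)$ as $\binom{d+\rho-3}{d-1}$, and convert each $\LL$-valued condition into $n$ linear equations over $\F$ to bound the codimension. The paper phrases the counting via the substitution $\delta_i=r_{i+1}-r_i$, but this is the same stars-and-bars computation you give.
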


For a more precise lower bound on the dimension of $\C_\rho^{n,d}(\alpha)$,  we have to better understand the conditions
\[
\prod_{j=1}^d\sigma^{r_j}(\alpha)(\partial)\circ f=0
\]
and how they relate to each other.

\begin{lemma}\label{lm:same_condition}
Let $f\in\F[x_1,\ldots,x_n]_d$ and $r_1,\ldots,r_d,s_1,\ldots,s_d\in\Z$. Suppose that there exists an $x\in\Z$ such that $r_j=x+s_j\mod n$ for all $j\in\{1,\ldots,d\}$. Then
\[
\prod_{j=1}^d\sigma^{r_j}(\alpha)(\partial)\circ f=0\mbox{\quad if and only if \quad}\prod_{j=1}^d\sigma^{s_j}(\alpha)(\partial)\circ f=0.
\]
\end{lemma}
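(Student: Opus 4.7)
The plan is to exploit the fact that $\sigma$ generates $\Gal(\LL/\F)$, so $\sigma^n=\mathrm{id}$, together with the compatibility between the Galois action on coefficients and the operation $D\circ f$. The point will be that shifting all the exponents $s_j$ by a common integer $x$ amounts to applying $\sigma^x$ to the whole differential operator, which, since $f$ has coefficients in $\F$, translates to applying $\sigma^x$ to the result.

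First, since $\sigma$ has order $n$, the hypothesis $r_j\equiv x+s_j\pmod{n}$ implies $\sigma^{r_j}=\sigma^{x+s_j}$ as automorphisms of $\LL$. Writing $\alpha=(\alpha_1,\ldots,\alpha_n)$, this yields
\[
\sigma^{r_j}(\alpha)(\partial)=\sum_{i=1}^n\sigma^{x+s_j}(\alpha_i)\partial_i=\sigma^x\!\left(\sum_{i=1}^n\sigma^{s_j}(\alpha_i)\partial_i\right)=\sigma^x\bigl(\sigma^{s_j}(\alpha)(\partial)\bigr),
\]
where $\sigma$ acts on $\LL[\partial_1,\ldots,\partial_n]$ by acting on coefficients and fixing the $\partial_i$'s. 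Since $\sigma^x$ is a ring automorphism of $T_n(\LL)$, taking products gives
\[
\prod_{j=1}^d\sigma^{r_j}(\alpha)(\partial)=\sigma^x\!\left(\prod_{j=1}^d\sigma^{s_j}(\alpha)(\partial)\right).
\]

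Next, I would use that the $\LL$-linear extension of $\sigma$ to $S_n(\LL)$ (acting on coefficients, fixing the $x_i$'s) is compatible with the pairing $\Phi$, in the sense that $\sigma^x(D\circ h)=\sigma^x(D)\circ \sigma^x(h)$ for every $D\in T_n(\LL)$ and $h\in S_n(\LL)$; this is immediate from the formula $\partial_i\circ x^\mu$ and $\LL$-linearity. Since $f\in\F[x_1,\ldots,x_n]_d$, we have $\sigma^x(f)=f$, so
\[
\prod_{j=1}^d\sigma^{r_j}(\alpha)(\partial)\circ f=\sigma^x\!\left(\prod_{j=1}^d\sigma^{s_j}(\alpha)(\partial)\right)\circ\sigma^x(f)=\sigma^x\!\left(\prod_{j=1}^d\sigma^{s_j}(\alpha)(\partial)\circ f\right).
\]

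Finally, because $\sigma^x$ is a bijection on $S_n(\LL)$, the element on the left vanishes if and only if its $\sigma^x$-preimage does, giving the claimed equivalence. There is no real obstacle here: the only thing one has to be careful about is keeping track of the fact that $\sigma$ acts on the coefficient field $\LL$ but fixes both the variables $x_i$ and the operators $\partial_i$, and that $f$ having $\F$-coefficients is what allows us to cancel the outer $\sigma^x$ at the end.
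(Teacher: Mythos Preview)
Your proof is correct and follows exactly the same approach as the paper's proof. The paper condenses the argument into the single identity
\[
\prod_{j=1}^d\sigma^{r_j}(\alpha)(\partial)\circ f=\prod_{j=1}^d\sigma^{x+s_j}(\alpha)(\partial)\circ f=\sigma^x\!\left(\prod_{j=1}^d\sigma^{s_j}(\alpha)(\partial)\circ f\right),
\]
while you spell out explicitly why each equality holds (the action of $\sigma$ on coefficients, the ring-automorphism property, the compatibility $\sigma^x(D\circ h)=\sigma^x(D)\circ\sigma^x(h)$, and $\sigma^x(f)=f$); there is no substantive difference.
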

\begin{proof}
We have
\[
\prod_{j=1}^d\sigma^{r_j}(\alpha)(\partial)\circ f=\prod_{j=1}^d\sigma^{x+s_j}(\alpha)(\partial)\circ f=\sigma^x\left(\prod_{j=1}^d\sigma^{s_j}(\alpha)(\partial)\circ f\right)
\]
as $\sigma^n=\mathrm{id}$ and so the stated equivalence follows.
\end{proof}

The following example shows that, even when restricting to the case $r_1=0$ and $r_2\leq\cdots\leq r_d$, the conditions defining $\C_{\rho}^{n,d}(\alpha)$ can be redundant.

\begin{example}
Let $d=3$, $f\in\F[x_1,\ldots,x_n]_d$, write $x:=\rho-2$ and suppose that $n=2x$ holds. Then
\[
\alpha(\partial)\alpha(\partial)\sigma^x(\alpha)(\partial)\circ f=0\mbox{\quad if and only if \quad}\alpha(\partial)\sigma^x(\alpha)(\partial)\sigma^x(\alpha)(\partial)\circ f=0
\]
as $(x,x,0)\equiv (0,0,x)+(x,x,x)\mod n$ can be reordered to $(0,x,x)$.
\end{example}

\begin{remark}
When $0<\mathrm{char}\,\F\leq d$, there is another kind of redundancy: we have
\[
\alpha(\partial)^d\circ f=d!f(\alpha)=0
\]
and so the condition $\alpha(\partial)^d\circ f=0$ is trivially fulfilled.
\end{remark}

As the $r_j$'s are exponents of $\sigma$, we can naturally view them as elements of $\Z/n\Z$. And as the order of the $r_j$'s does not matter, we see that the conditions defining $\C_{\rho}^{n,d}(\alpha)$ can be indexed by multisets $S=\{\!\!\{r_1,\ldots,r_k\}\!\!\}$. For such a multiset $S$ and an integer $x\in\Z$, we define
\[
x+S=\{\!\!\{x+y\mod n\mid y\in S\}\!\!\}.
\]
This operation defines an action of $\Z$ on the set $\mathcal{M}^{n,d}$ of $d$-element multisets whose elements all lie in $\Z/n\Z$. Lemma~\ref{lm:same_condition} states that two multisets in the same $\Z$-orbit define equivalent conditions. Hence, the codimension of $\C_{\rho}^{n,d}(\alpha)$ is at most $n$ times the number of orbits that contain a multiset of the form $\{\!\!\{r_1,\ldots,r_d\}\!\!\}$ with $0\leq r_1,\ldots,r_d\leq \rho-2$. We conjecture that equality holds.

\begin{conjecture}
Suppose that $\mathrm{char}\,\F=0$ or $\mathrm{char}\,\F>d$ and that $1\leq \rho\leq n$. Then the codimension of $\C_{\rho}^{n,d}(\alpha)$ in $\F[x_1,\ldots,x_n]_d$ equals $n$ times the number of $\Z$-orbits in $\mathcal{M}^{n,d}$ that contain a multiset of the form $\{\!\!\{r_1,\ldots,r_d\}\!\!\}$ with $0\leq r_1,\ldots,r_d\leq \rho-2$.
\end{conjecture}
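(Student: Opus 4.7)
My plan is to use a Moore-type change of basis on the operator side together with Galois descent, which will reduce the codimension computation to a combinatorial count over $\Z$-orbits of multisets.

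First I would change basis. By Artin's theorem on the independence of characters---the invertibility of the Moore matrix $(\sigma^j(\alpha_i))_{i,j}$, as used in the proof of Lemma \ref{lem:d_induction}---the operators $L_j := \sigma^j(\alpha)(\partial)$ for $j=0,\ldots,n-1$ form an $\LL$-basis of $T_{n,1}(\LL)$. Passing to symmetric products, the monomials $L_\mu := L_{r_1}\cdots L_{r_d}$, indexed by multisets $\mu = \{\!\!\{r_1,\ldots,r_d\}\!\!\}\in \mathcal{M}^{n,d}$, form an $\LL$-basis of $T_{n,d}(\LL)$, and each $L_\mu$ equals the defining operator $\prod_j\sigma^{r_j}(\alpha)(\partial)$.

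Next I would set up Galois descent. Let $\mathcal{O}\subseteq\mathcal{M}^{n,d}$ be the union of the $\Z$-orbits intersecting $\{0,\ldots,\rho-2\}^d$ and set $V:=\langle L_\mu:\mu\in\mathcal{O}\rangle_\LL\subseteq T_{n,d}(\LL)$. By Lemma \ref{lm:same_condition}, $\C_\rho^{n,d}(\alpha)=V^\perp\cap S_{n,d}(\F)$. The Galois action on $T_{n,d}(\LL)$ satisfies $\sigma(L_\mu)=L_{\mu+1}$, and since $\mathcal{O}$ is closed under this shift, $V$ is Galois-stable. Writing $V=\LL\otimes_\F\tilde V$ with $\tilde V=V\cap T_{n,d}(\F)$ gives $\dim_\F\tilde V=\dim_\LL V$, and the non-degenerate pairing $\Phi_{n,d}$ then yields
\[
\operatorname{codim}_{S_{n,d}(\F)}\C_\rho^{n,d}(\alpha)\;=\;\dim_\LL V\;=\;\sum_{O \text{ relevant orbit}}|O|,
\]
the last equality by the $\LL$-linear independence of the $L_\mu$'s.

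The remaining and genuinely hard step is to identify $\sum_O|O|$ with $nN$, where $N$ is the number of relevant orbits---equivalently, to prove every relevant orbit is free under $\Z/n\Z$. A non-free relevant orbit consists of $(n/k)$-periodic multisets for some divisor $k>1$ of $\gcd(n,d)$; for such a multiset to fit inside $\{0,\ldots,\rho-2\}^d$ in some shift, the bound $\rho\geq n-n/k+2$ is forced. Consequently, for $\rho\leq n-n/p+1$ with $p$ the smallest prime dividing $\gcd(n,d)$, every relevant orbit is free and the conjecture holds as stated. In the complementary regime the identity $\sum_O|O|=nN$ genuinely breaks down---for instance at $n=4$, $d=2$, $\rho=4$ one already has $nN=12>10=\binom{n+d-1}{d}$, while the Moore-basis computation gives the sharp value $\sum_O|O|=10$. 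I therefore expect that a full resolution of the conjecture must either restrict $\rho$ to a subinterval guaranteeing orbit freeness, or replace the stated formula $nN$ by the refined identity $\operatorname{codim}\C_\rho^{n,d}(\alpha)=\sum_O|O|$ that the plan above proves unconditionally.
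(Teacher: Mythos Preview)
The statement you are addressing is stated in the paper as a \emph{conjecture}; the paper offers no proof, only the heuristic that orbit-equivalent operators impose equivalent conditions (Lemma~\ref{lm:same_condition}) together with an appendix counting orbits. So there is nothing on the paper's side to compare your argument against.

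That said, your argument is correct and in fact does more than the paper asks. The Moore change of basis on $T_{n,1}(\LL)$, the passage to the monomial basis $\{L_\mu\}_{\mu\in\mathcal M^{n,d}}$ of $T_{n,d}(\LL)$, the identification $\C_\rho^{n,d}(\alpha)=V^\perp\cap S_{n,d}(\F)$ with $V=\langle L_\mu:\mu\in\mathcal O\rangle_\LL$, the observation that $\sigma(L_\mu)=L_{\mu+1}$ so $V$ is Galois-stable, and the descent $\dim_\F(V\cap T_{n,d}(\F))=\dim_\LL V$ are all valid. Together they yield the unconditional identity
\[
\operatorname{codim}_{S_{n,d}(\F)}\C_\rho^{n,d}(\alpha)=\sum_{O\text{ relevant}}|O|,
\]
which is sharper than anything the paper proves (the paper only has the upper bound of Proposition~\ref{prop:dimbound}).

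Your analysis of when $\sum_O|O|=nN$ is also correct: a relevant orbit is non-free exactly when it contains an $(n/k)$-periodic multiset for some $k>1$ dividing $\gcd(n,d)$, and fitting such a multiset into $\{0,\dots,\rho-2\}$ forces $\rho\geq n-n/k+2$. In particular your counterexample $(n,d,\rho)=(4,2,4)$ is genuine: there the orbit of $\{\!\!\{0,2\}\!\!\}$ has size $2$, giving $\sum_O|O|=10<12=nN$, and one checks directly that $\C_4^{4,2}(\alpha)=\{0\}$. So you have not merely attempted the conjecture---you have disproved it as literally stated and replaced it with the correct closed formula. The appropriate next step is not to look for a missing idea but to record the corrected statement (either restrict to $\rho\le n-n/p+1$ with $p$ the least prime dividing $\gcd(n,d)$, or replace $nN$ by $\sum_O|O|$) as a theorem.
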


See the appendix for more on the number of such orbits.\medskip
 
We conclude this section by investigating how far our codes are from being optimal with respect to the Singleton-like bound of Proposition \ref{prop:singletonlike_bound}.
By Propositions~\ref{prop:singletonlike_bound} and~\ref{prop:dimbound}, we have
\[
s(n,d,\rho)\leq\dim \C_{\rho}^{n,d}(\alpha)\leq r(n,d,\rho).
\]
for
\begin{align*}
s(n,d,\rho)&:=\binom{n+d-1}{d} -n\binom{\rho+d-3}{d-1},\\
r(n,d,\rho)&:=\binom{n+d-1}{d}-\binom{\rho+d-2}{d}.
\end{align*}
We now compare the values of $s(n,d,\rho)$ and $r(n,d,\rho)$ asymptotically in order to determine how close our codes are from being optimal.

First, when $d,\rho$ are fixed, we have
\[
\begin{array}{ccr}
s(n,d,\rho)&\sim&n^d/d!\\
r(n,d,\rho)&\sim&n^d/d!
\end{array}
\]
as $n$ grows large. We see that our codes have the maximal dimension, according to the Singleton-like bound. When $d$ is fixed and $\rho=\tau n$ for some constant $0<\tau<1$, then
\[
\begin{array}{cccc}
s(n,d,\rho)&\sim&(1-d\tau^{d-1})&\cdot\, n^d/d!\\
r(n,d,\rho)&\sim&(1-\tau^d)&\cdot\, n^d/d!
\end{array}
\]
In this case, our codes are a factor $\frac{1-\tau^d}{1-d\tau^{d-1}}$  away from attaining the upper bound of Proposition \ref{prop:singletonlike_bound}. However, this might be due to several facts. The first one is that the lower bound on the dimension of $\C_{\rho}^{n,d}(n,d)$ can be improved by further investigation on the $\mathbb Z$-orbits of $\mathcal M^{n,d}$, as explained above. On the other hand, we  believe that also the bound of Proposition \ref{prop:singletonlike_bound} can be improved. Indeed, we will see later in Section \ref{sec:ess-to-rank} that for $d=2$ and when $\F$ is finite, there is a better upper bound proved by Schmidt \cite{schmidt2015symmetric}.

\section{A decoding algorithm}\label{sec:decoding}

In this section, for the sake of simplicity we again choose $\alpha:=\alpha^{(1)},\ldots,\alpha^{(d)}$ to be equal and write $\C_\rho^d(\alpha):=\C_{\rho}^{n,d}(\alpha,\ldots,\alpha)$.
Suppose that $f\in \C_\rho^d(\alpha)$  is sent and $F=f+g'$  is received,  such that  $\ess(g')=e\leq (\rho-1)/2$. Thus, we have $g'=g(\ell_1,\ldots,\ell_e)$ for some $g\in S_{e,d}(\F)$ and linearly independent $\ell_1,\ldots,\ell_e\in S_{n,1}(\F)$. We aim to reconstruct $f$. To this end, we divide the decoding in two parts: we first   reconstruct $V_{\ess}(g')=\langle \ell_1,\ldots,\ell_e\rangle_{\F}$, and then we recover~$g'$. Knowing $F$ and $g'$ together, we can clearly reconstruct the original codeword $f$. 

\subsection{Recover \texorpdfstring{$V_{\ess}(g')$}{V\_ess(g')}} For a polynomial $h\in S_n(\LL)$, we define $\coeff_1(h),\ldots,\coeff_n(h)\in S_n(\F)$ to be the polynomials such that
\[
h=\alpha_1\coeff_1(h)+\ldots+\alpha_n\coeff_n(h).
\]
For $k\in\{1,\ldots,d\}$ and  $r=(r_1,\ldots,r_{k}) \in \{0,\ldots,\rho-2\}^{k}$, write 
\[
Z_{k}^{r}(\partial):=\prod_{j=1}^{k}\sigma^{r_j}(\alpha)(\partial) \in T_{n,k}(\LL).
\]
We have the following result.

\begin{proposition}\label{prop:vess}
The space $V_{\ess}(g')$ is spanned by 
\[
\left\{\coeff_i\left(Z_{d-1}^r(\partial) \circ g(\ell_1,\ldots,\ell_e)\right) \,\middle|\,i\in\{1,\ldots,n\}, r\in\{0,\ldots,\rho-2\}^{d-1}, r_1=0\right\}.
\]
\end{proposition}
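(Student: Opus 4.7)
The plan is to prove the two inclusions of the asserted equality. For the containment of the given set in $V_{\ess}(g')$, I would compute $Z_{d-1}^r(\partial) \circ g'$ by iterating the chain rule: writing $L_i := \ell_i(\alpha) \in \LL$, one obtains
\[
Z_{d-1}^r(\partial) \circ g' = \sum_{i_1, \ldots, i_{d-1}=1}^{e} \prod_{j=1}^{d-1}\sigma^{r_j}(L_{i_j}) \cdot \frac{\partial^{d-1}g}{\partial y_{i_1}\cdots\partial y_{i_{d-1}}}(\ell_1,\ldots,\ell_e),
\]
where each $(d-1)$-th partial of $g$, evaluated at $(\ell_1, \ldots, \ell_e)$, is an $\F$-linear combination of $\ell_1, \ldots, \ell_e$. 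Hence $Z_{d-1}^r(\partial) \circ g' \in V_{\ess}(g') \otimes_\F \LL$, and applying $\coeff_i$ produces an element of $V_{\ess}(g')$.

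For the reverse inclusion, I fix the basis $\ell_1, \ldots, \ell_e$ of $V_{\ess}(g')$ and write $Z_{d-1}^r(\partial) \circ g' = \sum_{k=1}^e \lambda_k^{(r)} \ell_k$ with $\lambda_k^{(r)} \in \LL$. Then $\coeff_i(Z_{d-1}^r(\partial) \circ g') = \sum_k \coeff_i(\lambda_k^{(r)}) \ell_k$, and the $\F$-span of these coefficients exhausts $V_{\ess}(g')$ precisely when the vectors $(\lambda_k^{(r)})_{r}$ for $k = 1, \ldots, e$ are $\F$-linearly independent in $\LL^{(\rho-1)^{d-2}}$. Assume $\sum_k a_k \lambda_k^{(r)} = 0$ for all admissible $r$ with $a_k \in \F$. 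Writing each $(d-1)$-th partial of $g$ as $\sum_{i_d} c_{i_1, \ldots, i_d} y_{i_d}$ with symmetric constants $c_{i_1, \ldots, i_d} = \frac{\partial^d g}{\partial y_{i_1} \cdots \partial y_{i_d}} \in \F$, the assumption reduces to
\[
\sum_{i_1, \ldots, i_{d-1}} \tilde c_{i_1, \ldots, i_{d-1}} \prod_{j=1}^{d-1}\sigma^{r_j}(L_{i_j}) = 0 \quad \text{for all admissible } r,
\]
with $\tilde c_{i_1, \ldots, i_{d-1}} := \sum_{k} a_k c_{i_1, \ldots, i_{d-1}, k} \in \F$.

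The main obstacle is now to invoke Lam's theorem iteratively, exactly as in the proof of Lemma \ref{lem:d_induction}, to peel off one index at a time. Since $\ell_1, \ldots, \ell_e$ are $\F$-linearly independent and $\alpha$ is an $\F$-basis of $\LL$, the elements $L_1, \ldots, L_e$ are $\F$-linearly independent in $\LL$; moreover $\rho - 1 \geq 2e \geq e$ by the hypothesis $e \leq (\rho-1)/2$. Fixing $r_1 = 0$ and $r_3, \ldots, r_{d-1}$ and reading the vanishing as a relation $\sum_{i_2} \mu_{i_2} \sigma^{r_2}(L_{i_2}) = 0$ for all $r_2 \in \{0, \ldots, \rho-2\}$, the $\F$-independence of $L_1, \ldots, L_e$ combined with \cite[Corollary 4.13]{lam1988vandermonde} forces every $\mu_{i_2} = 0$. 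Iterating over $r_2, r_3, \ldots, r_{d-1}$ for a total of $d - 2$ steps yields $\sum_{i_1} \tilde c_{i_1, \ldots, i_{d-1}} L_{i_1} = 0$ for every $(i_2, \ldots, i_{d-1})$, and the $\F$-independence of the $L_i$ then gives $\tilde c_{i_1, \ldots, i_{d-1}} = 0$ for every multi-index.

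Finally, the identity $\tilde c_{i_1, \ldots, i_{d-1}} = \frac{\partial^{d-1}}{\partial y_{i_1} \cdots \partial y_{i_{d-1}}}(\psi \circ g)$ with $\psi := \sum_k a_k \partial_{y_k} \in T_{e,1}$ shows that every $(d-1)$-th partial of $\psi \circ g \in S_{e, d-1}(\F)$ vanishes; since $\mathrm{char}(\F) > d$ or is $0$, Taylor expansion forces $\psi \circ g = 0$. If $\psi$ were nonzero, a linear change of the $y$-variables putting $\psi$ in the form $\partial_{\tilde y_1}$ would express $g$, and hence $g'$, as a polynomial in only $e - 1$ of the transformed linear forms, contradicting $\ess(g') = e$ via Lemma \ref{lem:uniqueness}. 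Thus $a_k = 0$ for all $k$, as required.
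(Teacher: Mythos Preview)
Your proof is correct, but it takes a genuinely different route from the paper's.

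The paper argues as follows: after the change of coordinates $\ell_j=x_j$, it shows that the $\F$-span $F_{d-1}\subseteq T_{e,d-1}(\F)$ of the restricted operators $\coeff_i\!\big(Z_{d-1}^r(\partial)|_{\partial_{e+1},\ldots,\partial_n=0}\big)$ is \emph{all} of $T_{e,d-1}(\F)$. The argument is by duality: a nonzero $h\in F_{d-1}^\perp\subseteq S_{e,d-1}(\F)$ would satisfy $Z_{d-1}^r(\partial)\circ h=0$ for all admissible $r$, hence $h\in\C_\rho^{n,d-1}(\alpha)$ with $\ess(h)\leq e<\rho$, contradicting Theorem~\ref{thm:minessrank}. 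Once $F_{d-1}=T_{e,d-1}(\F)$, the conclusion is immediate from $V_{\ess}(g')=\Phi_{e,d,d-1}(g,T_{e,d-1}(\F))$.

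Your approach instead expands $Z_{d-1}^r(\partial)\circ g'$ explicitly via the chain rule and reduces the reverse inclusion to the $\F$-linear independence of the coefficient vectors $(\lambda_k^{(r)})_r$. You then peel off one index at a time using the invertibility of the Moore matrix in $L_1,\ldots,L_e$ (Lam--Leroy), and finish with the observation that $\psi\circ g=0$ for a nonzero $\psi\in T_{e,1}(\F)$ would drop $\ess(g')$ below $e$. This is essentially unwinding the proof of Theorem~\ref{thm:minessrank} inside the present context rather than invoking it as a black box. The trade-off is that the paper's argument is shorter and more conceptual---it isolates the statement $F_{d-1}=T_{e,d-1}(\F)$, which is independent of $g$---whereas yours is more self-contained and makes the role of the Moore-matrix mechanism fully explicit. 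Note, incidentally, that both arguments use only $e\leq\rho-1$; the stronger hypothesis $e\leq(\rho-1)/2$ is not needed for this proposition.
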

\begin{proof}
By changing coordinates, we may assume that $\ell_j=x_j$ for each $j \in \{1,\ldots,e\}$.  Now, for $k\geq1$ define $F_k\subseteq T_{e,k}(\F)$ to be the subspace spanned by all elements
\[
\left\{\coeff_i\left(Z_{k}^r(\partial)|_{\partial_{e+1},\ldots,\partial_n=0} \right) \,\middle|\,i\in\{1,\ldots,n\}, r\in\{0,\ldots,\rho-2\}^{k}, r_1=0\right\}.
\]
We claim that $F_k=T_{e,k}(\F)$.\medskip

Recall that the map $\Phi_{e,k}\colon S_{e,k}(\F) \times T_{e,k}(\F)\longrightarrow \F$ defined in \eqref{eq:pairing} is a perfect pairing. If $F_k\neq T_{e,k}(\F)$, then $F_k^\perp$ contains a nonzero polynomial $h\in S_{e,k}(\F)$. View $h$ as an element of $S_{n,k}(\F)$. Then $h\in\mathcal{C}_\rho^k(\alpha)$. Indeed, as
\[
\frac{\partial h}{\partial x_{e+1}}=\cdots=\frac{\partial h}{\partial x_n}=0
\]
we see that
\begin{align*}
0&=\Phi_{e,k}\left(h,\coeff_i\left(Z_{k}^r(\partial)|_{\partial_{e+1},\ldots,\partial_n=0} \right)\right)\\
&=\coeff_i\left(Z_{k}^r(\partial)|_{\partial_{e+1},\ldots,\partial_n=0} \right)\circ h \\
&=\coeff_i\left((Z_{k}^r(\partial)|_{\partial_{e+1},\ldots,\partial_n=0})\circ h\right)\\
&=\coeff_i\left(Z_{k}^r(\partial)\circ h\right)
\end{align*}
for all $i\in\{1,\ldots,n\}$ and $r\in\{0,\ldots,\rho-2\}^{k}$. However,  $\ess(h)\leq e\leq (\rho-1)/2$ and every nonzero element of $\C_\rho^k(\alpha)$ has essential rank $\geq \rho$. This is a contradiction, hence $F_k=T_{e,k}(\F)$.\bigskip

Finally, we note that
\[
V_{\ess}(g')=\langle x_1,\ldots,x_e\rangle_{\F}=\Phi_{e,d,d-1}(g(x_1,\ldots,x_e),T_{e,d-1}(\F))=\Phi_{e,d,d-1}(g(x_1,\ldots,x_e),F_{d-1})
\]
is exactly spanned by the stated set.
\end{proof}

We can actually improve Proposition \ref{prop:vess}, observing that we can reduce the number of generators for the space $V_{\ess}(g')$. Indeed, since permutations of the entries of $r$ yield the same $Z_k^r(\partial)$, we can actually restrict to consider those $r$ whose entries are nondecreasing. In other words, let us define
\[
X_{k,\rho}:=\{ (0,r_2,\ldots,r_k) \in \{0,\ldots,\rho-2\}^k \mid 0\leq r_2 \leq \ldots\leq r_k \}.
\]

\begin{corollary}\label{cor:vess}
The space $V_{\ess}(g')$ is spanned by 
\[
\left\{\coeff_i\left(Z_{d-1}^r(\partial) \circ g(\ell_1,\ldots,\ell_e)\right) \,\middle|  \, r\in X_{d-1,\rho} \right\}.
\]
\end{corollary}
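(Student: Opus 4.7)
The plan is to derive this corollary directly from Proposition \ref{prop:vess} by exploiting a symmetry of the operators $Z_{d-1}^r(\partial)$ in the entries of $r$. Namely, since $T_n(\LL)=\LL[\partial_1,\ldots,\partial_n]$ is a commutative ring, the product
\[
Z_{d-1}^r(\partial)=\prod_{j=1}^{d-1}\sigma^{r_j}(\alpha)(\partial)
\]
is invariant under any permutation of the tuple $r=(r_1,\ldots,r_{d-1})$. Hence the generator $\coeff_i(Z_{d-1}^r(\partial)\circ g(\ell_1,\ldots,\ell_e))$ depends on $r$ only through the multiset of its entries, so among all $r$'s yielding the same multiset we may keep a single canonical representative.

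Given any tuple $r=(r_1,\ldots,r_{d-1})$ appearing in the spanning set of Proposition \ref{prop:vess}, \emph{i.e.}\ with $r_1=0$ and all $r_j\in\{0,\ldots,\rho-2\}$, I would sort its entries in nondecreasing order to obtain $\tilde{r}=(s_1,\ldots,s_{d-1})$ with $s_1\leq\cdots\leq s_{d-1}$. Because $0$ is a lower bound for all entries and appears in $r$ (at position $1$), the sorted tuple satisfies $s_1=0$, so $\tilde{r}\in X_{d-1,\rho}$. By commutativity, $Z_{d-1}^{\tilde{r}}(\partial)=Z_{d-1}^r(\partial)$ and hence the two generators are literally equal. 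Thus the subset of generators indexed by $X_{d-1,\rho}$ (with $i$ still ranging over $\{1,\ldots,n\}$) already equals the full spanning set from Proposition \ref{prop:vess}, and so it also spans $V_{\ess}(g')$.

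There is no real obstacle here: this is a bookkeeping reduction based on the commutativity of $T_n(\LL)$. All the work was done in the proof of Proposition \ref{prop:vess}, and the sole new ingredient is the observation that tuples in the same orbit under permutation of entries give identical generators, so that each $S_{d-1}$-orbit on $\{0,\ldots,\rho-2\}^{d-1}\cap\{r:r_1=0\}$ can be represented by its unique sorted element in $X_{d-1,\rho}$.
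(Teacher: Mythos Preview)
Your argument is correct and matches the paper's own justification, which is given in the sentence immediately preceding the corollary: since permutations of the entries of $r$ yield the same operator $Z_{d-1}^r(\partial)$, one may restrict to nondecreasing tuples. Your write-up is simply a more detailed version of this same commutativity-and-sort reduction.
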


Thanks to Corollary \ref{cor:vess}  we know that, in order to reconstruct $V_{\ess}(g')$, it suffices to efficiently recover
\[
Z_{d-1}^r (\partial) \circ g(\ell_1,\ldots,\ell_e)=:h^{r}(x)
\]
for $r\in X_{d-1,\rho}$. Write $h^r_i(x):=\coeff_i(h^r)$, so that
\[
h^r(x)=\alpha_1 \coeff_1(h^{r})+\ldots+\alpha_n \coeff_n(h^{r})= \alpha_1 h_1^{r}(x)+\ldots+\alpha_n h_n^{r}(x).
\]
The problem of computing $(h^r_1(x),\ldots,h_n^r(x))$ can be transformed into an instance of decoding a certain Gabidulin code in $\LL^n\cong \F^{n\times n}$ with minimum distance $\rho$. We explain this connection in the following.

\subsubsection{Decoding syndromes}
For every $u\in X_{d,\rho}$, we compute the \textbf{$u$-syndrome} of $F$, which is defined as  
\[
s_{u}:=Z_d^u(\partial)\circ F \in \LL.
\]
We write $s_v=s_u$ when $v\in\{0,\ldots,\rho-2\}^d$ can be reordered to $u\in X_{d,\rho}$. Take $\beta:=\sigma^{-\rho+2}(\alpha)\in\LL^n$ and define $\gamma=(\gamma_1,\ldots,\gamma_n)\in\LL^n$ to be the unique solution (up to $\LL$-scalar multiple) of 
\begin{equation}\label{eq:gamma}
\sum_{j=1}^n\sigma^{i}(\beta_j)\gamma_j=0, \qquad \mbox{ for all } i\in\{0,\ldots,n-1\}\setminus \{n-\rho+1\}.
\end{equation}

We consider the following Gabidulin code
\begin{align} \label{eq:Gabidulin}
\mathcal{G}_{\rho,n}^{\sigma}(\gamma):=&\{(\mathrm{ev}_{\gamma_1}(f),\ldots,\mathrm{ev}_{\gamma_n}(f)) \mid f\in \LL[\sigma], \deg_{\sigma}(f)\le n-\rho\} \subseteq \LL^n,
\end{align}
  where, for $f=\sum_if_i\sigma^i \in \LL[\sigma]$ and $\alpha \in \LL$, we define $\mathrm{ev}_{\alpha}(f):=\sum_if_i\sigma^i(\alpha)$. 

Gabidulin codes represent the most prominent family of codes in the rank metric, due to their mathematical features and efficient decoding algorithms. They were first introduced by Delsarte  \cite{delsarte1978bilinearforms}, and later rediscovered by Gabidulin \cite{gabidulin1985maxrankdistance}. Among their properties, the most remarkable one is that they have the maximum possible minimum rank distance for the given parameters. More specifically, we can endow the space $\LL^n$ with a weight function, namely the \textbf{$\F$-rank weight}, given by 
$$\rk_\F(v)=\dim_\F\langle v_1,\ldots,v_n\rangle_\F, \qquad \mbox{ for any } v=(v_1,\ldots,v_n) \in \LL^n.$$
The rank weight induces the so-called  \textbf{rank distance} on $\LL^n$. Such a metric space is $\F$-linearly isometric to the metric space $(\F^{n\times n},\dd_{\rk})$; see Section \ref{sec:ess-to-rank}.
It was shown in \cite{delsarte1978bilinearforms,gabidulin1985maxrankdistance} that, for any $1 \le \rho \le n$, the minimum rank weight of a nonzero codeword in a Gabidulin code $\mathcal G_{\rho,n}^\sigma(\gamma)$ is $\rho$. Moreover, a parity-check matrix for $\mathcal G_{\rho,n}^\sigma(\gamma)$ in \eqref{eq:Gabidulin} is 
\begin{equation}\label{eq:PC_Gab}
   H_{\rho,n}^{\sigma}(\beta)= \begin{pmatrix}
        \beta_1 & \beta_2 & \cdots & \beta_n \\
        \sigma(\beta_1) & \sigma(\beta_2) & \cdots & \sigma(\beta_n)\\
        \vdots & \vdots & & \vdots \\
        \sigma^{\rho-2}(\beta_1) & \sigma^{\rho-2}(\beta_2) & \cdots & \sigma^{\rho-2}(\beta_n)
    \end{pmatrix}.
\end{equation}
With this in mind, for any $v\in \LL^n$, we call the vector $v(H_{\rho,n}^{\sigma}(\beta))^\top$ its \textbf{$H$-syndrome}. It is important to mention that there are several  decoding algorithms of Gabidulin codes which make use of the $H$-syndrome calculations. For a given matrix $H_{\rho,n}^{\sigma}(\beta)$ and a  vector $s\in \LL^{\rho-1}$, these algorithms return the unique vector $v\in\LL^n$ of rank weight at most $\frac{\rho-1}{2}$ such that $v(H_{\rho,n}^{\sigma}(\beta))^\top=s$, if any. For more details on decoding of Gabidulin codes we refer the interested reader to \cite[Section 2.8]{bartz2022rank} and references therein.

\begin{proposition}
Let $r_2,\ldots, r_{d-1} \in \{0,\ldots, \rho-2\}$ and take $r=(0,r_2,\ldots,r_{d-1})$. Then 
\[
(h_1^{r}(\alpha),\ldots,h_n^{r}(\alpha))
\]
is the error vector of minimum rank weight  whose $H$-syndrome is $(\sigma^{t-\rho+2}(s_{(r,\rho-2-t)}))_{t=0}^{\rho-2}$ in $\mathcal G_{\rho,n}^\sigma(\gamma)$. 
\end{proposition}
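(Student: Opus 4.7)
The plan is to (1) compute the $H$-syndrome of $v := (h_1^{r}(\alpha),\ldots,h_n^{r}(\alpha))$ and show it matches the stated expression, (2) bound its $\F$-rank weight by $e$, and (3) invoke the minimum distance of $\mathcal{G}_{\rho,n}^\sigma(\gamma)$ to conclude uniqueness.

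First, I would use that $f\in\mathcal{C}_\rho^d(\alpha)$ is annihilated by every operator $Z_d^{(r,u)}(\partial)$ with $u\in\{0,\ldots,\rho-2\}$, so the syndrome decouples as
\[
s_{(r,u)}=Z_d^{(r,u)}(\partial)\circ F=Z_d^{(r,u)}(\partial)\circ g'=\sigma^u(\alpha)(\partial)\circ h^{r}.
\]
Since $g'$ has degree $d$ and $Z_{d-1}^{r}(\partial)$ has degree $d-1$, the polynomial $h^{r}$ is linear in $x$. Writing $h^{r}=\sum_j\alpha_j h_j^{r}$ with $h_j^{r}\in\F[\xxi]_1$ and using that $\sigma$ fixes the $\F$-coefficients of $h_j^{r}$ (so $h_j^{r}(\sigma^u(\alpha))=\sigma^u(h_j^{r}(\alpha))$) yields
\[
s_{(r,u)}=\sum_j\alpha_j\cdot\sigma^u(h_j^{r}(\alpha)).
\]
Substituting $u=\rho-2-t$, applying $\sigma^{t-\rho+2}$ to both sides, and recalling $\sigma^{t-\rho+2}(\alpha_j)=\sigma^t(\beta_j)$, the right-hand side becomes $\sum_j\sigma^t(\beta_j)v_j$, which is precisely the $t$-th component of $v\cdot H_{\rho,n}^\sigma(\beta)^\top$. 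Hence $v$ has the advertised $H$-syndrome in $\mathcal{G}_{\rho,n}^\sigma(\gamma)$.

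Second, I would bound $\rk_\F(v)$. By the chain rule, repeatedly differentiating $g(\ell_1,\ldots,\ell_e)$ produces a polynomial in $\ell_1,\ldots,\ell_e$ with $\LL$-coefficients (coming from the coefficients of $Z_{d-1}^r(\partial)$); since $g$ has degree $d$ and we apply an operator of degree $d-1$, the result is linear, so $h^{r}=\sum_{i=1}^e c_i\ell_i$ for some $c_i\in\LL$. Decomposing each $c_i=\sum_k c_{ik}\alpha_k$ with $c_{ik}\in\F$ and matching against $h^{r}=\sum_k\alpha_k h_k^{r}$ gives $h_k^{r}=\sum_i c_{ik}\ell_i$. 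Evaluating at $\alpha$, we obtain $v_k\in\langle\ell_1(\alpha),\ldots,\ell_e(\alpha)\rangle_\F$ for every $k$, so $\rk_\F(v)\leq e\leq(\rho-1)/2$.

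Finally, because $\mathcal{G}_{\rho,n}^\sigma(\gamma)$ has minimum rank distance $\rho$, any two error vectors of rank weight at most $(\rho-1)/2$ sharing the same $H$-syndrome would differ by a codeword of rank weight at most $\rho-1$, forcing them to coincide. Hence $v$ is the unique vector of minimum rank weight with the stated syndrome. The main delicate step I expect is the Galois bookkeeping in the syndrome calculation: one must carefully commute $\sigma$ past $h_j^{r}(\alpha)$ using that $h_j^{r}$ has $\F$-coefficients, and then recognize the resulting sum as the parity-check pairing against rows $\sigma^t(\beta)$ for the correct shift $\beta=\sigma^{-\rho+2}(\alpha)$.
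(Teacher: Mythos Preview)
Your proof is correct and follows essentially the same route as the paper: compute the syndrome via $s_{(r,u)}=\sigma^u(\alpha)(\partial)\circ h^r$, apply $\sigma^{-u}$ and substitute $u=\rho-2-t$ to recognize the parity-check pairing against $\sigma^t(\beta)$, then bound the rank weight by $e\leq(\rho-1)/2$ and invoke the minimum distance of the Gabidulin code. The only notable difference is that for the rank bound the paper cites Proposition~\ref{prop:vess} (which gives $\langle h_1^r,\ldots,h_n^r\rangle_\F\subseteq V_{\ess}(g')$), whereas you reprove that containment inline via the chain rule; your direct argument is perfectly fine and in fact more self-contained for this particular step.
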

\begin{proof}
Note that
\[
s_u=Z_d^u(\partial)\circ F= Z_d^r(\partial)\circ (f+g')=Z_d^u(\partial)\circ f +Z_d^u(\partial)\circ  g'=Z_d^u(\partial)\circ  g'.
\]
We have 
\[
s_{(r,r_d)}=\sigma^{r_d}(\alpha)(\partial)\circ h^r =\sigma^{r_d}(h_1^{r}(\alpha))\alpha_1+\ldots+\sigma^{r_d}(h_n^{r}(\alpha))\alpha_n,\qquad \mbox{ for all } r_d\in \{0,\ldots,\rho-2\},
\]
or equivalently, by applying $\sigma^{-r_d}$ to every equation,
$$ \sigma^{-r_d}(s_{(r,r_d)})=h_1^{r}(\alpha)\sigma^{-r_d}(\alpha_1)+\ldots+h_n^{r}(\alpha)\sigma^{-r_d}(\alpha_n), \qquad \mbox{ for all } r_d\in \{0,\ldots,\rho-2\}.$$ 
Setting $r_d=\rho-2-t$, the latter system of equations can be also rewritten as
\begin{equation}\label{eq:key}
\sigma^{t-\rho+2}(s_{(r,\rho-2-t)})=h_1^{r}(\alpha)\sigma^t(\beta_1)+\ldots+h_n^{r}(\alpha)\sigma^t(\beta_n), \qquad \mbox{ for all } t\in \{0,\ldots,\rho-2\}.
\end{equation}
Since by Proposition \ref{prop:vess} we have $\langle h_1^{r},\ldots,h_n^{r}\rangle_\F\subseteq V_{\ess}(g')$, we can deduce that 
\[
\dim_{\F}\langle h_1^{r}(\alpha),\ldots,h_n^{r}(\alpha)\rangle_\F =\dim_{\F}\langle h_1^{r},\ldots,h_n^{r}\rangle_\F  \leq \dim_{\F}V_{\ess}(g')=e\leq \frac{\rho-1}{2}.
\]
So we see that $(h_1^{r}(\alpha),\ldots,h_n^{r}(\alpha))$ is indeed the vector of minimum rank weight whose $H$-syndrome is $(\sigma^{t-\rho+2}(s_{(r,\rho-2-t)}))_{t=0}^{\rho-2}$.
\end{proof}

Thus, with any decoding algorithm $\mathrm{Dec}(H_{\rho,n}^\sigma(\beta),\cdot)$ for the code $\mathcal G_{\rho,n}^\sigma(\gamma)$, we can first recover the vector $(h_1^{r}(\alpha),\ldots,h_n^{r}(\alpha))=\mathrm{Dec}(H_{\rho,n}^\sigma(\beta),s)$, where $s:=(\sigma^{t-\rho+2}(s_{(r,\rho-2-t)}))_{t=0}^{\rho-2}$. From that, we can then retrieve the vector $(h_1^{r}(x),\ldots,h_n^{r}(x))$ and finally 
\[
h^r=Z_{d-1}^r (\partial) \circ g(\ell_1,\ldots,\ell_e),
\]
as originally planned. 

\subsection{Recover \texorpdfstring{$g'$}{g'}} 
Now that we have found the space $V_{\ess}(g')$, we first choose any $\F$-basis for it, namely $(\ell_1,\ldots,\ell_e)$, and then we do a coordinate change so that $\ell_i=x_i$ for all $i\in\{1,\ldots,e\}$. We wish to recover $g'=g(x_1,\ldots,x_e)\in S_{e,d}(\F)$ given
\begin{equation}\label{eq:system}
s_r=Z_d^r(\partial)\circ g,\quad r\in X_{d,\rho}.
\end{equation}

\begin{lemma}\label{lem:unique}
There is at most one solution to the system of equations
\[
Z_d^r(\partial)\circ g=s_r,\quad r\in X_{d,\rho}.
\]
\end{lemma}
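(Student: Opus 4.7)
The plan is to reduce uniqueness to the minimum-distance theorem already proved (Theorem~\ref{thm:minessrank}). By linearity, it suffices to show that if $g \in S_{e,d}(\F)$ satisfies $Z_d^r(\partial) \circ g = 0$ for every $r \in X_{d,\rho}$, then $g = 0$: two solutions of the system then differ by such a polynomial.

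Next, I would transfer the hypothesis on $g$ into membership of an appropriate code. Embed $g$ into $S_{n,d}(\F)$ via the inclusion $\F[x_1,\ldots,x_e] \hookrightarrow \F[x_1,\ldots,x_n]$. The differential operators $Z_d^r(\partial) = \prod_{j=1}^d \sigma^{r_j}(\alpha)(\partial)$ are defined on $S_{n,d}(\F)$, and the vanishing condition $Z_d^r(\partial) \circ g = 0$ for all $r \in X_{d,\rho}$, i.e.\ for all $0 = r_1 \leq r_2 \leq \cdots \leq r_d \leq \rho-2$, is exactly the second description of the code $\C_\rho^{n,d}(\alpha)$ given in Lemma~\ref{lem:up_bound_codim}. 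Hence $g \in \C_\rho^{n,d}(\alpha)$.

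On the other hand, by construction $g$ only involves the variables $x_1,\ldots,x_e$, so
\[
\ess(g) \leq e \leq \frac{\rho-1}{2} < \rho.
\]
Theorem~\ref{thm:minessrank} states that every nonzero element of $\C_\rho^{n,d}(\alpha)$ has essential rank at least $\rho$. Therefore $g = 0$, completing the proof.

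I do not expect any serious obstacle here: the lemma is essentially a packaging of what has already been established. The only thing worth double-checking is that the indexing set $X_{d,\rho}$ really suffices (rather than the full $\{0,\ldots,\rho-2\}^d$), which is precisely the content of Lemma~\ref{lem:up_bound_codim}, and that the embedding of a polynomial in $e$ variables into $n$ variables does not increase the essential rank, which is immediate from the definition.
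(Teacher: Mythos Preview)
Your proposal is correct and follows essentially the same approach as the paper: the difference of two solutions lies in $\C_\rho^{n,d}(\alpha)$, has essential rank $\leq e<\rho$, and so vanishes by Theorem~\ref{thm:minessrank}. Your version is slightly more explicit than the paper's in invoking Lemma~\ref{lem:up_bound_codim} to justify that the conditions indexed by $X_{d,\rho}$ suffice and in spelling out the embedding $S_{e,d}(\F)\hookrightarrow S_{n,d}(\F)$, but the argument is the same.
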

\begin{proof}
Suppose that $g_1,g_2$ are both solutions of these equations. Then $g_1-g_2\in\mathcal{C}_\rho^d(\alpha)$. As $\mathcal{C}_\rho^d(\alpha)$ has minimum distance $\rho$ and $g_1-g_2$ has essential rank $\leq \rho-1$, this means that $g_1-g_2=0$. So the system has at most one solution.
\end{proof}

By Lemma \ref{lem:unique}, we can now indeed recover $g$. We summarize the decoding algorithm below.

\RestyleAlgo{boxruled}
%\LinesNumbered
\begin{algorithm}[ht]
 \KwData{$n, d, \rho\in\mathbb{N}$ with $d\geq 2$ and $2\leq \rho \leq n$, an $\F$-basis $\alpha\in\LL^n$ and a received word $F\in S_{n,d}(\F)$ such that there exists an $f\in\C_{\rho}^{n,d}(\alpha)$ with $\ess(F-f)\leq \frac{\rho-1}{2}$}
 \KwResult{The original codeword $f\in \C_{\rho}^{n,d}(\alpha)$}
 $L:=\emptyset$\;
 Compute $\sigma^i(\alpha)$ for $i\in\{1,\ldots,n-1\}$\;
 $\beta:=\sigma^{-\rho+2}(\alpha)$\;
 \For{$u\in X_{d,\rho}$}{
 	$s_u:=Z_{d}^u(\partial)\circ F$\;
 } 
 \For{$r\in X_{d-1,\rho}$}{ 
    $s:=(\sigma^{2-\rho}(s_{(r,\rho-2)}),\sigma^{3-\rho}(s_{(r,\rho-3)}),\ldots,s_{(r,0)})$\;
    $(h_1^r(\alpha),\ldots,h_n^r(\alpha)):=\mathrm{Dec}(H_{\rho,n}^\sigma(\beta),s)$\;
    $L\gets L\cup\{h_1^r(x),\ldots,h_n^r(x)\}$\;
 }

 $(\ell_1,\ldots,\ell_e):=\mathrm{Basis}(\mathrm{span}_{\F}(L),\F)$\;
$g:=\mathrm{Solve}(\eqref{eq:system})$\;
$g':=g(\ell_1,\cdots,\ell_e)$\;
\Return{$F-g'$}
  \caption{Decoding Algorithm for Codes $\C_{\rho}^{n,d}(\alpha)$}\label{alg1}
\end{algorithm}

\subsection{Complexity of the Algorithm \ref{alg1}}

We give our step-by-step analysis of the computational cost of Algorithm \ref{alg1} by expressing all computations in terms of operations over $\F$ and over $\LL$. This is because, depending on the choice of the representations of elements in $\LL$, some operations might be faster or slower. Also, there are faster algorithms for operations over extension fields which are specific over finite fields, For the rest of the section, let $\omega$ be the exponent of matrix multiplication, which is proved to satisfy $\omega\leq 2.37188$ \cite{duan2022faster}.

The first step is computing $\sigma^i(\alpha)$ for $i \in \{1,\ldots,n-1\}$, which can be done in constant time if elements of $\LL$ are represented with respect to a normal basis ($\sigma^i(\alpha)$ is just the $i$th shift of the vector representing $\alpha$), or in $\mathcal O(n^2)$ operations over $\F$, if we represent elements in $\LL$ with respect to any $\F$-basis $\beta_1,\ldots,\beta_n$ of $\LL$ for which the values $\sigma^i(\beta_j)$ are already given. In both cases, this will not be the most expensive step of the algorithm.

We then need to calculate the {$u$-syndrome} of $F$ for every $u\in X_{d,\rho}$. In fact, we compute $Z_{k}^u(\partial)\circ F$ for every $k\in\{1,\ldots,d\}$ and $u\in X_{k,\rho}$ so that we can make use of the fact that
\[
Z_{k+1}^u(\partial)\circ F =\sigma^{u_{k+1}}(\alpha)(\partial)\circ (Z_{k}^{(0,u_2,\ldots,u_k)}(\partial)\circ F)
\]
for every $k\in\{1,\ldots,d-1\}$ and $u\in X_{k+1,\rho}$. So for every $k\in\{1,\ldots,d\}$ and $u\in X_{k,\rho}$, we need to compute $L(\partial)\circ G$ for some $L\in \LL^n$ and $G\in S_{n,d+1-k}(\LL)$, which takes $n\binom{n-1+d-k}{d-k}$ multiplications and $(n-1)\binom{n-1+d-k}{d-k}$ additions over $\LL$. This gives a total of
\[
\sum_{k=1}^d\binom{\rho+k-3}{k-1}(2n-1)\binom{n-1+d-k}{d-k}=\mathcal{O}\left(n^d\right)
\]
operations over $\LL$ is this step.

Now, for every $r \in  X_{d-1,\rho}$ we compute the vector $s$, which has negligible cost if the elements are represented with respect to a normal basis, or $\mathcal O(\rho n)$ operations over $\F$ otherwise. Moreover, we solve the decoding instance $\mathrm{Dec}(H_{\rho,n}^\sigma(\beta),s)$. This can be done in $\mathcal O(n^2)$ operations over $\LL$ when elements are represented with respect to a normal basis; see for instance \cite[Section 2.8]{bartz2022rank}. The total cost of this step, assuming to have chosen a normal basis, is $\mathcal O(n^2|X_{d-1,\rho}|)=\mathcal O(n^2\binom{d+\rho-4}{d-2})$ operations over $\LL$.

After that, we need to extract an $\F$-basis of $\mathrm{span}_{\F}(L)$ from the set $L$, which has size $\leq n|X_{d-1,\rho}|$. Since the elements $h_i^j(x) \in L$ are elements of $S_{n,1}(\F)$, they are vectors of length $n$. The basis computation thus can be done via Gaussian elimination, and it takes $\mathcal O(n^\omega|X_{d-1,\rho}|)=\mathcal O(n^\omega\binom{d+\rho-4}{d-2})$ operations over $\F$; see \cite{bunch1974}.

Finally, we need to solve \eqref{eq:system}. Writing
\[
g=\sum_{t_1+\ldots+t_e=d}a_{t_1,\ldots,t_e}x_1^{t_1}\cdots x_e^{t_e},
\]
we see that \eqref{eq:system} becomes a linear system of $|X_{d,\rho}|$ equations over $\LL$ in the $\binom{e+d-1}{d}$ variables $a_{t_1,\ldots,t_e}$. Since we want a solution over $\F$, we can transform every equation into $n$ equations over $\F$ by choosing any $\F$-basis of $\LL$. This provides a total of 
\[
n|X_{d,\rho}|=n\binom{d+\rho-3}{d-1}
\]
equations over $\F$. Such a linear system can be solved with 
Gaussian elimination, yielding a computational cost of  $\mathcal O( n|X_{d,\rho}|\binom{e+d-1}{d}^{\omega-1})=\mathcal O( n\binom{d+\rho-3}{d-1}\binom{e+d-1}{d}^{\omega-1})$ operations over $\F$ \cite{bunch1974}.\medskip
 
Multiplication over $\LL$ cost $\mathcal O(n^2)$ operations over $\F$, when we represent elements of $\LL$ with respect to a nonspecific basis, while addition costs $\mathcal O(n)$ operations. If elements of $\LL$ are represented in a more clever way, one can speed up multiplication up to reaching $\mathcal O(n)$ operations over $\F$. However, it is not always clear if this is compatible with having a normal basis. Over finite fields, one can have $\F$-bases of $\LL$ which are normal and have linear complexity \cite{mullin}. When we can do this, the total computational cost of our algorithm in terms of operations over $\F$ is 
\[
\mathcal O\left(n^3\binom{d+\rho-4}{d-2}+n\binom{d+\rho-3}{d-1}\binom{e+d-1}{d}^{\omega-1} +n^{d+1}\right).
\]
Assuming $e\simeq \frac{\rho}{2}$ and $\rho=\tau n$ for some $0<\tau<1$, this gives a total cost of
\[
\mathcal O(n^{d\omega})
\]
operations over $\F$. 

\begin{remark}If we restrict to the case $d=2$, we see that the complexity of our algorithm is not optimal. One could simply use a decoder for a classical Gabidulin code containing $\C_\rho^{n,2}(\alpha)$ and having the same minimum rank distance (this is due to the structure of Gabidulin codes), and decode $\C_\rho^{n,2}(\alpha)$ with $\mathcal O(n^2)$ operations over $\LL$ and hence $\mathcal O(n^3)$ operations over $\F$, while we need a total cost of $\mathcal O(n^{2\omega})$ operations over $\F$ using Algorithm \ref{alg1}. This suggests that there must be room for improvement on the speed of a decoder. 
\end{remark}

\section{Essential-rank-metric codes as rank-metric codes}\label{sec:ess-to-rank}

The aim of this section is to bridge the theory of essential-rank-metric codes with the well-established world of codes endowed with the rank distance. We will do this with the aid of a result of Carlini \cite{carlini2006reducing}, which shows how to transform the  essential rank of a homogeneous polynomial $f$ into the rank of a special matrix associated with $f$, namely its \emph{first catalecticant matrix}. In addition, we will also relate the space of essential variables with the \emph{column support} of a matrix, that is the linear space generated by its columns.

\begin{definition}
Let $z_1,\ldots,z_m$ be the monomial basis of $S_{n,d-1}(\F)$ ordered lexicographically, where $m=\binom{n+d-2}{d-1}$. The \textbf{first catalecticant matrix} of $f\in S_{n,d}(\F)$ is the matrix $C_f\in \F^{n \times m}$, where
\[
\partial_i \circ f = \sum_{j=1}^m(C_f)_{i,j}z_j.
\]
\end{definition}

The following theorem is the main result proved by Carlini in \cite{carlini2006reducing}. It relates the essential rank of a homogeneous polynomial with its first catalecticant matrix, and characterizes the space of essential variables.

\begin{theorem}[{\cite[Proposition 10]{carlini2006reducing}}]\label{thm:carlini_ess}
Let $f\in S_{n,d}(\F)$ and assume that $\mathrm{char}(\F)=0$ or $\mathrm{char}(\F)>d$. Then the following hold:
\begin{enumerate}
\item $\ess(f)=\rk(C_f)$
\item $V_{\ess}(f)=\{D\circ f \mid D\in T_{n,d-1}(\F)\}$
\end{enumerate}
\end{theorem}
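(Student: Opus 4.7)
The plan is to reduce to a normal form via a change of variables, then combine a direct computation with the symmetry of catalecticant matrices. Write $r=\ess(f)$ and choose $\ell_1,\ldots,\ell_r\in S_{n,1}(\F)$ with $f\in\F[\ell_1,\ldots,\ell_r]_d$. Minimality of $r$ forces the $\ell_i$'s to be linearly independent (any dependence would let us shave a variable), so after applying a suitable $A\in\GL_n(\F)$ we may assume $\ell_i=x_i$, so $f\in\F[x_1,\ldots,x_r]_d$ and $V_{\ess}(f)=\langle x_1,\ldots,x_r\rangle_{\F}$. All four quantities in play, namely $\ess(f)$, $V_{\ess}(f)$, $\rk(C_f)$, and the image of the map $D\mapsto D\circ f$ from $T_{n,d-1}(\F)$, transform compatibly under coordinate changes, so it suffices to prove the claims in this normal form.

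For part~(1), the vanishing $\partial_i f=0$ for $i>r$ immediately yields $\rk(C_f)\leq r$. For the reverse inequality I would show that $\partial_1 f,\ldots,\partial_r f$ are linearly independent in $S_{n,d-1}(\F)$. If they were dependent, then $c(\partial)\circ f=0$ for some nonzero $c\in\F^r$, and completing $c$ to a basis of $\F^r$ and making the corresponding coordinate change inside $\F[x_1,\ldots,x_r]$ would produce a coordinate $y_1$ with $\partial_{y_1}f=0$. Writing $f=\sum_{i=0}^{d} y_1^i g_i$, the condition $\partial_{y_1}f=0$ amounts to $i\cdot g_i=0$ for all $i\geq 1$. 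Under the hypothesis $\mathrm{char}(\F)=0$ or $\mathrm{char}(\F)>d$, the integers $1,\ldots,d$ are invertible in $\F$, so each $g_i=0$ for $i\geq 1$ and $f\in\F[y_2,\ldots,y_r]_d$, contradicting $\ess(f)=r$.

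For part~(2), the containment $\{D\circ f : D\in T_{n,d-1}(\F)\}\subseteq V_{\ess}(f)$ is immediate in normal form: any $(d-1)$-fold derivative of $f\in\F[x_1,\ldots,x_r]$ is a linear form in $x_1,\ldots,x_r$. The reverse containment reduces, via equality of dimensions, to the symmetry of catalecticant matrices. Both $\rk(C_f)$ and the rank of the analogous matrix $C_f^{(d-1)}$ encoding $T_{n,d-1}(\F)\to S_{n,1}(\F)$ compute the rank of the single symmetric bilinear pairing $T_{n,1}(\F)\times T_{n,d-1}(\F)\to\F$ sending $(D_1,D_2)\mapsto(D_1D_2)\circ f$, up to diagonal rescaling of rows and columns by factorials $J!$ with $|J|\leq d$. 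Under the characteristic hypothesis these factorials are units in $\F$, so the rescaling is invertible and the two ranks agree. Combining with part~(1) yields $\dim\{D\circ f : D\in T_{n,d-1}(\F)\}=\rk(C_f)=r=\dim V_{\ess}(f)$, and equal dimensions promote the inclusion to an equality.

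The main obstacle is the characteristic hypothesis, which is genuinely needed at both key steps: to conclude that a polynomial with a vanishing partial derivative is independent of that variable, and to invert the factorials that relate the two catalecticants. Both steps fail in small positive characteristic, as illustrated by $f=x_1^p$ in characteristic $p$, which has $\ess(f)=1$ but $\rk(C_f)=0$.
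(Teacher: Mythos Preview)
The paper does not supply a proof of this theorem; it is quoted verbatim from \cite[Proposition~10]{carlini2006reducing} and immediately followed by an illustrative example, so there is no in-paper argument to compare against.

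Your proof is correct. The reduction to the normal form $f\in\F[x_1,\ldots,x_r]_d$ is the natural move, and your independence-of-partials argument for part~(1) uses the characteristic hypothesis at exactly the right spot (a vanishing directional derivative forces independence from that coordinate only when $1,\ldots,d$ are units). For part~(2), your catalecticant-symmetry argument is precisely the mechanism the paper itself later invokes: in the proof of the proposition identifying $V_{\ess}(f)$ with $\mathrm{supp}(C_f)\cdot(x_1,\ldots,x_n)$, the authors cite \cite{iarrobino} for the relation $(C_{f,d-1})^\top=DC_f$ with $D$ an invertible diagonal matrix, which is exactly your ``rescaling by factorials'' observation. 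So while the paper does not prove the theorem, the ingredients of your argument for~(2) do appear explicitly elsewhere in it.

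One small point worth tightening: your assertion that all four quantities ``transform compatibly under coordinate changes'' is true but the case of $\rk(C_f)$ merits a line. The paper records the relevant identity in a later remark, namely $C_{\lambda f\cdot A}=\lambda A\,C_f\,S^{d-1}(A)^{-1}$, from which rank-invariance is immediate.
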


We illustrate Theorem \ref{thm:carlini_ess} elaborating on Example \ref{exa:1}. 
\begin{example}\label{exa:3} Let
\[
f=3 x_1^3 +  8 x_1^2 x_2 + 5 x_1^2x_3 + 12 x_1 x_2^2 + 4 x_1x_2 x_3 + 4 x_1x_3^2 + 8 x_2^3 + 2 x_2 x_3^2 + x_3^3 \in S_{3,3}(\F)
\]
be as in Example \ref{exa:1} and let us compute the first and second order partial derivatives of $f$: we get
\[
\begin{array}{rl}
\partial_1\circ f&=~9x_1^2+16x_1x_2+10x_1x_3+12x_2^2+4x_2x_3+4x_3^2,\\
\partial_2\circ f&=~8x_1^2+24x_1x_2+4x_1x_3+24x_2^2+2x_3^2,\\
\partial_3\circ f&=~5x_1^2+4x_1x_2+8x_1x_3+4x_2x_3+3x_3^2
\end{array}
\]
and
\[
\begin{array}{rl}
\partial_1^2\circ f&=~18 x_1 + 16 x_2 + 10 x_3,\\
\partial_1\partial_2 \circ f&=~ 16 x_1 + 24 x_2 + 4 x_3,\\
\partial_2^2 \circ f&=~ 24 x_1 + 48 x_2, \\
\partial_1\partial_3 \circ f&=~ 10x_1+4x_2+8x_3,\\
\partial_2\partial_3 \circ f &=~ 4 x_1 + 4 x_3, \\
\partial_3^2 \circ f&=~ 8x_1+4x_2+6x_3. 
\end{array}
\]
So the first catalecticant matrix of $f$ is
\[
C_f=\begin{pmatrix}
9 & 16 & 10 & 12 & 4 & 4 \\
8 & 24 & 4 &  24 & 0 & 2 \\
5 &  4 & 8 &  0 & 4 & 3
\end{pmatrix}.
\]
By Theorem \ref{thm:carlini_ess}, we get
\[
\ess(f)=\rk(C_f)=2
\]
and
\begin{align*}
V_{\ess}(f)&=\langle \partial_1^2\circ f, \partial_1\partial_2 \circ f, \partial_2^2\circ f, \partial_1\partial_3 \circ f, \partial_2\partial_3 \circ f, \partial_3^2\circ f \rangle_{\F}\\
&=\langle x_1+2x_2,x_1+x_3 \rangle_{\F}.
\end{align*}
\end{example}

As a consequence of Theorem \ref{thm:carlini_ess}, we can embed the metric space $(S_{n,d}(\F),\dd_{\ess} )$ isometrically into the space of $n \times \binom{n+d-2}{d-1}$ matrices over $\F$ equipped with the distance function $\dd_{\rk}$ defined as 
\[
\dd_{\rk}(A,B)=\rk(A-B),\quad A,B \in \F^{n \times \binom{n+d-2}{d-1}},
\]
which is called the \textbf{rank distance}.

\begin{corollary}\label{cor:embedding}
Assume that $\mathrm{char}(\F)=0$ or $\mathrm{char}(\F)>d$. The map     
\[
\begin{array}{ccc}(S_{n,d}(\F),\dd_{\ess} )& \longrightarrow & (\F^{n \times \binom{n+d-2}{d-1}}, \dd_{\rk}) \\ 
f & \longmapsto & C_f 
\end{array}
\]
is an $\F$-linear  isometric embedding.
\end{corollary}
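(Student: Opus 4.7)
The statement is labeled a corollary precisely because the hard analytic content is encapsulated in Theorem~\ref{thm:carlini_ess}. The plan is therefore to reduce the claim to that theorem via $\F$-linearity, with essentially no further computation required.

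First, I would verify that the map $f \mapsto C_f$ is $\F$-linear. The defining relation $\partial_i \circ f = \sum_{j=1}^{m} (C_f)_{i,j} z_j$ expresses the rows of $C_f$ as the coordinates of $\partial_i \circ f$ with respect to the monomial basis $z_1, \ldots, z_m$ of $S_{n,d-1}(\F)$. Since each $\partial_i \colon S_{n,d}(\F) \to S_{n,d-1}(\F)$ is $\F$-linear and the coordinate map with respect to a fixed basis is $\F$-linear, the composition $f \mapsto (\partial_i \circ f)_{i=1}^{n} \mapsto C_f$ is $\F$-linear. In particular, $C_{f-g} = C_f - C_g$ for all $f,g \in S_{n,d}(\F)$.

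Next, to check the isometry property, I compute
\[
\dd_{\rk}(C_f, C_g) = \rk(C_f - C_g) = \rk(C_{f-g}) = \ess(f-g) = \dd_{\ess}(f,g),
\]
where the third equality is exactly the content of Theorem~\ref{thm:carlini_ess}(1) applied to the polynomial $h := f-g \in S_{n,d}(\F)$, which requires the hypothesis $\mathrm{char}(\F) = 0$ or $\mathrm{char}(\F) > d$.

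Finally, the map is injective, hence an embedding: if $C_f = C_g$, then $\dd_{\ess}(f,g) = \dd_{\rk}(C_f, C_g) = 0$, so $\ess(f-g) = 0$, and by Remark~\ref{rem:basic_properties}(1) we conclude $f = g$. There is no real obstacle here; the only subtlety worth flagging is that the characteristic assumption is inherited from Theorem~\ref{thm:carlini_ess} and cannot be dropped, since $\ess(f) = \rk(C_f)$ itself may fail when $\mathrm{char}(\F) \leq d$ (for instance, $\alpha(\partial)^d \circ f = d!\,f(\alpha)$ becomes degenerate).
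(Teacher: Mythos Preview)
Your proof is correct and matches the paper's approach: the paper states this result as an immediate corollary of Theorem~\ref{thm:carlini_ess} without giving any explicit argument, and your write-up simply spells out the routine verification (linearity of $f\mapsto C_f$, then $\rk(C_{f-g})=\ess(f-g)$ from Theorem~\ref{thm:carlini_ess}(1), then injectivity) that the paper leaves implicit.
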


\begin{example}\label{exa:code_2dm}
Let $\C$ be the essential-rank-metric code given in Example \ref{exa:2}, that is $\C:=\langle f_1,f_2\rangle_{F}$, where 
\begin{align*}
f_1&=3 x_1^3 +  8 x_1^2 x_2 + 5 x_1^2x_3 + 12 x_1 x_2^2 + 4 x_1x_2 x_3 + 4 x_1x_3^2 + 8 x_2^3 + 2 x_2 x_3^2 + x_3^3,\\
f_2&=x_1^3+2x_1x_2^2+x_3^3.
\end{align*}
We already mentioned that $\C$ is a $[(3,3),2,2]_\F$ essential-rank-metric code. Here we show that the minimum essential-rank distance of $\C$ is in fact $2$. We use the embedding of Corollary \ref{cor:embedding} to see that $\C$ is isometric to the rank-metric code $\C':=\langle C_{f_1},C_{f_2}\rangle_{\F} \subseteq \F^{3 \times 6}$. The first catalecticant matrix $C_{f_1}$ of $f_1$ has been computed in Example \ref{exa:3}, while the first catalecticant matrix $C_{f_2}$ of $f_2$ is given by
\[
C_{f_2}=\begin{pmatrix}
 3 & 0 & 0 & 2 & 0 & 0  \\
 0 & 4 & 0 & 0 & 0 & 0 \\
 0 & 0 & 0 & 0 & 0 & 3
\end{pmatrix}.
\]
It is immediate to see that $\rk(C_{f_2})=3$. Furthermore, the $3$th and $5$th column of every matrix of the form $C_{f_1}+\lambda C_{f_2}$ with $\lambda\in\F$ are linearly independent. This shows that $\rk(A)\ge 2$ for every $A \in \C'$, and we can conclude that the minimum essential-rank distance of $\C$ is $2$.
\end{example}

\begin{remark}
Corollary \ref{cor:embedding} allows us to identify each essential-rank-metric code in $S_{n,d}(\F)$ with a rank-metric code in $\F^{n \times \binom{n+d-2}{d-1}}$. These identifications are compatible with equivalence: suppose that $\mathcal{C}_1,\mathcal{C}_2\subseteq S_{n,d}(\F)$ are two equivalent essential-rank-metric codes. This means that there exist   $A\in\GL(n,\F)$ and $\lambda \in \F^*$  such that $\mathcal{C}_2=\lambda\mathcal{C}_1\cdot A$. Let $S^{d-1}(A)$ be the matrix corresponding to the linear map $S_{n,d-1}(\F)\to S_{n,d-1}(\F), g\mapsto g(x\cdot A)$ given in \eqref{eq:equiv}  with respect to the basis of monomials. Then we have $C_{\lambda f\cdot A}=\lambda AC_f S^{d-1}(A)^{-1}$ for all $f\in S_{n,d}(\F)$ and hence 
\[
\{C_f\mid f\in\mathcal{C}_2\}=\lambda A\cdot \{C_f\mid f\in\mathcal{C}_1\}\cdot S^{d-1}(A)^{-1}\mbox{ and }\{C_f\mid f\in\mathcal{C}_1\}
\]
are  equivalent as rank-metric codes; see e.g. \cite[Definition 9]{sheekeysurvey}.
\end{remark}

Corollary \ref{cor:embedding} also allows us to derive bounds on the parameters of essential-rank-metric codes.

\begin{corollary}\label{cor:inherited_SBound}
Assume that $\mathrm{char}(\F)=0$ or $\mathrm{char}(\F)>d\ge 2$. Let $\C$ be an $[(n,d),k,r]_{\F}$ essential-rank-metric code. Then 
\[
k \le \binom{n+d-2}{d-1}(n-r+1).
\]
\end{corollary}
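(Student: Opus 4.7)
The plan is to reduce the statement directly to the classical Singleton bound for rank-metric codes via the isometric embedding of Corollary \ref{cor:embedding}. Since $\mathrm{char}(\F) = 0$ or $\mathrm{char}(\F) > d$, the map $f \mapsto C_f$ identifies $\C$ with an $\F$-linear subspace $\C' \subseteq \F^{n \times M}$ where $M := \binom{n+d-2}{d-1}$, and this identification is an isometry between $\dd_{\ess}$ and $\dd_{\rk}$. In particular, $\dim_{\F}(\C') = k$ and the minimum rank distance of $\C'$ is exactly $r$.

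Next I would invoke the Singleton bound for rank-metric codes, which asserts that any $\F$-linear code in $\F^{a \times b}$ of minimum rank distance $r$ has $\F$-dimension at most $\max(a,b)(\min(a,b) - r + 1)$. For our purposes the relevant case is $a = n$ and $b = M$. I would briefly check the easy inequality $n \le M = \binom{n+d-2}{d-1}$, which is immediate for $d \ge 2$ (with equality exactly when $d = 2$). Therefore $\min(n, M) = n$ and $\max(n, M) = M$, and the Singleton bound yields
\[
k \;=\; \dim_{\F}(\C') \;\le\; M(n - r + 1) \;=\; \binom{n+d-2}{d-1}(n - r + 1),
\]
which is exactly the claimed inequality.

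There is really no serious obstacle here: the content is entirely packaged in Theorem \ref{thm:carlini_ess} (through Corollary \ref{cor:embedding}) and in the classical Singleton bound for the rank metric, which can be cited from any standard reference on rank-metric codes (such as the survey \cite{sheekeysurvey} cited in the introduction). The only point worth flagging is the orientation of the matrices: the embedding produces $n$-by-$M$ matrices rather than $M$-by-$n$, and one must verify $n \le M$ to be sure that $n - r + 1$ is the correct factor in the Singleton bound, rather than $M - r + 1$. Once this is noted, the proof is a one-line application.
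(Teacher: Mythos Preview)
Your proof is correct and follows exactly the same route as the paper: embed $\C$ isometrically into $\F^{n\times M}$ via Corollary~\ref{cor:embedding} and apply the rank-metric Singleton bound. Your explicit check that $n\le M=\binom{n+d-2}{d-1}$ for $d\ge2$ (so that the bound reads $M(n-r+1)$ rather than $n(M-r+1)$) is a useful detail the paper leaves implicit.
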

\begin{proof}
By Corollary~\ref{cor:embedding}, we know that $\C$ is isometric to the rank-metric code $\C'\subseteq \F^{n \times \binom{n+d-2}{d-1}}$. Hence the corollary follows from the Singleton bound on $\C'$ for rank-metric codes; see e.g.~\cite{delsarte1978bilinearforms}.
\end{proof}

\begin{remark}
The bound of Corollary~\ref{cor:inherited_SBound} is never better than the Singleton-like bound of Proposition~\ref{prop:singletonlike_bound}. Indeed, for fixed $n,d$, consider the function
\begin{eqnarray*}
F\colon \{1,\ldots,n\}&\to&\Z\\
r&\mapsto& \binom{n+d-2}{d-1}(n-r+1)-\binom{n+d-1}{d}+\binom{r+d-2}{d}.
\end{eqnarray*}
We have $F(n)=0$ as $\binom{a+1}{b}=\binom{a}{b}+\binom{a}{b-1}$ for $(a,b)=(n+d-2,d)$. And for $r\in\{1,\ldots,n-1\}$, we have
\begin{align*}
F(r)-F(r+1)&=\binom{n+d-2}{d-1}+\binom{r+d-2}{d}-\binom{r+d-1}{d}\\
&=\binom{n+d-2}{d-1}-\binom{r+d-2}{d-1}\geq 0
\end{align*}
as $\binom{a+1}{b}=\binom{a}{b}+\binom{a}{b-1}$ for $(a,b)=(r+d-2,d)$ and $r\leq n$. So $F(r)\geq0$ for all $r\leq n$.
\end{remark}

We now analyze the parameters  of  the codes $\C_{\rho}^{n,d}(\alpha)$ \textbf{as rank-metric codes}. 
By Proposition~\ref{prop:dimbound} and Corollary~\ref{cor:inherited_SBound}, we know that
\[
s(n,d,\rho)\leq\dim \C_{\rho}^{n,d}(\alpha)\leq t(n,d,\rho).
\]
for
\begin{align*}
s(n,d,\rho)&:=\binom{n+d-1}{d} -n\binom{\rho+d-3}{d-1},\\
t(n,d,\rho)&:=\binom{n+d-2}{d-1}(n-\rho+1).
\end{align*}
We now compare $s(n,d,\rho)$ and $t(n,d,\rho)$ asymptotically. First, when $d,\rho$ are fixed, we have
\[
\begin{array}{ccl}
s(n,d,\rho)&\sim&n^d/d!\\
t(n,d,\rho)&\sim&n^d/(d-1)!
\end{array}
\]
as $n$ grows large. We see that our codes (and the ambient spaces they live in) are a factor~$d$ away from having maximal dimension. When $d$ is fixed and $\rho=\tau n$ for some constant $0<\tau<1$, then
\[
\begin{array}{cccc}
s(n,d,\rho)&\sim&(1-d\tau^{d-1})&\cdot\, n^d/d!\\
t(n,d,\rho)&\sim&d(1-\tau)&\cdot\, n^d/d!
\end{array}
\]
When $\tau$ is small relative to $d$ (more precisely when $\tau^{-1}>d^{1/(d-2)}$), we have $d\tau^{d-1}<\tau$ and so our codes are again a factor $d$ away from attaining the upper bound.

\begin{remark}\label{rem:improving?}
The fact that the codes $\C_\rho^{n,d}(\alpha)$ cannot be optimal \emph{interpreted as rank-metric codes} is a direct consequence of the fact that the the bound of Proposition \ref{prop:singletonlike_bound} is 
a factor $d$ far from the one in Corollary \ref{cor:inherited_SBound}. Furthermore, we
also suspect the bound of Proposition \ref{prop:singletonlike_bound} to in fact not be tight. Indeed, when $d=2$, we will see later in Corollary~\ref{cor:symm_SBound} that the upper bound can be improved at least when $\F$ is a finite field. This result was proved in~\cite{schmidt2015symmetric} with an approach based on association schemes, which highly relies on the field $\F$ being finite. Whether this improved bound holds also when $\F$ is not finite is still an open question. However, over finite fields it would be interesting to check whether a similar approach could give an improved bound also for $d>2$.
\end{remark}

\begin{question}
    Is it possible to improve the Singleton-like bound of Proposition \ref{prop:singletonlike_bound}? 
\end{question}

Finally, we  show why the space of essential variables of a polynomial can be interpreted as a natural notion of \emph{support}. 
In coding theory, the theory of supports gives a useful combinatorial perspective on the elements of a metric space, and it allows one to relate concrete objects such as linear codes with more abstract objects such as matroids; see e.g. \cite{barg97}. In the case of rank-metric codes, that is, spaces of $n\times m$ matrices over a field $\F$, one can associate the study of the lattice of subspaces of $\F^n$. Concretely, the \textbf{(column) support} of $A=(a^1 \mid \cdots\mid a^m)\in\F^{n\times m}$ is the space 
$$\mathrm{supp}(A):=\langle a^1,\ldots,a^m\rangle \subseteq \F^n.$$ 
This notion has been studied in details in recent works; { see e.g. \cite{jurrius17}. }

\begin{proposition}
Assume that $\mathrm{char}(F)>d$ or $\mathrm{char}(\F)=0$ and let $f\in S_{n,d}(\F)$. Then
$$
V_{\ess}(f)=\mathrm{supp}(C_f)\cdot(x_1,\ldots,x_n):=\left\{v(x) \mid v \in \mathrm{supp}(C_f) \right\}.
$$
\end{proposition}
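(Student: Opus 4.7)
The plan is to reduce the claim to a direct coefficient computation, via the alternative description of $V_{\ess}(f)$ provided by Theorem~\ref{thm:carlini_ess}(2). By that theorem, $V_{\ess}(f) = \{D \circ f \mid D\in T_{n,d-1}(\F)\}$, and since $T_{n,d-1}(\F)$ is spanned by the monomial basis $\{\partial^\beta \mid |\beta|=d-1\}$, it suffices to compute each $\partial^\beta \circ f$ explicitly and show that, as $\beta$ ranges over all multi-indices of weight $d-1$, the linear forms $\partial^\beta\circ f$ span exactly $\mathrm{supp}(C_f)\cdot (x_1,\ldots,x_n)$.

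First I would fix an enumeration of the monomial basis of $S_{n,d-1}(\F)$ by multi-indices, writing $z_j=x^{\alpha^{(j)}}$ with $|\alpha^{(j)}|=d-1$, and let $j_\beta$ denote the index with $\alpha^{(j_\beta)}=\beta$. The key computation is: since $\partial^\beta\circ f$ is a linear form, its coefficient of $x_i$ equals the constant $\partial_i\partial^\beta\circ f$. Using the defining identity $\partial_i\circ f=\sum_j (C_f)_{i,j}z_j$ of the catalecticant matrix, this becomes
\[
\partial_i\partial^\beta\circ f=\sum_{j=1}^{m}(C_f)_{i,j}\,(\partial^\beta\circ z_j).
\]
Since $|\beta|=|\alpha^{(j)}|=d-1$, the term $\partial^\beta\circ x^{\alpha^{(j)}}$ vanishes unless $\alpha^{(j)}=\beta$, in which case it equals $\beta!$. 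Hence $\partial_i\partial^\beta\circ f=\beta!\,(C_f)_{i,j_\beta}$, so that
\[
\partial^\beta\circ f=\beta!\sum_{i=1}^{n}(C_f)_{i,j_\beta}\,x_i=\beta!\cdot c^{j_\beta}(x),
\]
where $c^{j_\beta}$ is the $j_\beta$-th column of $C_f$.

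Finally I would invoke the hypothesis $\mathrm{char}(\F)=0$ or $\mathrm{char}(\F)>d$, which guarantees $\beta!\neq 0$ for every $\beta$ with $|\beta|=d-1$. Thus the family $\{\partial^\beta\circ f\}_{|\beta|=d-1}$ is, up to nonzero scalars, exactly the family of linear forms associated to the columns of $C_f$. Taking $\F$-spans gives
\[
V_{\ess}(f)=\langle \partial^\beta\circ f\mid |\beta|=d-1\rangle_\F=\langle c^{j}(x)\mid 1\le j\le m\rangle_\F=\mathrm{supp}(C_f)\cdot(x_1,\ldots,x_n),
\]
which is the claimed equality. There is no real obstacle in this argument; the only delicate point is the use of the characteristic hypothesis to ensure $\beta!\neq 0$, which is precisely the same hypothesis already required for Theorem~\ref{thm:carlini_ess}.
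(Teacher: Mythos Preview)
Your proof is correct and follows essentially the same route as the paper: both start from Theorem~\ref{thm:carlini_ess}(2) to write $V_{\ess}(f)$ as the span of the order-$(d-1)$ derivatives $\partial^\beta\circ f$, and both then identify each such derivative with a nonzero scalar multiple of a column of $C_f$. The only difference is that the paper packages the second step as the known relation $(C_{f,d-1})^\top = D\,C_f$ (with $D$ invertible diagonal) and cites \cite{iarrobino}, whereas you carry out the coefficient computation explicitly and recover that the diagonal entries of $D$ are the multinomials $\beta!$, nonzero by the characteristic hypothesis. Your argument is thus a self-contained version of the paper's proof rather than a genuinely different approach.
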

\begin{proof}
By Theorem \ref{thm:carlini_ess}, the space of essential variables of $f$ is spanned by 
$$
\{(\partial_1^{a_1}\ldots\partial_n^{a_n}) \circ f \,|\, a_1+\ldots+a_n=d-1\}=\{v_1(x),\ldots,v_m(x)\},
$$
where $m=\binom{n+d-2}{d-1}$ and the elements are ordered in lexicographic order. Putting the vectors $v_1,\ldots,v_m$ as rows of an $m\times n$ matrix, we obtain the $(d-1)$th catalecticant matrix $C_{f,d-1}$. We can conclude by observing that $(C_{f,d-1})^\top=DC_f$, for some invertible diagonal matrix $D\in\F^{m\times m}$; see e.g. \cite{iarrobino}. Thus, the span of the rows of $C_{f,d-1}$ coincides with the span of the columns of $C_f$, that is the support of $C_f$. 
\end{proof}

\subsection{The case of degree \texorpdfstring{$2$}{2}-polynomials}

We analyze in detail what happens when $d=2$ and $\mathrm{char}(\F)\neq2$. In this case, essential-rank-metric codes are isometric to spaces of symmetric matrices with the rank metric. 

\begin{proposition}\label{prop:identification_d=2}
For every $f\in S_{n,2}(\F)$, the matrix $C_f$ is symmetric. The map $f\mapsto C_f$ is an isometry between the space $S_{n,2}(\F)$ equipped with the essential rank metric and the space of $n\times n$ symmetric matrices over $\F$ equipped with the rank metric.
\end{proposition}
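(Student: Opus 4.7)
The plan is to verify both assertions by unpacking the definition of the first catalecticant matrix in the special case $d=2$ and then invoking Theorem \ref{thm:carlini_ess}. First I would note that for $d=2$ the monomial basis of $S_{n,1}(\F)$ ordered lexicographically is simply $z_j=x_j$ for $j\in\{1,\ldots,n\}$, so $m=n$ and $C_f\in\F^{n\times n}$. Writing $f=\sum_{1\leq i\leq j\leq n}a_{i,j}\,x_ix_j$ and applying $\partial_i$ term by term yields
\[
\partial_i\circ f \;=\; 2a_{i,i}\,x_i \;+\; \sum_{j>i}a_{i,j}\,x_j \;+\; \sum_{j<i}a_{j,i}\,x_j,
\]
so that $(C_f)_{i,j}=a_{i,j}$ if $i<j$, $(C_f)_{i,i}=2a_{i,i}$, and $(C_f)_{i,j}=a_{j,i}$ if $i>j$. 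This recovers precisely the matrix displayed in the introduction and makes the symmetry of $C_f$ manifest.

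For the second assertion, linearity of $f\mapsto C_f$ is immediate from the linearity of differentiation, and the explicit formula above shows that the image lies in the space $\mathrm{Sym}_n(\F)$ of symmetric matrices. Since $\mathrm{char}(\F)\neq 2$, the diagonal coefficient $a_{i,i}=\tfrac12(C_f)_{i,i}$ and the off-diagonal coefficients $a_{i,j}=(C_f)_{i,j}$ (for $i<j$) can be read off $C_f$, giving a well-defined inverse on $\mathrm{Sym}_n(\F)$. Hence $f\mapsto C_f$ is an $\F$-linear bijection from $S_{n,2}(\F)$ onto $\mathrm{Sym}_n(\F)$. The isometry statement then follows by combining the linearity $C_{f-g}=C_f-C_g$ with Theorem \ref{thm:carlini_ess}(1), whose hypothesis $\mathrm{char}(\F)=0$ or $\mathrm{char}(\F)>d$ is satisfied here since $d=2$ and $\mathrm{char}(\F)\neq 2$: indeed,
\[
\dd_{\ess}(f,g)\;=\;\ess(f-g)\;=\;\rk(C_{f-g})\;=\;\rk(C_f-C_g)\;=\;\dd_{\rk}(C_f,C_g).
\]

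There is no genuine obstacle: every step reduces either to a direct computation or to a citation of Theorem \ref{thm:carlini_ess}. The only point worth emphasising is that the characteristic assumption enters twice, once to recover $a_{i,i}$ from $(C_f)_{i,i}$ and once to apply Carlini's identification of $\ess(f)$ with $\rk(C_f)$. Both uses are compatible with the standing hypothesis $\mathrm{char}(\F)\neq 2$ in the statement.
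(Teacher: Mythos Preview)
Your proof is correct and follows essentially the same approach as the paper: both compute the catalecticant matrix of a general quadratic form explicitly to exhibit its symmetry, use $\mathrm{char}(\F)\neq 2$ to see that $f\mapsto C_f$ is a bijection onto symmetric matrices, and then invoke Theorem~\ref{thm:carlini_ess} for the isometry. Your write-up is slightly more detailed (spelling out the inverse and the chain $\dd_{\ess}(f,g)=\rk(C_f-C_g)=\dd_{\rk}(C_f,C_g)$), but the argument is the same.
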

\begin{proof}
Writing 
\[
f=\sum_{1\leq i \leq j \leq n}a_{ij}x_ix_j,
\]
we see that
\[
C_f=\begin{pmatrix}
2a_{1,1} & a_{1,2} & \cdots & a_{1,n} \\
a_{1,2} & 2a_{2,2} & \ddots & \vdots \\
\vdots &\ddots & \ddots & a_{n-1,n}\\
a_{1,n} & \cdots &a_{n-1,n} &  2a_{n,n}
\end{pmatrix}
\]
is a symmetric matrix. The map $f\mapsto C_f$ is clearly a bijection between $S_{n,2}(\F)$ and the space of $n\times n$ symmetric matrices over $\F$ as $\mathrm{char}(\F)\neq2$. This bijection is in fact an isometry by Theorem~\ref{thm:carlini_ess}.
\end{proof}

In other words, essential-rank-metric codes for $d=2$ are in fact \textbf{symmetric rank-metric codes}. These were studied by several authors; see e.g. \cite{deboer1996,schmidt2015symmetric}. In this way, essential-rank-metric codes provide a generalization of symmetric rank-metric codes. \bigskip

When $d=2$ and the field $\F$ is finite, we can actually exploit finer bounds between $n,d,k$ and $r$. This is due to the work of Schmidt, who used a combinatorial approach based on association schemes \cite{schmidt2015symmetric}.

\begin{corollary}[{\cite[Theorem 3.3]{schmidt2015symmetric}}]\label{cor:symm_SBound}
Assume that $\F$ is finite and $\mathrm{char}(\F)> 2$. Let $\C$ be an $[(n,2),k,r]_{\F}$ essential-rank-metric code. Then 
\[
k \leq \begin{cases} \dfrac{n(n-r+2)}{2} & \mbox{ if } n-r \mbox{ is even, }\\[1em]
 \dfrac{(n+1)(n-r+1)}{2}  & \mbox{ if } n-r \mbox{ is odd. } \end{cases}
\]
\end{corollary}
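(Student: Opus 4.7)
The statement is essentially a translation of a known result. The plan is to reduce the bound for essential-rank-metric codes to Schmidt's bound for symmetric rank-metric codes via the identification established in Proposition~\ref{prop:identification_d=2}.

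First, I would apply Proposition~\ref{prop:identification_d=2}: since $\mathrm{char}(\F)\neq 2$, the map $f\mapsto C_f$ is an $\F$-linear isometry between $(S_{n,2}(\F),\dd_{\ess})$ and the space of $n\times n$ symmetric matrices over $\F$ equipped with the rank distance. Applying this isometry to $\C$, I obtain a symmetric rank-metric code $\C':=\{C_f\mid f\in\C\}$ inside $\Sym_n(\F)$ with the same $\F$-dimension $k$ and the same minimum (rank) distance $r$.

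Next, I would invoke Schmidt's bound \cite[Theorem 3.3]{schmidt2015symmetric} directly on $\C'$. That theorem, valid over finite fields of characteristic $>2$, asserts exactly the two-case inequality stated in the corollary, with cases distinguished by the parity of $n-r$. Transporting the bound back through the isometry yields the claimed inequality on $k$.

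The only potential obstacle is verifying that the conventions match: Schmidt states his bound for spaces of symmetric $n\times n$ matrices over a finite field, with the minimum rank distance of the code, and his matrices are honest symmetric matrices (not quadratic forms). In our setup, the image of the isometry is exactly such a space (and the diagonal entries $2a_{ii}$ are nonzero multiples of the coefficients of $x_i^2$, which is fine because $\mathrm{char}(\F)\neq 2$). Hence there is nothing to adjust, and the corollary follows immediately. In summary, the proof is a one-line application of the isometry from Proposition~\ref{prop:identification_d=2} together with \cite[Theorem 3.3]{schmidt2015symmetric}.
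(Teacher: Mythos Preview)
Your proposal is correct and matches the paper's approach exactly: the paper does not give an explicit proof but simply cites \cite[Theorem 3.3]{schmidt2015symmetric}, relying on the isometry of Proposition~\ref{prop:identification_d=2} established just before to translate essential-rank-metric codes for $d=2$ into symmetric rank-metric codes over a finite field. Your write-up spells out precisely this implicit argument, and the check on conventions is apt but, as you note, causes no difficulty.
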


 \begin{remark}
    Corollary \ref{cor:symm_SBound} is an improvement on the Singleton-like bound of Proposition~\ref{prop:singletonlike_bound}. Indeed, for $d=2$, the Singleton-like bound for an $[(n,2),k,r]_\F$ code reads as 
    $$ k \leq \frac{(n+r)(n-r+1)}{2},$$
    which is always worse, except for $r\in \{1,n\}$ where the two bounds coincide.
 \end{remark}

Next we show that, using the identification from Proposition~\ref{prop:identification_d=2}, the essential-rank-metric codes we construct in this paper for $d=2$ are in fact symmetric Gabidulin codes. To do so, we need to fix identifications between some vector spaces over $\F$.\bigskip

Fix a basis $\alpha_1,\ldots,\alpha_n$ of $\LL$ over $\F$. Let $\beta_1,\ldots,\beta_n$ be the dual basis of $\alpha_1,\ldots,\alpha_n$ with respect to the trace inner product and let $\gamma_1,\ldots,\gamma_n$ be a basis of $\LL$ over $\F$ such that 
\[
(\gamma_1,\ldots,\gamma_n)\perp(\sigma^i(\alpha_1),\ldots,\sigma^i(\alpha_n))
\]
for $i=0,\ldots,n-2$. Consider the following maps:
\[
\begin{array}{ccccccc}
\End_{\F}(\LL)&\to&\LL^n,&\phantom{space}&\LL^n&\to&\F^{n\times n},\\\bigskip
\varphi&\mapsto&(\varphi(\alpha_1),\ldots,\varphi(\alpha_n)),&&v&\mapsto&A_v,\\
\End_{\F}(\LL)&\to&\Bil(\LL\times\LL,\F),&&\Bil(\LL\times\LL,\F)&\to&\F^{n\times n},\\
\varphi&\mapsto&\left((x,y)\mapsto\Tr(\varphi(x)y)\right),&&f&\mapsto&B_f,
\end{array}
\]
where $v_i=\sum_{j=1}^n(A_v)_{ij}\beta_j$ for $i=1,\ldots,n$ and $f(\alpha_i,\alpha_j)=(B_f)_{ij}$ for $i,j=1,\ldots,n$. Each of these maps is an $\F$-linear isomorphisms and the following diagram commutes:
\[
\begin{array}{ccc}
\End_{\F}(\LL)&\rightarrow&\LL^n\\
\downarrow&&\downarrow\\
\Bil(\LL\times\LL,\F)&\rightarrow&\F^{n\times n}
\end{array}
\]
We first recall the construction of (regular) Gabidulin codes in each of these spaces. Let $\sigma$ be a generator of $\Gal(\LL/\F)$. Then a Gabidulin code is a subspace of the form
\[
\left\{\sum_{i=-k}^{\ell}\mu_i\sigma^i\,\middle|\, \mu_{-k},\ldots,\mu_{\ell}\in\LL\right\}\subseteq\End_{\F}(\LL)
\]
for $k,\ell\geq0$ with $k+\ell\leq n-2$. Inside $\Bil(\LL\times\LL,\F)$, this is
\[
\left\{(x,y)\mapsto\sum_{i=-k}^{\ell}\Tr(\mu_i\sigma^i(x)y)\,\middle|\, \mu_{-k},\ldots,\mu_{\ell}\in\LL\right\}.
\]\
Inside $\LL^n$, this is the code correpsonding to the generator matrix
\[
\begin{pmatrix}
\sigma^{-k}(\alpha_1)&\cdots&\sigma^{-k}(\alpha_n)\\
\vdots&&\vdots\\
\sigma^{\ell}(\alpha_1)&\cdots&\sigma^{\ell}(\alpha_n)
\end{pmatrix}
\]
and parity check matrix
\[
\begin{pmatrix}
\sigma^{-k}(\gamma_1)&\cdots&\sigma^{-k}(\gamma_n)\\
\vdots&&\vdots\\
\sigma^{-(n-2-\ell)}(\gamma_1)&\cdots&\sigma^{-(n-2-\ell)}(\gamma_n)
\end{pmatrix}
\]
over $\LL$, that is, the code $\sigma^{-k}( \mathcal G_{n-\ell-k,n}^\sigma(\alpha))$ as defined in \eqref{eq:Gabidulin}, where $\alpha=(\alpha_1,\ldots,\alpha_n)$.
And inside $\F^{n\times n}$, we get
\[
\left\{M\in\F^{n\times n}\,\middle|\,\sigma^i(\gamma)M\beta^\top=0\mbox{ for }i=-(n-2-\ell),\ldots,-k\right\},
\]
where $\gamma=(\gamma_1,\ldots,\gamma_n)$ and $\beta=(\beta_1,\ldots,\beta_n)$.\bigskip

Symmetric Gabidulin codes are the subspaces of (regular) Gabidulin codes consisting of all symmetric elements: We have a notion of transpose for elements in $\F^{n\times n}$, which induces notions of transpose for elements of $\End_{\F}(\LL)$, $\Bil(\LL\times\LL,\F)$ and $\LL^n$. 

\begin{lemma}
For $\mu\in\LL$ and $k\in\mathbb{Z}$, we have
\[
(\mu\sigma^k)^\top=\sigma^{-k}(\mu)\sigma^{-k}
\]
\end{lemma}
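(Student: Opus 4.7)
The plan is to trace the transpose back through the commutative diagram. Since the transpose is defined on $\F^{n\times n}$ and pulled back via the isomorphisms, and the composition $\End_\F(\LL) \to \Bil(\LL\times\LL,\F) \to \F^{n\times n}$ is an isomorphism, it suffices to work at the level of bilinear forms: the matrix $B_f^\top$ corresponds to the swapped bilinear form $f^\top(x,y) := f(y,x)$. So I need to identify $(\mu\sigma^k)^\top$ by determining which endomorphism $\psi$ gives rise to the bilinear form $(x,y)\mapsto \Tr(\psi(x)y)$ that equals the swap of the form attached to $\mu\sigma^k$.

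First I would write down the bilinear form corresponding to $\varphi = \mu\sigma^k$, namely
\[
f_\varphi(x,y) = \Tr(\mu\sigma^k(x)\,y),
\]
and then take its swap
\[
f_\varphi^\top(x,y) = \Tr(\mu\sigma^k(y)\,x).
\]
The main tool is Galois invariance of the trace, i.e.\ $\Tr(\sigma^j(z))=\Tr(z)$ for all $j$ and all $z\in\LL$. Applying $\sigma^{-k}$ inside the trace I would rewrite
\[
\Tr(\mu\sigma^k(y)\,x) = \Tr\bigl(\sigma^{-k}(\mu)\,y\,\sigma^{-k}(x)\bigr) = \Tr\bigl(\sigma^{-k}(\mu)\sigma^{-k}(x)\,y\bigr),
\]
using commutativity of multiplication in $\LL$. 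This last expression is exactly $\Tr(\psi(x)\,y)$ for $\psi(x) = \sigma^{-k}(\mu)\sigma^{-k}(x)$, i.e.\ $\psi = \sigma^{-k}(\mu)\sigma^{-k}$ as an element of $\End_\F(\LL)$.

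Since the map $\End_\F(\LL)\to\Bil(\LL\times\LL,\F)$ is an isomorphism (nondegeneracy of the trace pairing on $\LL/\F$), the endomorphism $\psi$ is uniquely determined by its associated bilinear form, so $(\mu\sigma^k)^\top = \sigma^{-k}(\mu)\sigma^{-k}$. The only step that requires some care is verifying that the transpose of a matrix really does correspond to swapping the two arguments of the bilinear form under the chosen identification $f\mapsto B_f$ with $(B_f)_{ij} = f(\alpha_i,\alpha_j)$; this is immediate from $(B_f^\top)_{ij} = (B_f)_{ji} = f(\alpha_j,\alpha_i) = f^\top(\alpha_i,\alpha_j)$. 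There is no real obstacle; the content of the lemma is essentially that Galois-conjugating inside the trace lets one push $\sigma^k$ from one argument to the other at the cost of applying $\sigma^{-k}$ everywhere.
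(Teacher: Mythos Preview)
Your proof is correct and is essentially the same as the paper's. The paper works directly with matrix entries, computing $A_{ij}=\Tr(\mu\sigma^k(\alpha_i)\alpha_j)$ and $B_{ji}=\Tr(\sigma^{-k}(\mu)\sigma^{-k}(\alpha_j)\alpha_i)$ and showing $A_{ij}=B_{ji}$ via Galois invariance of the trace; you carry out the identical computation at the level of bilinear forms with general arguments $x,y$ in place of the basis elements $\alpha_i,\alpha_j$.
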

\begin{proof}
Let $A,B\in\F^{n\times n}$ be the matrices such that $\mu\sigma^k(\alpha_i)=\sum_{j=1}^nA_{ij}\beta_j$ for $i=1,\ldots,n$ and $\sigma^{-k}(\mu)\sigma^{-k}(\alpha_j)=\sum_{i=1}^nB_{ji}\beta_i$ for $j=1,\ldots,n$. Then we see that
$$
A_{ij}=\Tr(\mu\sigma^k(\alpha_i)\alpha_j)=\Tr(\sigma^{-k}(\mu)\alpha_i\sigma^{-k}(\alpha_j))=B_{ji}
$$
for $i,j=1,\ldots,n$. Hence $A^\top=B$ and so the lemma follows.
\end{proof}

So a symmetric Gabidulin code is a subspace of the form
\[
\left\{\mu_0\sigma^0+\sum_{i=1}^{\ell}(\mu_i\sigma^i+\sigma^{-i}(\mu_i)\sigma^{-i})\,\middle|\, \mu_0,\ldots,\mu_{\ell}\in\LL\right\}\subseteq\End_{\F}(\LL)
\]
The elements of such a code correspond in $\Bil(\LL\times\LL,\F)$ to bilinear forms of the form
\[
(x,y)\mapsto \Tr\!\left(\mu_0xy+\!\sum_{i=1}^{\ell}(\mu_i\sigma^i(x)y+\sigma^{-i}(\mu_i)\sigma^{-i}(x)y)\right)\!=\Tr\!\left(\mu_0xy+\!\sum_{i=1}^{\ell}\mu_i(\sigma^i(x)y+x\sigma^i(y))\right)
\]
for $\mu_0,\ldots,\mu_\ell\in\LL$. So these are indeed exactly the symmetric Gabidulin codes constructed in~\cite{schmidt2015symmetric}. Inside $\F^{n\times n}$, we get
\[
\left\{M\in\F^{n\times n}\,\middle|\,M^\top=M,\sigma^i(\gamma)M\beta^\top=0\mbox{ for }i=-(n-2-\ell),\ldots,-\ell\right\}.
\]
The following proposition shows that this is exactly the code we construct for $d=2$.
 
\begin{proposition}
The space
\[
\left\{M\in\F^{n\times n}\,\middle|\,M^\top=M,\sigma^i(\gamma)M\beta^\top=0\mbox{ for }i=-(n-2-\ell),\ldots,-\ell\right\}.
\]
equals the image of $\C_{n-2\ell}^{n,2}(\sigma^{n-2-\ell}(\beta),\gamma)$ under the isometry $f\mapsto C_f$.
\end{proposition}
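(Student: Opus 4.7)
The plan is to unpack both sides as explicit linear conditions on $M=C_f\in\F^{n\times n}$ and show they coincide term by term. The key observation is that for $d=2$ and any $v,w\in\LL^n$, the second-order differential operator $v(\partial)w(\partial)$ acts on $f\in S_{n,2}(\F)$ by the scalar $vC_fw^\top$. This reduces the problem to a matching of two families of $\LL$-linear equations in the entries of a symmetric matrix $M$, with the only remaining issue being the bookkeeping of the shifts by $\sigma$.

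I would first verify the formula $v(\partial)w(\partial)\circ f=vC_fw^\top$. Writing $f=\sum_{i\le j}a_{ij}x_ix_j$ and using the definition of $C_f$, one checks that $\partial_k\circ f=\sum_j(C_f)_{kj}x_j$, and iterating gives $w(\partial)\circ f=\sum_j(wC_f)_jx_j$ and then $v(\partial)w(\partial)\circ f=vC_fw^\top$. Applying this to the defining condition of $\C_{n-2\ell}^{n,2}(\sigma^{n-2-\ell}(\beta),\gamma)$, an element $f$ lies in the code if and only if
\[
\sigma^{n-2-\ell}(\beta)\,C_f\,\sigma^{r}(\gamma)^\top=0\qquad\text{for all }r\in\{0,\ldots,n-2\ell-2\}.
\]
Since this is a scalar equation and $C_f$ is symmetric, it is equivalent to $\sigma^{r}(\gamma)C_f\sigma^{n-2-\ell}(\beta)^\top=0$. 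Applying the Galois automorphism $\sigma^{-(n-2-\ell)}$, which fixes $C_f$ entry-wise because $C_f\in\F^{n\times n}$, the condition becomes $\sigma^{r-(n-2-\ell)}(\gamma)\,C_f\,\beta^\top=0$.

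Finally, I would match the index ranges: setting $i=r-(n-2-\ell)$, as $r$ runs through $\{0,\ldots,n-2\ell-2\}$ the exponent $i$ runs through $\{-(n-2-\ell),-(n-3-\ell),\ldots,-\ell\}$, which is exactly the range $i=-(n-2-\ell),\ldots,-\ell$ appearing in the description of the symmetric Gabidulin code. Since $f\mapsto C_f$ gives a bijection between $S_{n,2}(\F)$ and the space of symmetric matrices in $\F^{n\times n}$ (the defining condition $M^\top=M$ is thus automatic on the image), the two subspaces coincide. The only subtle point is tracking the $\sigma$-shifts carefully so that the precise interval $\{-(n-2-\ell),\ldots,-\ell\}$ appears, which justifies the particular choice of bases $\sigma^{n-2-\ell}(\beta)$ and $\gamma$ in the statement; beyond that the proof is purely a computation, and I expect no substantive obstacle.
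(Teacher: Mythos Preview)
Your proposal is correct and follows essentially the same route as the paper: reduce the defining differential conditions of $\C_{n-2\ell}^{n,2}(\sigma^{n-2-\ell}(\beta),\gamma)$ to the scalar equations $\sigma^i(\gamma)C_f\beta^\top=0$ via the identity $v(\partial)w(\partial)\circ f=vC_fw^\top$, then match the index ranges by a shift of $\sigma$. The only cosmetic difference is that the paper verifies this identity by linearity and checking it on the spanning set $f=(v(x))^2$, whereas you compute it directly from the defining relation $\partial_k\circ f=\sum_j(C_f)_{kj}x_j$; both are equally valid and lead to the same bookkeeping of the $\sigma$-exponents.
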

\begin{proof}
As the map $f\mapsto C_f$ is an isometry between $S_{n,2}(\F)$ and the space over symmetric $n\times n$ matrices over $\F$, it suffices to check that, for $f\in S_{n,2}(\F)$, the conditions
\[
\sigma^{n-2-\ell}(\beta(\partial))\sigma^{j}(\gamma(\partial)) \circ f=0, \quad\quad j=0,\ldots,n-2\ell-2,
\]
and the conditions
\[
\sigma^{i}(\gamma)C_f\beta^\top=0,\quad\quad i=-(n-2-\ell),\ldots,-\ell,
\]
are equivalent. Take $j\in\{0,\ldots,n-2\ell-2\}$ and $i=j-(n-2-\ell)$. We prove that
\[
\sigma^{n-2-\ell}(\beta(\partial))\sigma^{j}(\gamma(\partial))f=0\Longleftrightarrow \sigma^{i}(\gamma)C_f\beta^\top=0
\]
for all $f\in S_{n,2}(\F)$. Both equations are linear in $f$ and the space $S_{n,2}(\F)$ is spanned by polynomials of the form $(v(x))^2$ with $v\in \F^n$. So we can assume that $f=(v(x))^2$ and $C_f=2v^\top v$. Now it is easy to check that
\begin{align*}
\sigma^{n-2-\ell}(\beta(\partial))\sigma^{j}(\gamma(\partial))\circ f&=\sigma^{n-2-\ell}\left(\beta(\partial)\sigma^{i}(\gamma(\partial)) \circ f\right)\\
&=\sigma^{n-2-\ell}\left(2(\beta\cdot v)(\sigma^i(\gamma)\cdot v)\right)\\
&=\sigma^{n-2-\ell}\left(2\sigma^i(\gamma)v^\top v\beta^\top\right)\\
&=\sigma^{n-2-\ell}\left(\sigma^i(\gamma)C_f\beta^\top\right)
\end{align*}
for all $v\in\F^n$ and so
\[
\sigma^{n-2-\ell}(\beta(\partial))\sigma^{j}(\gamma(\partial)) \circ f=\sigma^{n-2-\ell}\left(\sigma^{i}(\gamma)C_f\beta^\top\right)
\]
for all $f\in S_{n,2}(\F)$. This shows the claimed equivalence.
\end{proof}

\section{Other ranks of polynomials}\label{sec:other_ranks}

There are multiple natural notions of rank for homogeneous polynomials. Each of these ranks defines a measure of distance between polynomials, and for each one we can ask for contructions of high-dimensional codes with a large minimal distance. In this section, we consider the nested notions of rank defined below. Our goal in particular is to motivate the study of a new class of codes we call strength-metric codes.

\begin{definition}
Let $f \in S_{n,d}(\F)$. The \textbf{strength} of $f$ is the integer
\[
\str(f)\coloneqq \min\{ r \in \Z_{\geq0} \mid f=g_1h_1+\ldots+g_rh_r,~ g_i,h_i \in S_n(\F), \deg (g_i),\deg (h_i)<d \}.
\]
The \textbf{slice rank} of $f$ is the integer
\[
\ssl(f)\coloneqq \min\{ r \in \Z_{\geq0} \mid f=\ell_1g_1+\ldots+\ell_rg_r,~ \ell_i \in S_{n,1}(\F),g_i\in S_{n,d-1}(\F) \}.
\]
The \textbf{essential rank} of $f$ is the integer
\[
\ess(f)\coloneqq \min\{ r \in \Z_{\geq0} \mid f=g(\ell_1,\ldots,\ell_r),~ g\in S_{r,d}(\F),\ell_i\in S_{n,1}(\F)\}.
\]
The \textbf{Waring rank} of $f$ is the integer
\[
\war(f)\coloneqq \min \{ r \in \Z_{\geq0} \mid f=\lambda_1\ell_1^d+\ldots+\lambda_r\ell_r^d,~ \lambda_i \in \F,\ell_i \in S_{n,1}(\F)\}.
\]
\end{definition}

\begin{proposition}\label{prop:chain_ranks}
Let $f \in S_{n,d}(\F)$. Then
\[
\str(f)\leq\ssl(f)\leq\ess(f)\leq\war(f).
\]
\end{proposition}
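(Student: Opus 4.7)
My plan is to verify the three inequalities $\str(f)\le\ssl(f)$, $\ssl(f)\le\ess(f)$, and $\ess(f)\le\war(f)$ separately, and in each case I would simply convert a representation certifying one rank into a representation certifying the next. All three inequalities fall out essentially from the definitions, so I do not expect any genuine obstacle; the only step with any content is the middle one. Throughout I assume $d\ge 2$, since for $d=1$ the three middle notions all collapse to the indicator of nonvanishing and strength is vacuous.

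For $\str(f)\le\ssl(f)$, assume $\ssl(f)=r$ so that $f=\ell_1 g_1+\cdots+\ell_r g_r$ with $\ell_i\in S_{n,1}(\F)$ and $g_i\in S_{n,d-1}(\F)$. Since $d\ge 2$, we have $\deg(\ell_i)=1<d$ and $\deg(g_i)=d-1<d$, so this is already a valid witness for $\str(f)\le r$.

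For $\ess(f)\le\war(f)$, assume $\war(f)=r$ so that $f=\lambda_1\ell_1^d+\cdots+\lambda_r\ell_r^d$ for some $\lambda_i\in\F$ and $\ell_i\in S_{n,1}(\F)$. Taking $g(y_1,\ldots,y_r):=\lambda_1 y_1^d+\cdots+\lambda_r y_r^d\in S_{r,d}(\F)$ one has $f=g(\ell_1,\ldots,\ell_r)$, which certifies $\ess(f)\le r$.

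The middle inequality $\ssl(f)\le\ess(f)$ is the only one requiring an actual manipulation, and here my plan is to use the simple fact that any homogeneous polynomial $g\in S_{r,d}(\F)$ with $d\ge 1$ admits a decomposition $g=y_1 h_1+\cdots+y_r h_r$ with $h_i\in S_{r,d-1}(\F)$. This is obtained by grouping monomials: for each monomial $y^\alpha$ appearing in $g$ (with $|\alpha|=d\ge 1$), at least one coordinate $\alpha_i$ is positive, and assigning it to the smallest such $i$ and factoring out $y_i$ produces such a decomposition (no assumption on characteristic is needed). Assuming $\ess(f)=r$ and writing $f=g(\ell_1,\ldots,\ell_r)$, substituting the decomposition gives
\[
f=\sum_{i=1}^r \ell_i\cdot h_i(\ell_1,\ldots,\ell_r),
\]
with $h_i(\ell_1,\ldots,\ell_r)\in S_{n,d-1}(\F)$, which certifies $\ssl(f)\le r$. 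Concatenating the three bounds yields the chain and completes the proof.
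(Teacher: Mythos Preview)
Your proof is correct and follows essentially the same approach as the paper: each inequality is established by observing that a decomposition witnessing one rank directly yields a decomposition witnessing the next. The paper's proof is terser---it simply states that a slice decomposition is a strength decomposition, that $f\in\F[\ell_1,\ldots,\ell_r]$ gives a slice decomposition with $r$ terms, and that a Waring decomposition places $f$ in $\F[\ell_1,\ldots,\ell_r]$---whereas you spell out the middle step explicitly via the monomial-grouping trick and add the remark about $d\ge 2$; these are welcome clarifications but not a different method.
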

\begin{proof}
A slice decomposition is in particular a strength decomposition. Hence $\str(f)\leq\ssl(f)$. When $f\in\F[\ell_1,\ldots,\ell_r]$, then $f$ has a slice decomposition with $r$ terms. Hence $\ssl(f)\leq\ess(f)$. When $f=\ell_1^d+\ldots+\ell_r^d$, then $f\in\F[\ell_1,\ldots,\ell_r]$. Hence $\ess(f)\leq\war(f)$.
\end{proof}

Proposition \ref{prop:chain_ranks} shows that codes with a high minimum strength distance also have a high minimum distance with respect to the slice rank, essential rank and Waring rank. The following result can be seen as a generalization of this statement.

\begin{theorem}[{\cite[Theorem 4]{bik2019strength}}]
Let $\F$ be a perfect field with $\mathrm{char}(\F)=0$ or $\mathrm{char}(\F)>d$ and let $\mathcal{P}$ be a property of homogeneous degree-$d$ polynomials over $\F$. Assume that for all $f\in S_{n,d}(\F)$ that have the property $\mathcal{P}$, $m\in\Z_{\geq0}$ and $\ell_1,\ldots,\ell_n\in S_{m,1}(\F)$, the polynomial $f(\ell_1,\ldots,\ell_n)\in S_{m,d}(\F)$ also has the property $\mathcal{P}$. Then either all polynomials have $\mathcal{P}$ or there exists a constant $k$ such that $\str(f)\leq k$ for all polynomials $f$ that have $\mathcal{P}$.
\end{theorem}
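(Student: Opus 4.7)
The plan is to argue the contrapositive. Write $V_n := \{f \in S_{n,d}(\F) \mid f \text{ has } \mathcal{P}\}$ and $V := \bigsqcup_n V_n$. I will assume that the strengths $\str(f)$ of elements of $V$ are unbounded, and deduce that $V_n = S_{n,d}(\F)$ for every $n$. The hypothesis on $\mathcal{P}$ unpacks into several structural properties of the family $V$: each $V_n$ is $\GL_n(\F)$-invariant (apply the hypothesis with $m=n$ and the $\ell_i$ forming an invertible change of coordinates), the family satisfies $V_n \hookrightarrow V_{n+1}$ by adjoining a dummy variable (apply it with $m=n+1$ and $\ell_i = x_i$), and specialization maps $V_n \to V_m$ obtained by setting variables to zero preserve $\mathcal{P}$.

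The argument reduces to the following universality statement: \emph{for every $g \in S_{m,d}(\F)$ there exists an integer $M = M(m,d,g)$ such that every $f \in S_{n,d}(\F)$ with $\str(f) \geq M$ admits $g$ as a specialization, i.e.\ $g = f(\ell_1,\ldots,\ell_n)$ for some $\ell_i \in S_{m,1}(\F)$.} Granting this, the conclusion is immediate: for any target $g \in S_{m,d}(\F)$, the unboundedness hypothesis produces some $f \in V_n$ with $\str(f) \geq M(m,d,g)$; the universality statement realizes $g$ as a pullback of $f$; and closure of $\mathcal{P}$ under substitution places $g$ in $V_m$. Since $g$ and $m$ are arbitrary, every polynomial has $\mathcal{P}$.

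Proving the universality statement is the technical heart of the argument, and it is precisely where the hypotheses that $\F$ is perfect and $\mathrm{char}(\F) = 0$ or $\mathrm{char}(\F) > d$ become essential. The cleanest framework is the topological Noetherianity of the polynomial functor $\Sym^d$ established by Bik-Draisma-Eggermont-Snowden: the $\GL_\infty$-invariant Zariski-closed subvarieties of the projective limit $\varprojlim_n S_{n,d}(\F)$ admit a complete dichotomy, either equalling the whole space or being cut out by a bounded-strength condition. Since the Zariski closure of $V_n$ remains substitution-closed by continuity of linear pullback, one can pass to closures and apply the dichotomy. A more computational alternative proceeds via the Ananyan-Hochster theorem: sufficiently high strength forces the ideal of partial derivatives of $f$ to contain regular sequences of arbitrary prescribed length, and from this a tangent-space and dimension count promote the substitution map $(\ell_1,\ldots,\ell_n) \mapsto f(\ell_1,\ldots,\ell_n)$ to surjectivity onto $S_{m,d}(\F)$ once $\str(f)$ is large enough.

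The main obstacle I foresee is exactly this deep structural input: the Noetherianity statement (or equivalently the Ananyan-Hochster regularity theorem) is what does the real work. A secondary subtlety is the descent from a geometric conclusion about Zariski closures back to the set-theoretic $V_n$, which requires that the universality statement be strong enough to exhibit $g$ as a \emph{specific} pullback of a concrete element of $V$, rather than merely as a limit of such pullbacks; once the universality lemma is formulated at the level of individual $f$, however, this descent is automatic.
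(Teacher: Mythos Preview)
The paper does not prove this theorem; it is quoted verbatim from \cite{bik2019strength} and used as a black box to motivate the study of strength-metric codes. There is therefore no in-paper proof to compare your proposal against.

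That said, your outline is a faithful sketch of the argument in the cited reference. The reduction to a universality statement---that any fixed $g$ arises as a linear specialization of every sufficiently high-strength $f$---is exactly the mechanism used there, and you correctly identify that the substantive work lies in establishing this universality via Noetherianity of $\Sym^d$ (originally due to Draisma, extended in the Bik--Draisma--Eggermont and later Bik--Draisma--Eggermont--Snowden frameworks) or via Ananyan--Hochster-type regularity. Your closing caveat is also apt: the proposal is a roadmap rather than a proof, since the universality lemma is itself a deep theorem whose proof you do not supply. One minor correction: the Zariski-closure/descent subtlety you flag does not actually arise in the argument of \cite{bik2019strength}, because the universality statement there is proved pointwise for individual high-strength $f$, so no passage to closures is needed.
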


By letting $\mathcal{P}$ be the property of having rank $\leq\ell$ for some constant $\ell$ and one of the notions of rank above, we see that having high strength implies that the other ranks are also high. The theorem suggests that codes with a high minimum strength distance, since their elements do not have such properties, might have good decoding properties. All of this leads us to focus on the following question.

\begin{question}
Can one explicitly construct codes with a high minimal strength distance?
\end{question}

Already for codes of dimension $1$, i.e., codes spanned by a single polynomial, this question is nontrivial. In this case, the question asks for explicit polynomials of high strength. The following lemma shows one way to lower bound the strength of a polynomial.

\begin{lemma}\label{lm:strengthbound}
Let $f \in S_{n,d}(\F)$ and let $\Sing(f)$ be the singular locus of the hypersurface $\{f=0\}$. Then $\dim\Sing(f)\geq n-2\str(f)$. In particular, when $\Sing(f)=\{0\}$, we have $\str(f)\geq n/2$.
\end{lemma}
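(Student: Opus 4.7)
The plan is a direct application of Krull's height theorem to the ideal generated by the factors of a minimum-strength decomposition.

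First I would pick a decomposition $f=\sum_{i=1}^{r} g_i h_i$ realizing $r=\str(f)$, with $\deg g_i,\deg h_i<d$. Using that $f$ is homogeneous of degree $d\geq 1$, I would reduce to the case where each $g_i,h_i$ is itself homogeneous of positive degree (a standard but somewhat technical reduction, compatible with the definition in the paper). This ensures all the generators lie in the maximal ideal $\mathfrak m=(x_1,\ldots,x_n)$.

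The crux is then just the product rule: for each coordinate $j$,
\[
\partial_j f \;=\; \sum_{i=1}^{r}\bigl[(\partial_j g_i)\,h_i \;+\; g_i\,(\partial_j h_i)\bigr] \;\in\; J \;:=\; (g_1,h_1,\ldots,g_r,h_r).
\]
Together with $f=\sum g_i h_i\in J$, this shows $V(J)\subseteq V(f,\partial_1 f,\ldots,\partial_n f)=\Sing(f)$. Since $J$ is a homogeneous ideal with $2r$ generators and $V(J)$ is nonempty (it contains the origin), Krull's height theorem forces every irreducible component of $V(J)$ to have codimension at most $2r$, so $\dim V(J)\geq n-2r$. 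The first assertion follows, and the last assertion is immediate: if $\Sing(f)=\{0\}$, then $0=\dim\Sing(f)\geq n-2\str(f)$, so $\str(f)\geq n/2$.

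The only real obstacle is the homogeneity reduction. The definition given in the paper allows $g_i,h_i$ to be non-homogeneous, in which case $V(J)$ could a priori be empty (for instance if some $g_i$ or $h_i$ has a nonzero constant term, making $J$ the unit ideal), so that Krull's bound would become vacuous. I would handle this either by proving a preliminary lemma that for a homogeneous $f$ the strength is always realized by a homogeneous decomposition (invoking \cite{bik2019strength} if needed), or by passing to the degree-$d$ parts of the products $g_i h_i$ and arguing on an enlarged but still bounded number of homogeneous pieces. Everything else in the argument is elementary.
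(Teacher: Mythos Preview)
Your proposal is correct and is essentially the paper's own argument: take a minimal strength decomposition $f=\sum g_ih_i$, observe via the product rule that $V(g_1,h_1,\ldots,g_r,h_r)\subseteq\Sing(f)$, and bound the codimension of this variety by $2r$. The paper states this in two lines and does not explicitly invoke Krull's theorem or address the homogeneity issue you flag; your care on that point (ensuring the $g_i,h_i$ can be taken homogeneous so that $V(J)\ni 0$ is nonempty) is in fact more thorough than the paper's own proof.
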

\begin{proof}
When $f=g_1h_1+\ldots+g_rh_r$ with $\deg g_i,\deg h_i<d$, then $\{g_1,h_1,\ldots,g_r,h_r=0\}$ is contained in the singular locus of $\{f=0\}$. This subvariety has codimension $\leq 2r$ in $\F^n$.
\end{proof}

Using Lemma \ref{lm:strengthbound}, we know how to construct polynomials with strength $\geq \lceil n/2\rceil$. The following theorems show, at least when $\mathrm{char}(\F)=0$, that this is not close to the maximal possible strength.

\begin{theorem}[{\cite[Theorem 12.8]{har1992alggoem}}]
Suppose that $\F$ is algebraically closed and $\mathrm{char}(\F)=0$. Then the generic slice rank of a polynomial in $S_{n,d}(\F)$ equals $\lceil n/2\rceil$ when $d=2$ and
\[
\min\left\{ r\in\Z_{\geq0} \,\middle|\, r(n+1-r) \geq \binom{n-r+d}{d}\right\}
\]
when $d\geq 3$.
\end{theorem}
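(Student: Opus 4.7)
The plan is to handle $d=2$ and $d\geq 3$ with different techniques: classical quadratic form theory in the first case, and an algebraic-geometric dimension count in the second.

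For $d=2$, I would first establish the upper bound $\ssl(f)\leq\lceil n/2\rceil$ for generic $f$ by noting that a generic quadratic form is nondegenerate, hence after a linear change of variables equals $y_1^2+\cdots+y_n^2$ over our algebraically closed field. Pairing squares via $y_i^2+y_{i+1}^2=(y_i+\sqrt{-1}\,y_{i+1})(y_i-\sqrt{-1}\,y_{i+1})$ then exhibits $f$ as a sum of $\lceil n/2\rceil$ products of two linear forms. For the matching lower bound, any representation $f=\sum_{j=1}^r\ell_j g_j$ forces $f$ to vanish on the linear subspace $V=\{\ell_1=\cdots=\ell_r=0\}$, of dimension at least $n-r$. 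Since $V$ is then totally isotropic for the nondegenerate form $f$ and such forms have maximal isotropic subspaces of dimension $\lfloor n/2\rfloor$, we obtain $n-r\leq\lfloor n/2\rfloor$, i.e.\ $r\geq\lceil n/2\rceil$.

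For $d\geq 3$, the idea is a dimension count on the incidence variety
\[
W_r=\bigl\{(U,f)\in\mathrm{Gr}(r,S_{n,1}(\F))\times S_{n,d}(\F)\,:\,f\in U\cdot S_{n,d-1}(\F)\bigr\},
\]
whose second projection $\pi$ has image exactly $\Sigma_r=\{f\in S_{n,d}(\F):\ssl(f)\leq r\}$. Projecting first to the Grassmannian realizes $W_r$ as a vector bundle whose fibers, being degree-$d$ parts of the ideals generated by $r$-dimensional spaces of linear forms, have codimension $\binom{n-r+d-1}{d}$ in $S_{n,d}(\F)$; this yields an explicit value for $\dim W_r$. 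The fiber of $\pi$ over a generic $f\in\Sigma_r$ is the Fano scheme of $(r-1)$-dimensional projective linear subspaces contained in the hypersurface $\{f=0\}$. Generic slice rank is $\leq r$ precisely when $\pi$ is dominant, which by subtracting the expected generic fiber dimension from $\dim W_r$ reduces to the non-negativity of the expected dimension of this Fano scheme. Rearranging that non-negativity condition produces the stated numerical inequality relating $r$, $n$ and $d$.

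The main obstacle is verifying that for generic $f$ the Fano scheme really attains its expected dimension (or is empty when that dimension is negative), rather than having excess dimension that would invalidate the tally. The standard route is to present the universal Fano scheme inside $\mathrm{Gr}(r,n)\times S_{n,d}(\F)$ as the zero locus of a natural section of a vector bundle on the Grassmannian, and then apply a Bertini/generic-smoothness argument to the projection $\pi$, using that this section is generically transverse as $f$ varies over $S_{n,d}(\F)$. Once this geometric input is in hand, the dimension tally above pins down exactly when $\Sigma_r=S_{n,d}(\F)$, yielding the generic slice rank and completing the proof.
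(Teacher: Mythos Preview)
The paper does not prove this theorem; it is quoted from Harris's textbook \cite{har1992alggoem} without proof, serving only as motivation in Section~\ref{sec:other_ranks}. So there is no paper-side argument to compare against.

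Your outline is the standard route and essentially what Harris does. For $d=2$, diagonalize and pair squares for the upper bound, and bound below via maximal isotropic subspaces---this is correct. For $d\geq3$, your incidence-variety setup is exactly right: $\ssl(f)\leq r$ is equivalent to the hypersurface $\{f=0\}\subset\mathbb{P}^{n-1}$ containing a projective linear subspace of dimension $n-r-1$, so $\Sigma_r=S_{n,d}(\F)$ holds precisely when the Fano scheme $F_{n-r-1}(X)$ is nonempty for generic $X$. You correctly flag that the substantive geometric input is that a generic hypersurface has Fano scheme of the expected dimension (and is empty when that number is negative); this is precisely the content of Harris's Theorem~12.8, proved there by an incidence-correspondence argument together with an explicit tangent-space computation at a single point.

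One discrepancy you should not paper over: your fiber codimension $\binom{n-r+d-1}{d}$ is correct for $S_{n,d}(\F)=\F[x_1,\ldots,x_n]_d$ (hypersurfaces in $\mathbb{P}^{n-1}$), and the resulting non-negativity condition is $r(n-r)\geq\binom{n-r+d-1}{d}$, not the displayed inequality $r(n+1-r)\geq\binom{n-r+d}{d}$. The latter is Harris's formula for hypersurfaces in $\mathbb{P}^n$ (i.e., $n+1$ variables). So your sentence ``rearranging that non-negativity condition produces the stated numerical inequality'' does not literally hold; either adjust your variable count or note the indexing shift explicitly.
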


\begin{theorem}[{\cite[Corollary 1.9]{bal2021genericstrength}}]
Suppose that $\F$ is algebraically closed and $\mathrm{char}(\F)=0$. Then the strength and slice rank of a generic element of $S_{n,d}(\F)$ coincide. 
\end{theorem}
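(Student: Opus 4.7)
The pointwise inequality $\str(f)\leq\ssl(f)$ from Proposition~\ref{prop:chain_ranks} already yields that the generic strength on $S_{n,d}(\F)$ is at most the generic slice rank. All the work lies in the reverse inequality, and the plan is to show that the Zariski closures
\[
X_r=\overline{\{f\in S_{n,d}(\F)\mid \str(f)\leq r\}},\qquad Y_r=\overline{\{f\in S_{n,d}(\F)\mid \ssl(f)\leq r\}}
\]
coincide for every $r$. Since $Y_r\subseteq X_r$ is immediate, it suffices to show $X_r\subseteq Y_r$; once this is known, the generic rank under either notion is the smallest $r$ for which this common closure fills $S_{n,d}(\F)$.

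To set up the comparison, for each composition $(a_1,\ldots,a_r)$ with $1\leq a_i\leq d-1$, I would introduce the summation morphism
\[
\Phi_{(a_i)}\colon\prod_{i=1}^r\bigl(S_{n,a_i}(\F)\times S_{n,d-a_i}(\F)\bigr)\longrightarrow S_{n,d}(\F),\qquad ((g_i,h_i))_i\longmapsto\sum_{i=1}^r g_ih_i.
\]
Then $X_r$ is the closure of the union of the images of the $\Phi_{(a_i)}$, while $Y_r$ is the closure of the image of $\Phi_{(1,\ldots,1)}$. A standard parameter count, exploiting that the generic fiber of $\Phi_{(1,\ldots,1)}$ is a $\GL_r$-orbit, gives $\dim Y_r=r\bigl(n+\binom{n+d-2}{d-1}\bigr)-r^2$ (capped at $\binom{n+d-1}{d}$), with analogous formulas for the other splits, and over $\mathrm{char}(\F)=0$ the choice $a_i=1$ maximizes these expressions. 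This reduces the problem to showing that the image of each $\Phi_{(a_i)}$ is contained in $Y_r$.

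The core step is a deformation argument. Given a generic $f=\sum_i g_ih_i$ in the image of some $\Phi_{(a_i)}$ with $\max_i a_i\geq 2$, I would construct a one-parameter family deforming the tuple $((g_j,h_j))_j$ so that for $t\neq 0$ the factor $g_i(t)$ acquires a linear divisor $\ell_i(t)$, while the sum $\sum_j g_j(t)h_j(t)$ remains constantly equal to $f$. Such a compensated deformation exists because the fiber of $\Phi_{(a_i)}$ through $((g_j,h_j))_j$ is positive-dimensional (containing at least the $r$-dimensional torus orbit $(g_i,h_i)\mapsto(\lambda_ig_i,\lambda_i^{-1}h_i)$, plus additional non-uniqueness coming from the richness of strength decompositions), so one can move within the fiber into the hypersurface of reducible $a_i$-forms in the $i$-th factor. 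Extracting the linear divisor $\ell_i(t)$ converts the $i$-th summand into a slice term; iterating on $\max_i a_i$ realizes $f$ as a limit of slice-rank-$r$ polynomials, hence $f\in Y_r$.

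The main obstacle is the compensated-deformation construction: the reducible locus in $S_{n,a_i}(\F)$ has codimension $1$, so the fiber of $\Phi_{(a_i)}$ must carry a transverse tangent direction along which the $i$-th factor moves into this locus while the compensating moves in the remaining summands preserve the image $f$. Both hypotheses are essential here: characteristic zero lets one freely use generic smoothness and infinitesimal deformation arguments to produce the desired tangent directions, while algebraic closure guarantees the irreducibility of the parameter spaces and the existence of linear factors once a form lands in the reducible locus.
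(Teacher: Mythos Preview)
Note first that the paper does not prove this theorem; it is quoted as \cite[Corollary~1.9]{bal2021genericstrength}, so there is no in-paper argument to compare against.

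Your proposed route nonetheless contains a genuine error. You set out to prove $X_r=Y_r$ for \emph{every} $r$, but this fails already for $r=1$ whenever $d\geq4$ and $n\geq3$. Take generic irreducible quadrics $q_1,q_2\in S_{3,2}(\F)$ and put $f=q_1q_2\in S_{3,4}(\F)$. Then $\str(f)=1$, so $f\in X_1$; but $f$ has no linear factor (unique factorisation in $\F[x_1,x_2,x_3]$), so $\ssl(f)\geq2$. The locus $Y_1=\{\ell p\mid \ell\in S_{n,1}(\F),\,p\in S_{n,d-1}(\F)\}$ is already Zariski closed (it is the affine cone over the image of the projective morphism $\mathbb{P}(S_{n,1})\times\mathbb{P}(S_{n,d-1})\to\mathbb{P}(S_{n,d})$), hence $f\notin Y_1$ and $X_1\supsetneq Y_1$.

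The same example pinpoints where the deformation step breaks. The fiber of $\Phi_{(2)}\colon S_{3,2}(\F)\times S_{3,2}(\F)\to S_{3,4}(\F)$ over $q_1q_2$ consists, again by unique factorisation, only of the points $(\lambda q_1,\lambda^{-1}q_2)$ and $(\lambda q_2,\lambda^{-1}q_1)$ with $\lambda\in\F^*$. Its projection to the first factor is the union of two punctured lines through $q_1$ and $q_2$, and this never meets the reducible locus in $S_{3,2}(\F)$. So one cannot ``move within the fiber into the hypersurface of reducible $a_i$-forms'': a positive-dimensional fiber together with a codimension-one target does not force an intersection, because the projection of the fiber to the $i$-th factor may have very small image. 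The torus directions you invoke contribute nothing here, since scaling preserves irreducibility.

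For the \emph{generic} statement the containment $X_r\subseteq Y_r$ is neither true nor needed. Because $S_{n,d}(\F)$ is irreducible and $X_r$ is the finite union of the irreducible sets $\overline{\mathrm{Im}\,\Phi_{(a_i)}}$, what one actually wants is that the slice piece $Y_r=\overline{\mathrm{Im}\,\Phi_{(1,\ldots,1)}}$ has the largest dimension among them; then $X_r=S_{n,d}(\F)$ forces $Y_r=S_{n,d}(\F)$. Establishing that comparison is, however, not the routine parameter count you sketch---lower-bounding $\dim Y_r$ is delicate, since the generic fiber of $\Phi_{(1,\ldots,1)}$ can be strictly larger than a single $\GL_r$-orbit---and it is precisely what is carried out in \cite{bal2021genericstrength}.
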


As bounded slice rank is a Zariski-closed condition, the generic and maximal slice ranks and strengths all coincide. So we see that polynomials with much higher strength than $\lceil n/2\rceil$ exist. The best result known in this direction is a construction by Gesmundo-Ghosal-Ikenmeyer-Lysikov \cite{ges2022strengthdecomp} of polynomials with slice rank $\geq \lceil n/2\rceil +1$. Lemma~\ref{lm:strengthbound} however does give us a starting point for investigating codes with the strength metric.

\begin{question}
Can one explicitely construct high-dimensional codes in $S_{n,d}(\F)$ where every nonzero element defines a smooth projective hypersurface?
\end{question}

\appendix

\section{Orbits of small-width multisets modulo \texorpdfstring{$n$}{n}}

Fix integers $d\geq 2$ and $n>k\geq 0$. Consider the set $\mathcal{M}$ of $d$-element multisets 
\[
S=\{\!\!\{x_1,\ldots,x_d\}\!\!\}
\]
whose elements $x_1,\ldots,x_d$ lie in $\Z/n\Z$ and the action of $\Z$ on $\mathcal{M}$ given by
\[
x+S=\{\!\!\{x+y\mod n \mid y\in S\}\!\!\}
\]
We identify the elements of $\{0,\ldots,n-1\}$ with their images in $\Z/n\Z$ and consider the subset $\mathcal{R}$ of $\mathcal{M}$ consisting of all multisets whose elements are all contained in $\{0,\ldots,k\}$ and that contain $0$ at least once. We now have the following question.

\begin{question}
How many $\mathbb{Z}$-orbits of $\mathcal{M}$ intersect $\mathcal{R}$?
\end{question}

The goal of this appendix is to answer this question when $\frac{2}{3}n>k$. We first consider the case where $n\gg k$.

\begin{proposition}
Suppose that $\frac{1}{2}n>k$. Then every $S\in\mathcal{R}$ lies in a distinct $\mathbb{Z}$-orbit of $\mathcal{M}$.
\end{proposition}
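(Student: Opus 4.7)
The plan is to show that if $S_1, S_2\in\mathcal{R}$ satisfy $S_2=x+S_1$ for some $x\in\mathbb Z$, then $S_1=S_2$. Since the action factors through $\mathbb Z/n\mathbb Z$, I may replace $x$ by its residue and assume $x\in\{0,\ldots,n-1\}$. The goal is then to rule out $x\neq 0$ using the two combined constraints that both multisets contain $0$ and that all their elements lie in $\{0,\ldots,k\}$.

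First I would use the fact that $0\in S_2$ to deduce the existence of a $y\in S_1$ with $x+y\equiv 0\pmod n$. Since $y\in\{0,\ldots,k\}$ and $y\equiv -x\pmod n$, we must either have $x=0$ (and $y=0$), or else $y=n-x$ with $n-x\leq k$, which forces $x\geq n-k$. Next I would apply the symmetric argument: since $0\in S_1$, the shifted element $x+0=x$ must belong to $S_2\subseteq\{0,\ldots,k\}$, hence $x\leq k$. Combining these two inequalities gives $n-k\leq x\leq k$, so $n\leq 2k$, contradicting the assumption $n>2k$. Therefore $x=0$ and $S_1=S_2$.

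There is no substantial obstacle here; the argument is a short pigeonhole/gap computation and the only thing to be careful about is distinguishing the trivial case $x=0$ from $x\in\{1,\ldots,n-1\}$ so that the equation $y\equiv -x\pmod n$ genuinely produces a distinct element $n-x$ of $\{1,\ldots,n-1\}$ (rather than $0$ itself) when working inside the residue system $\{0,\ldots,n-1\}$. The assumption $\tfrac12 n>k$ is used exactly once, at the final step, to close the window $[n-k,k]$ to the empty set whenever $x\neq 0$.
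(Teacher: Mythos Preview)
Your proof is correct and follows essentially the same route as the paper: both arguments use $0\in S_1$ to force $x\le k$ and $0\in S_2$ (equivalently, $S_1=(n-x)+S_2$) to force $n-x\le k$, yielding $n\le 2k$ in contradiction to the hypothesis. The only cosmetic difference is that the paper starts by assuming $S\neq S'$ and derives a contradiction, whereas you conclude directly that $x=0$.
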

\begin{proof}
Let $S,S'\in\mathcal{R}$ be distinct multisets and assume that $S'=x+S$ for some $x\in\Z$. We may assume that $x\in\{1,\ldots,n-1\}$. Note that $x=x+0\in x+S=S'$ and therefore $x\in\{0,\ldots,k\}$. We have $S=(n-x)+S'$. It follows that $n-x=(n-x)+0\in (n-x)+S'=S$ and hence $n-x\in\{0,\ldots,k\}$. So $n=x+(n-x)\leq k+k$. Contradiction, so $S,S'$ must be in distinct $\mathbb{Z}$-orbits.
\end{proof}

\begin{lemma}
The number of elements of $\mathcal{R}$ equals
\[
\binom{(d-1)+k}{d-1}.
\]
\end{lemma}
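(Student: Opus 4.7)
The plan is to set up a simple bijection between $\mathcal{R}$ and the set of $(d-1)$-element multisets of $\{0,1,\ldots,k\}$, and then apply the standard stars-and-bars count.

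First, I would observe that every element $S \in \mathcal{R}$ contains $0$ at least once by definition. So I can define a map $\varphi\colon \mathcal{R} \to \mathcal{N}$, where $\mathcal{N}$ denotes the set of all $(d-1)$-element multisets with elements in $\{0,1,\ldots,k\}$, by removing exactly one copy of $0$ from $S$. This is clearly well-defined: the resulting multiset has $d-1$ elements, all still lying in $\{0,1,\ldots,k\}$. The inverse map $\psi\colon \mathcal{N} \to \mathcal{R}$ is given by adjoining one copy of $0$ to any $(d-1)$-element multiset $T$ of $\{0,1,\ldots,k\}$; the result is a $d$-element multiset contained in $\{0,1,\ldots,k\}$ that contains $0$, hence lies in $\mathcal{R}$. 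A one-line check shows $\varphi$ and $\psi$ are mutual inverses.

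With the bijection in hand, I would invoke the standard formula for the number of multisets: the number of $(d-1)$-element multisets drawn from a set of $k+1$ elements equals
\[
\binom{(k+1)+(d-1)-1}{d-1} = \binom{k+d-1}{d-1}.
\]
This matches the claimed formula $\binom{(d-1)+k}{d-1}$, completing the proof.

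There is no real obstacle here; the only point requiring minor care is making sure the bijection is well-defined (namely, that removing one copy of $0$ from an element of $\mathcal{R}$ always yields a multiset of the correct size with elements in $\{0,\ldots,k\}$, which follows immediately from the definition of $\mathcal{R}$).
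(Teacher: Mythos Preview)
Your proof is correct and essentially the same as the paper's: the paper encodes each $S\in\mathcal{R}$ by its multiplicity vector $(n_0,\ldots,n_k)$ with $n_0\geq 1$ and $\sum n_i=d$, then subtracts $1$ from $n_0$ to get a composition of $d-1$ into $k+1$ nonnegative parts, which is exactly your ``remove one copy of $0$'' bijection phrased in coordinates. Both arguments reduce to the standard stars-and-bars count $\binom{(d-1)+k}{d-1}$.
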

\begin{proof}
One way to encode a multiset $S\in\mathcal{R}$ is as a tuple $(n_0,\ldots,n_k)$ where $n_i$ equals the number of occurrences of $i\in\{0,\ldots,k\}$ in $S$. We then see that such tuples represent an element of $\mathcal{R}$ if and only if $n_0\in\N$, $n_1,\ldots,n_k\in\Z_{\geq0}$ and $n_0+\ldots+n_k=d$, and in this case the multiset $S$ is uniquely determined by $(n_0,\ldots,n_k)$. This shows that the size of $\mathcal{R}$ equals the number of $(k+1)$-tuples $(n_0-1,n_1,\ldots,n_k)$ of nonnegative integers adding up to $d-1$. There are
\[
\binom{(d-1)+k}{d-1}
\]
such tuples.
\end{proof}

\begin{proposition}
Suppose that $\frac{1}{2}n>k$. Then $\mathcal{R}$ intersects
\[
\binom{(d-1)+k}{d-1}
\]
$\mathbb{Z}$-orbits of $\mathcal{M}$.
\end{proposition}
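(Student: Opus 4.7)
The plan is to simply combine the two immediately preceding results. The previous proposition, under the hypothesis $\tfrac{1}{2}n > k$, asserts that distinct elements of $\mathcal{R}$ lie in \emph{distinct} $\mathbb{Z}$-orbits of $\mathcal{M}$. This exactly says that the natural map
\[
\Psi\colon \mathcal{R} \longrightarrow \{\mathbb{Z}\text{-orbits of }\mathcal{M}\text{ that intersect }\mathcal{R}\}, \qquad S \longmapsto \mathbb{Z}\cdot S,
\]
is injective. By construction every orbit intersecting $\mathcal{R}$ contains at least one element of $\mathcal{R}$, so $\Psi$ is surjective onto its target. Hence $\Psi$ is a bijection.

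Therefore the number of orbits intersecting $\mathcal{R}$ equals $|\mathcal{R}|$, which the preceding lemma computes to be $\binom{(d-1)+k}{d-1}$. That yields the claim.

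There is no real obstacle here: all the work has already been done in the preceding proposition (the dynamical/arithmetic input that $\tfrac{1}{2}n > k$ forces orbits of elements of $\mathcal{R}$ to be pairwise disjoint) and in the preceding lemma (the combinatorial count of $\mathcal{R}$ via the encoding $S \leftrightarrow (n_0,\dots,n_k)$ with $n_0 \geq 1$ and $\sum n_i = d$). The only thing to be careful about is to state the bijection explicitly so that the reader sees that the counts on both sides refer to the same set.
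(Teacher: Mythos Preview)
Your proposal is correct and matches the paper's approach exactly: the paper states this proposition without proof, as it is an immediate corollary of the preceding proposition (distinct elements of $\mathcal{R}$ lie in distinct orbits when $\frac{1}{2}n>k$) and the preceding lemma ($|\mathcal{R}|=\binom{(d-1)+k}{d-1}$). Your explicit bijection $\Psi$ simply spells out what the paper leaves implicit.
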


The second case we consider is when $n$ is smaller.

\begin{proposition}
Suppose that $\frac{2}{3}n>k\geq \frac{1}{2}n$. Then the map $\pi$ sending
\[
((n_0,\ldots,n_a),(m_0,\ldots,m_b))\mapsto \bigcup_{i=0}^a\bigcup_{j=1}^{n_i}\{\!\!\{i\}\!\!\}\cup\bigcup_{i=0}^b\bigcup_{j=1}^{m_i}\{\!\!\{k-b+i\}\!\!\}
\]
is a bijection between the sets
\[
\left\{((n_0,\ldots,n_a),(m_0,\ldots,m_b))\,\middle|\,\begin{array}{l}a,b\in\Z_{\geq0}, n_0,m_0\in\N,n_1,\ldots,n_a,m_1,\ldots,m_b\in\Z_{\geq0},\\ a+b=2k-n,~ n_0+\ldots+n_a+m_0+\ldots+m_b=d,\\(n_0,\ldots,n_a)\neq (m_0,\ldots,m_b)\end{array}\right\}
\]
and $\left\{S\in\mathcal{R}\,\middle|\, \#(\Z\cdot S)\cap \mathcal{R}>1\right\}$. Furthermore, we have $\Z\cdot \pi((n_0,\ldots,n_a),(m_0,\ldots,m_b))\cap\mathcal{R}$ equals 
\[
\{\pi((n_0,\ldots,n_a),(m_0,\ldots,m_b)),\pi((m_0,\ldots,m_b),(n_0,\ldots,n_a))\}
\]
for all elements of the former set.
\end{proposition}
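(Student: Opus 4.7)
The plan is to identify each multiset $S$ in the target with a tuple-pair via the unique nonzero shift sending $S$ back into $\mathcal{R}$. The hinge is a uniqueness lemma for such shifts, which is where the hypothesis $k<\tfrac{2}{3}n$ enters.

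First I would check that $\pi$ is well-defined: for a source tuple, $\pi(N,M)$ is a $d$-element multiset containing $0$ (by $n_0\geq 1$), with support in $\{0,\ldots,a\}\cup\{k-b,\ldots,k\}\subseteq\{0,\ldots,k\}$, the two blocks being disjoint because $a+b=2k-n<k$ (using only $k<n$).

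The key lemma I would then prove: for any $S\in\mathcal{R}$, there is at most one $y\in\{1,\ldots,n-1\}$ with $y+S\in\mathcal{R}$, and every such $y$ lies in $\{n-k,\ldots,k\}$. The inclusion is immediate from $y=y+0\in y+S\subseteq\{0,\ldots,k\}$ and $n-y\in S\subseteq\{0,\ldots,k\}$. For uniqueness, two distinct candidates $y>y'$ in $\{n-k,\ldots,k\}$ give $y-y'\in\{1,\ldots,2k-n\}$, while $y-y'$ also acts as a valid nonzero shift taking $y'+S\in\mathcal{R}$ to $y+S\in\mathcal{R}$, so by the same inclusion $y-y'\geq n-k$; but $2k-n<n-k$ is exactly $k<\tfrac{2}{3}n$, a contradiction.

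Next I would verify the shift formula: for $S=\pi(N,M)$, taking $y:=n-k+b\in\{n-k,\ldots,k\}$, a direct computation modulo $n$ shows $y+S=\pi(M,N)$. The low block $\{0,\ldots,a\}$ shifts (no wrap) to $\{k-a,\ldots,k\}$ via the identity $k-a=n-k+b$, and the high block $\{k-b,\ldots,k\}$ wraps around to $\{0,\ldots,b\}$; the multiplicities $n_i$ and $m_j$ swap roles as required. For surjectivity I would reverse this: given $S$ in the target with its unique $y$ from the lemma, set $a=k-y$, $b=y+k-n$, and define $n_i,m_j$ as the multiplicities in $S$ of $i$ and $k-b+j$ respectively; the case split $s+y<n$ versus $s+y\geq n$ for $s\in S$ confines the support to $\{0,\ldots,a\}\cup\{k-b,\ldots,k\}$, and $n_0,m_0\geq 1$ follow from $0\in S$ and $n-y\in S$.

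For injectivity I would observe that any $\pi(N,M)$ has support in $\{0,\ldots,2k-n\}\cup\{n-k,\ldots,k\}$, which is a disjoint union precisely under $k<\tfrac{2}{3}n$; the constraint $m_0\geq 1$ then forces $k-b=\min(S\cap\{n-k,\ldots,k\})$, an invariant of $S$, so $b$ (and hence $a$ and $(N,M)$) is uniquely recovered. The orbit description then follows: by the key lemma $\Z\cdot S\cap\mathcal{R}$ has at most two elements, the shift formula supplies both $\pi(N,M)$ and $\pi(M,N)$ as members, and $(N,M)\neq(M,N)$ on the source corresponds exactly (by injectivity) to these two being distinct multisets. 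The main obstacle I anticipate is the bookkeeping in the shift formula: checking that after the wrap-around the high block lands exactly on $\{0,\ldots,b\}$ with multiplicities $m_j$ at position $j$ requires careful substitution of $a+b=2k-n$. Once that identity is in hand, the rest is packaging of the uniqueness lemma, whose sharpness rests on the strict inequality $2k-n<n-k$ supplied by $k<\tfrac{2}{3}n$.
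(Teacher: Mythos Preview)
Your proposal is correct and follows essentially the same route as the paper: both arguments hinge on showing that a multiset $S\in\mathcal{R}$ admits at most one nonzero shift back into $\mathcal{R}$, then use the explicit shift $y=n-k+b$ to realize $\pi(M,N)$ as the unique orbit partner of $\pi(N,M)$, and recover $(a,b)$ from $S$ via the position of the high block. The one noteworthy difference is your proof of uniqueness: the paper pins down the shift by the formula $n-x=\min\{y\in S\mid y\geq n-k\}$ and derives a contradiction from a putative smaller element, whereas your ``difference of shifts'' argument (two shifts $y>y'$ force $y-y'\in\{1,\ldots,2k-n\}\cap\{n-k,\ldots,k\}=\emptyset$) is a cleaner and more symmetric way to extract the same inequality $2k-n<n-k$ from the hypothesis $k<\tfrac{2}{3}n$.
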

\begin{proof}
First consider $\pi$ as a map to $\mathcal{R}$. We start by showing that $\pi$ is injective. Clearly, it is enough to show that if
\[
S:=\pi((n_0,\ldots,n_a),(m_0,\ldots,m_b))=\pi((n'_0,\ldots,n'_{a'}),(m'_0,\ldots,m'_{b'})),
\]
then $(a',b')=(a,b)$. Suppose not. We may assume that $a'>a$ and $b'<b$. As $k-b\in S$ and $k-b<k-b'$, we see that $k-b\leq a'$. Hence $k\leq a'+b\leq (2k-n)+(2k-n)=4k-2n$ and so $2n\leq 3k$. This contradicts the assumption that $k<\frac{2}{3}n$. Hence $(a',b')=(a,b)$ and $\pi$ is injective. Next, we show that the image of $\pi$ lies in the stated codomain. Let $((n_0,\ldots,n_a),(m_0,\ldots,m_b))$ be an element of the domain. We show that the $\mathbb{Z}$-orbit of $\pi((n_0,\ldots,n_a),(m_0,\ldots,m_b))$ contains at least two elements of $\mathcal{R}$. This is indeed the case as
\[
\pi((m_0,\ldots,m_b),(n_0,\ldots,n_a))=n-k+b+\pi((n_0,\ldots,n_a),(m_0,\ldots,m_b))
\]
is also contained in this orbit and $(n_0,\ldots,n_a)\neq (m_0,\ldots,m_b)$.\bigskip

Let $S,S'\in\mathcal{R}$ be distinct multisets and assume that $S'=x+S$ for some $x\in\Z$. We may assume that $x\in\{1,\ldots,n-1\}$. Our goal is to prove that 
\[
S=\pi((n_0,\ldots,n_a),(m_0,\ldots,m_b))\mbox{ and }S'=\pi((m_0,\ldots,m_b),(n_0,\ldots,n_a))
\]
for some unique $n_0,\ldots,n_a,m_0,\ldots,m_b$ with the required properties.\bigskip

The first step is to compute $x$ from $S$. We have $x=x+0\in x+S=S'$ and $n-x=(n-x)+0\in (n-x)+S'=S$, so $x,n-x\in\{0,\ldots,k\}$. So $n-k\leq x\leq k$. As $0\in S'=x+S$ and $x\leq k$, we have $n-x\in S$ and $n-x\geq n-k$. So $n-x\geq  \min\{y\mid  y\in S,y\geq n-k\}$. We claim that equality holds. Suppose it does not. Then there is an $y\in S$ with $n-k\leq y<n-x$. We get the element
\[
z:=x+y\in x+S=S'
\]
which satisfies $n-k+x\leq z<n$. Since $z<n$ and $z\in S'$, we get $z\leq k$. So $n-k+x\leq z\leq k$. It follows that $x\leq 2k-n$ and so
\[
\frac{1}{3}n<n-k\leq x\leq 2k-n< 2\cdot\frac{2}{3}n-n=\frac{1}{3}n.
\]
Contradiction, so $n-x=\min\{y\mid y\in S,y\geq n-k\}$ must hold. We get
\[
x=n-\min\{y\mid y\in S,y\geq n-k\}.
\]
Define $b:=x-(n-k)\geq 0$ and $a:=(2k-n)-b=k-x\geq0$. We indeed have $a,b\in\Z_{\geq0}$ and $a+b=2k-n$. Next, we show that 
\[
S=\pi((n_0,\ldots,n_a),(m_0,\ldots,m_b))
\]
for some $n_0,n_1,\ldots,n_a,m_0,m_1,\ldots,m_b\in\Z_{\geq0}$. Let $z\in S$. Then we need to show that either $z\leq a$ or $z\geq k-b$ holds. If $z\geq n-k$, then $z\geq \min\{y\mid y\in S,y\geq n-k\}=n-x=k-b$. Suppose that $z<n-k$. Then $x+z\in x+S=S'$ and $x+z<x+(n-k)\leq x+(n-x)=n$. So $x+z\leq k$ must hold. So $z\leq k-x=a$. As every element of $S$ is contained in $\{0,\ldots,a\}\cup\{k-b,\ldots,k\}$, we get
\[
S=\pi((n_0,\ldots,n_a),(m_0,\ldots,m_b))
\]
where $n_i\in\Z_{\geq0}$ equals the number of occurrences of $i$ in $S$ and $m_j\in\Z_{\geq0}$ equals the number of occurrences of $k-b+j$ in $S$. As $0,n-x=k-b\in S$, we see that $n_0,m_0\in\N$. Finally, we have
\[
\pi((n_0,\ldots,n_a),(m_0,\ldots,m_b))=S\neq S'=\pi((m_0,\ldots,m_b),(n_0,\ldots,n_a))
\]
and therefore $(n_0,\ldots,n_a)\neq (m_0,\ldots,m_b)$. So $((n_0,\ldots,n_a),(m_0,\ldots,m_b))$ has the required properties and $S$ is contained in the image of $\pi$. So $\pi$ is surjective. As $x$ is uniquely determined by $S$, we see that the $\Z$-orbit of $S$ intersects $\mathcal{R}$ exactly in $\{S,S'\}$.
\end{proof}

\begin{example}
Suppose that $n=2k$. Then the multisets in $\mathcal{R}$ whose $\Z$-orbit intersects $\mathcal{R}$ twice are those of the form $\{\!\!\{0,\ldots,0,k\ldots,k\}\!\!\}$, where the numbers of $0$'s and $k$'s are $\geq 1$, distinct and add up to $d$.
\end{example}

We can now count the number of orbits intersecting $\mathcal{R}$ in this range.

\begin{theorem}
Suppose that $\frac{2}{3}n>k\geq \frac{1}{2}n$. Then $\mathcal{R}$ intersects
\[
\binom{(d-1)+k}{d-1}-\frac{1}{2}\left(\!(2k-n+1)\binom{(d-2)+(2k-n+1)}{d-2}\!-\delta_{2\mid d}\delta_{2\mid n}\binom{(d/2-1)+(k-n/2)}{d/2-1}\!\!\right)
\]
orbits of $\mathcal{M}$.
\end{theorem}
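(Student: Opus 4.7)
The plan is to combine the count $|\mathcal{R}|=\binom{(d-1)+k}{d-1}$ established in the preceding lemma with a double-counting argument based on the bijection $\pi$ from the preceding proposition. Under the running hypothesis $\frac{2}{3}n>k\geq \frac{1}{2}n$, each $\mathbb{Z}$-orbit of $\mathcal{M}$ meets $\mathcal{R}$ in either $1$ or $2$ elements: the bound $2$ is given by that proposition, and for $S\in\mathcal{R}$ whose orbit meets $\mathcal{R}$ nontrivially elsewhere, the orbit meets $\mathcal{R}$ in exactly the pair $\{\pi(\mathbf{n},\mathbf{m}),\pi(\mathbf{m},\mathbf{n})\}$. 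Writing $t$ for the number of $S\in\mathcal{R}$ whose orbit meets $\mathcal{R}$ in $2$ elements, the number of orbits meeting $\mathcal{R}$ is therefore
\[
|\mathcal{R}|-\frac{t}{2}=\binom{(d-1)+k}{d-1}-\frac{t}{2},
\]
so the task reduces to computing $t$ as the expression in parentheses in the theorem statement.

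To count $t$, I would use the bijection $\pi$ to parametrize the orbits of size $2$ by tuples $((n_0,\ldots,n_a),(m_0,\ldots,m_b))$ with $a,b\in\mathbb{Z}_{\geq 0}$, $a+b=2k-n$, $n_0,m_0\in\mathbb{N}$, $n_1,\ldots,n_a,m_1,\ldots,m_b\in\mathbb{Z}_{\geq 0}$, $\sum n_i+\sum m_j=d$, and $(n_0,\ldots,n_a)\neq(m_0,\ldots,m_b)$. First, drop the distinctness condition and count: for each of the $2k-n+1$ choices of $(a,b)$ with $a+b=2k-n$, the substitution $n_0'=n_0-1$, $m_0'=m_0-1$ turns the conditions into counting nonnegative tuples of length $(a+b+2)=(2k-n+2)$ summing to $d-2$. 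This yields
\[
(2k-n+1)\binom{(d-2)+(2k-n+1)}{d-2}
\]
tuples in total.

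Second, I would subtract the contribution of pairs with $(n_0,\ldots,n_a)=(m_0,\ldots,m_b)$. Such equality forces $a=b$ (so $2k-n$ is even, i.e.\ $n$ is even) and the common sum $\sum n_i=d/2$ (so $d$ is even). Setting $a=k-n/2$, the number of admissible single tuples $(n_0,\ldots,n_a)$ with $n_0\in\mathbb{N}$ and total sum $d/2$ is $\binom{(d/2-1)+(k-n/2)}{d/2-1}$. Subtracting this from the previous count produces exactly
\[
t=(2k-n+1)\binom{(d-2)+(2k-n+1)}{d-2}-\delta_{2\mid d}\delta_{2\mid n}\binom{(d/2-1)+(k-n/2)}{d/2-1},
\]
and plugging this into $|\mathcal{R}|-t/2$ yields the claimed formula. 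No step is genuinely hard once the bijection $\pi$ is in hand; the only point that demands care is the bookkeeping of the parity conditions that produce the Kronecker-delta correction.
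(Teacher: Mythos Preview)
Your proof is correct and follows essentially the same approach as the paper: both compute the number of orbits as $|\mathcal{R}|-t/2$, use the bijection $\pi$ from the preceding proposition to identify $t$ with the number of admissible tuple pairs, and evaluate $t$ by first dropping the distinctness condition and then subtracting the diagonal contribution via the same parity analysis. One small wording slip: not every $\mathbb{Z}$-orbit of $\mathcal{M}$ meets $\mathcal{R}$ (many meet it in $0$ elements), so you should say that each orbit meeting $\mathcal{R}$ does so in either $1$ or $2$ elements; this does not affect the argument.
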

\begin{proof}
The proposition shows that an orbit of $\mathcal{M}$ intersects $\mathcal{R}$ in either $0$, $1$ or $2$ elements. So the number of orbits of $\mathcal{M}$ that intersect $\mathcal{R}$ equals the size of $\mathcal{R}$ minus half the size of the subset of $\mathcal{R}$ consisting of all elements $S$ such that $\#(\Z\cdot S)\cap \mathcal{R}>1$. The size of this subset of $\mathcal{R}$ equals
\[
\sum_{a=0}^{2k-n}\#\left\{((n_0,\ldots,n_a),(m_0,\ldots,m_{2k-n-a}))\,\middle|\!\begin{array}{l}n_0,m_0\in\N,n_1,\ldots,n_a,m_1,\ldots,m_{2k-n-a}\in\Z_{\geq0},\\ n_0+\ldots+n_a+m_0+\ldots+m_{2k-n-a}=d\end{array}\!\!\!\right\}
\]
minus
\[
\#\left\{((n_0,\ldots,n_a),(m_0,\ldots,m_b))\,\middle|\,\begin{array}{l}a,b\in\Z_{\geq0}, n_0,m_0\in\N,n_1,\ldots,n_a,m_1,\ldots,m_b\in\Z_{\geq0},\\ a+b=2k-n,~ n_0+\ldots+n_a+m_0+\ldots+m_b=d,\\(n_0,\ldots,n_a)=(m_0,\ldots,m_b)\end{array}\right\}.
\]
We have
\begin{eqnarray*}
&&\!\!\#\left\{\!((n_0,\ldots,n_a),(m_0,\ldots,m_{2k-n-a}))\,\middle|\!\begin{array}{l}n_0,m_0\in\N,n_1,\ldots,n_a,m_1,\ldots,m_{2k-n-a}\in\Z_{\geq0},\\ n_0+\ldots+n_a+m_0+\ldots+m_{2k-n-a}=d\end{array}\!\!\!\right\}\\
&=&\!\!\#\left\{((p_0+1,p_1,\ldots,p_a),(p_{a+1}+1,p_{a+2},\ldots,p_{2k-n+1}))\,\middle|\,\begin{array}{l}p_0,\ldots,p_{2k-n+1}\in\Z_{\geq0},\\ p_0+\ldots+p_{2k-n+1}=d-2\end{array}\right\}\\
&=&\!\!\binom{(d-2)+(2k-n+1)}{d-2}
\end{eqnarray*}
for all $a\in\{0,\ldots,2k-n\}$ and
\begin{eqnarray*}
&&\!\#\left\{\!((n_0,\ldots,n_a),(m_0,\ldots,m_b))\,\middle|\!\begin{array}{l}a,b\in\Z_{\geq0}, n_0,m_0\in\N,n_1,\ldots,n_a,m_1,\ldots,m_b\in\Z_{\geq0},\\ a+b=2k-n,~ n_0+\ldots+n_a+m_0+\ldots+m_b=d,\\(n_0,\ldots,n_a)=(m_0,\ldots,m_b)\end{array}\!\!\!\right\}\\
&=&\!\delta_{2\mid n}\cdot\#\left\{((n_0,\ldots,n_{k-n/2}),(n_0,\ldots,n_{k-n/2}))\,\middle|\,\begin{array}{l}n_0\in\N,n_1,\ldots,n_{k-n/2}\in\Z_{\geq0},\\ 2(n_0+\ldots+n_{k-n/2})=d\end{array}\right\}\\
&=&\!\delta_{2\mid d}\delta_{2\mid n}\cdot\#\left\{(n_0'+1,n_1,\ldots,n_{k-n/2})\,\middle|\,\begin{array}{l}n_0',n_1,\ldots,n_{k-n/2}\in\Z_{\geq0},\\ n_0'+\ldots+n_{k-n/2}=d/2-1\end{array}\right\}\\
&=&\!\delta_{2\mid d}\delta_{2\mid n}\binom{(d/2-1)+(k-n/2)}{d/2-1}.
\end{eqnarray*}
This yields the stated equality.
\end{proof}


\begin{thebibliography}{99}

\bibitem{bal2021genericstrength}
Edoardo Ballico, Arthur Bik, Alessandro Oneto, Emanuele Ventura,
\newblock {\em Strength and slice rank of forms are generically equal},
\newblock Israel Journal of Mathematics (2022). \DOI{10.1007/s11856-022-2397-0}

\bibitem{barg97}
Alexander Barg,
\newblock {\em The matroid of supports of a linear code},
\newblock Applicable Algebra in Engineering, Communication and Computing {\bf 8} (1997), pp. 165--172. \DOI{10.1007/s002000050060}


\bibitem{bartz2022rank}
Hannes Bartz, Lukas Holzbaur, Hedongliang Liu, Sven Puchinger, Julian Renner, Antonia Wachter-Zeh,
\newblock {\em Rank-Metric Codes and Their Applications},
\newblock Foundations and Trends{\textregistered} in Communications and Information Theory {\bf 19} (2022),  no. 3, pp. 390--546.

\bibitem{bik2019strength}
Arthur Bik, Jan Draisma, Rob H. Eggermont, 
\newblock {\em Polynomials and tensors of bounded strength}, 
\newblock Communications in Contemporary Mathematics {\bf 21} (2019), no. 7, 1850062. \DOI{10.1142/S0219199718500621}

\bibitem{bunch1974}
James R. Bunch, John E. Hopcroft, 
\newblock {\em Triangular factorization and inversion by fast matrix multiplication},
\newblock Mathematics of Computation {\bf 28} (1974), no. 125, pp. 231--236. \DOI{10.2307/2005828}

\bibitem{carlini2006reducing}
Enrico Carlini,
\newblock {\em Reducing the number of variables of a polynomial},
\newblock Algebraic geometry and geometric modeling, Springer, 2006, pp. 237--247. \DOI{10.1007/978-3-540-33275-6\_15}

\bibitem{couvreur2022}
Alain Couvreur,
\newblock {\em Improved decoding of symmetric rank metric errors},
\newblock preprint. \arxiv{2212.08713}

\bibitem{deboer1996}
Mario A. de Boer,
 \newblock {\em Codes spanned by quadratic and Hermitian forms},
\newblock IEEE Transactions on Information  Theory {\bf 42} (1996), no. 5, pp. 1600--1604. \DOI{10.1109/18.532907}

\bibitem{delsarte1978bilinearforms}
Philippe Delsarte,
\newblock {\em Bilinear forms over a finite field, with applications to coding theory},
\newblock Journal of Combinatorial Theory Series A {\bf 25} (1978), no. 3, pp. 226--241. \DOI{10.1016/0097-3165(78)90015-8}

\bibitem{duan2022faster}
Ran Duan, Hongxun  Wu, Renfei Zhou, 
\newblock {\em Faster matrix multiplication via asymmetric hashing},
\newblock preprint. \arxiv{2210.10173}

\bibitem{gabidulin1985maxrankdistance}
Èrnest M. Gabidulin,
\newblock {\em Theory of Codes with Maximum Rank Distance},
\newblock Problemy Peredachi Informatsii {\bf 21} (1985), no. 1, pp. 3--16.

\bibitem{gabidulin2004symmetric}
Èrnest M. Gabidulin, Nina I. Pilipchuk, 
\newblock  {\em Symmetric rank codes},
\newblock Problemy Peredachi Informatsii {\bf 40} (2004), no. 2, pp. 103--117. \DOI{10.1023/B:PRIT.0000043925.67309.c6}

\bibitem{ges2022strengthdecomp}
Fulvio Gesmundo, Purnata Ghosal, Christian Ikenmeyer, Vladimir Lysikov,
\newblock {\em Degree-restricted strength decompositions and algebraic branching programs},
\newblock preprint. \arxiv{2205.02149}

\bibitem{har1992alggoem}
Joe Harris,
\newblock {\it Algebraic Geometry: A First Course}.
\newblock GTM 133, Springer-Verlag, 1992.

\bibitem{iarrobino}
Anthony Iarrobino, Vassil Kanev,
\newblock {\em Power sums, Gorenstein algebras, and determinantal loci},
\newblock Lecture Notes in Mathematics. 1721, Springer Verlag, Berlin, 1999. \DOI{10.1007/BFb0093426}

\bibitem{jurrius17}
Relinde P.M.J. Jurrius, Ruud  Pellikaan,
\newblock {\em On defining generalized rank weights},
\newblock Advances in Mathematics of Communications {\bf 11} (2017), no. 1, pp. 225--235.  \DOI{10.3934/amc.2017014}

\bibitem{lam1988vandermonde}
Tsit-Yuen Lam, Andr{\'e} Leroy,
\newblock {\em Vandermonde and {W}ronskian matrices over division rings},
\newblock Journal of Algebra {\bf 119} (1988), no. 2, pp. 308--336.  \DOI{10.1016/0021-8693(88)90063-4}

\bibitem{mullin}
Ronald C. Mullin, Ivan M. Onyszchuk, Scott A. Vanstone, Richard M. Wilson,
\newblock {\em Optimal normal bases in $\mathrm{GF}(p^n)$}, 
\newblock  Discrete applied mathematics {\bf 22} (1988), no. 2, pp. 149--161. \DOI{10.1016/0166-218X(88)90090-X}

\bibitem{roth1991arraycodes}
Ron M. Roth,
\newblock {\em Maximum-rank array codes and their application to crisscross error correction},
\newblock IEEE Transactions on Information Theory {\bf 37} (1991), no. 2, pp. 328--336. \DOI{10.1109/18.75248}

\bibitem{roth1996tensorcodes}
Ron M. Roth,
\newblock {\em Tensor Codes for the Rank Metric},
\newblock IEEE Transactions on Information Theory {\bf 42} (1996), no. 6, pp. 2146--2157. \DOI{10.1109/18.556603}

\bibitem{schmidt2020quadratic}
Kai-Uwe Schmidt,
\newblock {\em Quadratic and symmetric bilinear forms over finite fields and their association schemes},
\newblock Algebraic Combinatorics {\bf 3} (2020), no. 1, pp. 161--189. \DOI{10.5802/alco.88}

\bibitem{schmidt2015symmetric}
Kai-Uwe Schmidt,
\newblock {\em Symmetric bilinear forms over finite fields with applications to coding theory},
\newblock Journal of Algebraic Combinatorics {\bf 42} (2015), no. 2, pp. 635--670. \DOI{10.1007/s10801-015-0595-0}

\bibitem{silva2008randomnetworkcoding}
Danilo Silva, Frank R. Kschischang, Ralf Koetter,
\newblock {\em A Rank-Metric Approach to Error Control in Random Network Coding},
\newblock IEEE Transactions on Information Theory {\bf 54} (2008), no. 9, pp. 3951--3967.  \DOI{10.1109/TIT.2008.928291}

\bibitem{sheekeysurvey}
John Sheekey,
\newblock {\em {MRD} codes: Constructions and Connections},
\newblock Combinatorics and Finite Fields: Difference Sets, Polynomials, Pseudorandomness and Applications,
Radon Series on Computational and Applied Mathematics 23, 2019. \DOI{10.1515/9783110642094-013}
\end{thebibliography}
\end{document}